\tikzset{
    none/.style={color=white},
    smallvertex/.style={circle,draw,fill=black, inner sep= .75pt},
    stopvertex/.style={draw,circle,thick,dotted,preaction={draw,solid,orange,-,line width=0.80pt,double=orange,double distance=2\pgflinewidth,},inner sep=3.7pt},
    endvertex/.style={draw,circle,thick,dashed,preaction={draw,solid,blue,-,line width=0.80pt,double=blue,double distance=2\pgflinewidth,},inner sep=3.7pt},
    vertex/.style={circle,draw,fill=black, inner sep= 1.5pt},
    hollow/.style={circle,draw,inner sep=1pt},
    node/.style={color=black,circle,draw,dotted,very thick, inner sep= 1.5pt, },
    arc/.style={->,> = latex', thick},
    dashedarc/.style={->,> = latex', dashed, line width=.75pt},
    dashdotarc/.style={->,> = latex', dash dot, line width=.75pt},
    bluearc/.style={preaction={draw,blue,-,double=blue,double distance=.75pt}},
    orangearc/.style={preaction={draw,orange,-,double=orange,double distance=.75pt}},
    edge/.style={-,line width=.75pt},
}
\theoremstyle{plain}
\newtheorem{theorem}{Theorem}
\newtheorem{lemma}{Lemma}
\newtheorem{observation}{Observation}
\newtheorem{proposition}{Proposition}
\newtheorem{rrule}{Reduction Rule}[section]
\theoremstyle{definition}
\newtheorem{definition}{Definition}
\theoremstyle{remark}
\crefname{rrule}{Reduction Rule}{Reduction Rules}
\crefname{construction}{Construction}{Constructions}
\crefname{claim}{Claim}{Claims}
\crefname{paragraph}{Paragraph}{Paragraphs}
\crefname{observation}{Observation}{Observations}
\crefname{theorem}{Theorem}{Theorems}
\crefname{lemma}{Lemma}{Lemmata}
\crefname{proposition}{Proposition}{Propositions}
\crefname{corollary}{Corollary}{Corollaries}
\crefname{remark}{Remark}{Remarks}
\crefname{section}{Section}{sections}
\crefname{chapter}{Chapter}{Chapters}
\crefname{figure}{Figure}{Figures}
\crefname{table}{Table}{Tables}
\crefname{definition}{Definition}{Definitions}
\crefname{algorithm}{Algorithm}{Algorithms}
\crefname{equation}{equation}{equations}
\crefname{ineq}{inequality}{inequalities}
\crefname{appendix}{Appendix}{Appendices}
\newcommand{\taskdef}[3]{
	\begin{center}
		\begin{minipage}{0.95\textwidth}
			\textsc{#1}\\[2mm]
			\setlength{\tabcolsep}{3pt}
			\begin{tabularx}{\textwidth}{@{}lX@{}}
				\normalsize \textbf{Input:} 	& \normalsize #2 \\
				\normalsize \textbf{Task:} 	& \normalsize #3
			\end{tabularx}
		\end{minipage}
	\end{center}
}
\newcommand{\bc}{betweenness centrality}
\newcommand{\bctask}{\textsc{Betweenness Centrality}}
\newcommand{\maxpath}{maximal induced path}
\newcommand{\maxpaths}{maximal induced paths}
\newcommand{\N}{\mathds{N}}
\DeclareMathOperator{\Left}{left}
\DeclareMathOperator{\Mid}{mid}
\DeclareMathOperator{\Right}{right}
\newcommand{\Vone}{\ensuremath{V^{=1}}}
\newcommand{\Vtwo}{\ensuremath{V^{=2}}}
\newcommand{\Vthree}{\ensuremath{V^{\ge3}}}
\newcommand{\centrality}{\ensuremath{C_B}}
\newcommand{\BC}{\ensuremath{BC}}
\newcommand{\Wleft}{\ensuremath{W^{\Left}}}
\newcommand{\Wright}{\ensuremath{W^{\Right}}}
\newcommand{\Wtable}{\ensuremath{W}}
\DeclareMathOperator{\Pend}{Pen}
\DeclareMathOperator{\Inc}{Inc}
\DeclareMathOperator{\mmod}{mod}
\newcommand{\xleft}{\ensuremath{x^{\Left}}}
\newcommand{\xmid}{\ensuremath{x^{\Mid}}}
\newcommand{\xright}{\ensuremath{x^{\Right}}}
\newcommand{\Xleft}{\ensuremath{X^{\Left}}}
\newcommand{\Xright}{\ensuremath{X^{\Right}}}
\newcommand{\Pmax}{\ensuremath{P^{\max}}}
\newcommand{\AllPmax}{\ensuremath{\mathcal{P}^{\max}}}
\newcommand{\qq}{\ensuremath{q}}
\newcommand{\qr}{\ensuremath{r}}
\title{\LARGE \bf An Adaptive Version of Brandes' Algorithm\\ for Betweenness Centrality\footnote{Work partially supported by DFG Project FPTinP, NI 369/16.} \footnote{An extended abstract of this work appears in the proceedings of the 29th International Symposium on Algorithms and Computation (ISAAC~'18).}}
\newcommand{\tuaddress}{Algorithmics and Computational Complexity, Faculty IV, TU Berlin, Germany}
\author{Matthias~Bentert \and Alexander~Dittmann \and Leon~Kellerhals \and André~Nichterlein \and Rolf~Niedermeier}
\date{\tuaddress\\
\texttt{\small \{matthias.bentert,leon.kellerhals,andre.nichterlein,rolf.niedermeier\}@tu-berlin.de alexander.dittmann@campus.tu-berlin.de}}
\begin{document}

\maketitle

\begin{abstract}
Betweenness centrality---measuring how many shortest paths pass through a vertex---is one of the most important network analysis concepts for assessing the relative importance of a vertex.
The well-known algorithm of Brandes [J. Math. Sociol.~'01] computes, on an $n$-vertex and $m$-edge graph, the betweenness centrality of all vertices in $O(nm)$ worst-case time. 
In later work, significant empirical speedups were achieved by preprocessing degree-one vertices and by graph partitioning based on cut vertices.
We contribute an algorithmic treatment of degree-two vertices, which turns out to be much richer in mathematical structure than the case of degree-one vertices.
Based on these three algorithmic ingredients, we provide a strengthened worst-case running time analysis for betweenness centrality algorithms.
More specifically, we prove an adaptive running time bound~$O(kn)$, where~$k < m$ is the size of a minimum feedback edge set of the input graph.
\end{abstract}

%
%
\section{Introduction}
One of the most important building blocks in network analysis is to determine a vertex's relative importance in the network.
A key concept herein is \emph{\bc} as introduced in~1977 by \citet{Fre77}; it measures centrality based on shortest paths. Intuitively, for each vertex, \bc{} counts the (relative) number of shortest paths that pass through the vertex. 
A straightforward algorithm for computing the betweenness centrality on undirected (unweighted) $n$-vertex graphs runs in $O(n^3)$~time, and improving this to~$O(n^{3-\varepsilon})$ time for any~$\varepsilon > 0$ would break the so-called APSP-conjecture~\cite{AGW15}.
In~2001, Brandes~\cite{Bra01} presented the to date theoretically fastest algorithm, improving the running time to $O(nm)$ for graphs with $m$~edges. 
As many real-world networks are sparse, this is a far-reaching improvement, having a huge impact also in practice.
We remark that Newman \cite{Newman2004,New10} presented a high-level description of an algorithm computing a variant of \bc{} which also runs in~$O(nm)$ time.

Since betweenness centrality is a measure of outstanding importance in network science, it finds numerous applications in diverse areas,
e.g. in 
social network analysis~\cite{New10,WF94} or neuroscience~\cite{KEB15,Med17}.  
Provably speeding up betweenness centrality computations is the ultimate goal of our research.
To this end, we extend previous work and provide a rigorous mathematical analysis that yields a new (parameterized) running time upper bound of the corresponding algorithm.

Our work is in line with numerous research efforts concerning the development of 
algorithms for computing betweenness centrality, including approximation algorithms~\cite{BKMM07,GSS08,RK16}, parallel and distributed algorithms~\cite{TTS09,WT13}, streaming and incremental algorithms~\cite{GMB12,NPR14}, and exact~\cite{EIBT15} and fixed-parameter algorithms~\cite{CDP19}.
Formally, we study the following problem:
\taskdef{Betweenness Centrality}
{An undirected graph~$G$.}
{Compute the \emph{betweenness centrality}~$C_B(v) :=
\sum_{s, t \in V(G)} \nicefrac{\sigma_{st}(v)}{\sigma_{st}}$ for each vertex~$v \in V(G)$.}
Herein, $\sigma_{st}$ is the number of shortest paths in~$G$ from vertex~$s$ to vertex~$t$, and $\sigma_{st}(v)$ is the number of shortest paths from~$s$ to~$t$ that additionally pass through~$v$.%
\footnote{To simplify matters, we set~$\sigma_{st}(v) = 0$ if~$v=s$ or~$v=t$. This is equivalent to the definition used by \citet{Bra01} but differs from the definition used by \citet{Newman2004}, where~$\sigma_{s t}(s) = 1$.}

Extending previous, more empirically oriented work of \citet{BGPL12,PEZDB14}, and \citet{SKSC17} (see \cref{sec:overview} for a description of their approaches), our main result is an algorithm for \textsc{Betweenness Centrality} that runs in~$O(kn)$ time, where $k$ denotes the feedback edge number of the input graph~$G$.
The linear-time computable \emph{feedback edge number} of~$G$ is the minimum number of edges one needs to delete from~$G$ in order to make it a forest.\footnote{Notably, \textsc{Betweenness Centrality} computations have also been studied when the input graph is a tree~\cite{WT13}, hinting at the practical relevance of this special case.} 
Clearly,~$k=0$ holds on trees, and~$k \le m$ holds in general.
Thus our algorithm is \emph{adaptive}, i.e., it interpolates between linear time for constant~$k$ and the running time of the best unparameterized algorithm.\footnote{We mention in passing that in recent work~\cite{MNN17} we employed the same parameter ``feedback edge number'' in terms of theoretically analyzing known data reduction rules for computing maximum-cardinality matchings. Recent empirical work with this algorithm demonstrated  significant accelerations of the state-of-the-art matching algorithm~\cite{KNNZ18}.
}
Obviously, one can compute~$k$ in linear time, using depth-first search; however~$k\approx m-n$, so we provide no asymptotic improvement over Brandes' algorithm for most graphs. 
When the input graph is very tree-like ($m=n+o(n)$), however, our new algorithm theoretically improves on Brandes' algorithm. 
Real-world networks showing the relation between PhD candidates and their supervisors~\cite{DMB11,Joh84} or the ownership relation between companies~\cite{NLGC02} typically have a feedback edge number that is smaller than the number of vertices or edges~\cite{NNUW13} by orders of magnitude.\footnote{The networks are available in the Pajek Dataset of Vladimir Batagelj and Andrej Mrvar (2006) (\url{http://vlado.fmf.uni-lj.si/pub/networks/data/}).}
Moreover, \mbox{\citet{BGPL12}}, building on Brandes' algorithm and basically shrinking the input graph by deleting degree-one vertices in a preprocessing step, report on significant speedups in comparison with Brandes' basic algorithm in empirical tests with real-world social networks. 
For roughly half of their networks, $m-n$ is smaller than~$n$ by at least one order of magnitude.

Our algorithmic contribution is to complement the works of \citet{BGPL12}, \citet{PEZDB14}, and \citet{SKSC17} by, roughly speaking, additionally dealing with degree-two vertices. 
These vertices are much harder to cope with and to analyze since, other than degree-one vertices, they may lie on shortest paths between two vertices.
From a practical point of view, one may expect a significant speedup if one can take care of degree-two vertices more quickly.
This is due to the nature of many real-world social networks having a power-law degree distribution \cite{barabasi99}; thus a large fraction of the vertices are of degree one or two.
Recently, \mbox{\citet{VBC18}} used a heuristic approach to process degree-two vertices for improving the performance of their \textsc{Betweenness Centrality} algorithms on several real-world networks.

Our work is purely theoretical in spirit.
Our most profound contribution is to analyze the worst-case running time of the proposed betweenness centrality algorithm based on degree-one-vertex processing~\cite{BGPL12}, usage of cut vertices~\cite{PEZDB14,SKSC17}, and our degree-two-vertex processing.
To the best of our knowledge, this provides the first proven worst-case ``improvement'' over Brandes' upper bound in a relevant special case.  

\paragraph{Notation.}
We use mostly standard graph notation.
Given a graph~$G$,~$V(G)$ and~$E(G)$ denote the vertex respectively edge set of~$G$ with~$n=|V(G)|$ and~$m = |E(G)|$.
We denote the vertices of degree one, two, and at least three by~$\Vone(G)$, $\Vtwo(G)$, and~$\Vthree(G)$, respectively.
A \emph{cut vertex} or \emph{articulation vertex} is a vertex whose removal disconnects the graph.
A connected component of a graph is \emph{biconnected} if it does not contain any cut vertices, and hence, no vertices of degree one.
A path~$P = v_0 \dots v_\qq$ is a graph with~$V(P) = \{v_0, \ldots, v_\qq\}$ and~$E(P) = \{\{v_i,v_{i+1}\} \mid 0 \le i < \qq \}$.
The length of the path~$P$ is $|E(P)|$.
Adding the edge~$\{v_\qq,v_0\}$ to~$P$ gives a cycle~$C = v_0 \dots v_\qq v_0$.
The distance~$d_G(s,t)$ between vertices~$s,t \in V(G)$ is the length of the shortest path between~$s$ and~$t$ in~$G$.
The number of shortest $s$-$t$--paths is denoted by~$\sigma_{st}$.
The number of shortest $s$-$t$--paths containing some vertex~$v$ is denoted by~$\sigma_{st}(v)$.
We set~$\sigma_{st}(v) = 0$ if~$s=v$ or~$t=v$ (or both).

We set $[j,k] := \{j, j+1,\ldots, k\}$ and denote for a set~$X$ by~$\binom{X}{i}$ the size-$i$ subsets of~$X$.

\paragraph{Paper outline.}
The presentation of our algorithm is split into two parts:
In \Cref{sec:overview} we present the strategy of our algorithm.
\Cref{sec:maxpaths} deals with the main technical challenge of our algorithm, namely how to deal with consecutive degree-two vertices.
Some proofs in the latter part are deferred to the appendix.

Finally, we conclude in \cref{sec:conclusion}.

\section{Algorithm Overview}
\label{sec:overview}
In this section, we review our algorithmic strategy to compute the betweenness centrality of
each vertex.
Before doing so, since we build on the works of \mbox{\citet{Bra01,BGPL12}}, \citet{PEZDB14}, and \citet{SKSC17}, we first give the high-level ideas behind their algorithmic approaches. 
Then, we describe the ideas behind our extension.
We assume throughout our paper that the input graph is connected. 
Otherwise, we can process the connected components one after another. 

\paragraph{Existing algorithmic approaches.} 
\citet{Bra01} developed an $O(nm)$-time algorithm which essentially runs modified breadth-first searches (BFS) from each vertex of the graph. 
In each of these modified BFS starting in a vertex~$s$, Brandes' algorithm computes the ``effect'' that~$s$ has on the betweenness centrality values of all other vertices. 
More formally, the modified BFS starting at vertex~$s$ computes for every~$v \in V(G)$ the value
\[\sum_{t \in V(G)} \frac{\sigma_{st}(v)}{\sigma_{st}}.\]

Reducing the number of performed modified BFS in Brandes' algorithm is one way to speed up Brandes' algorithm.
To this end, a popular approach is to remove in a preprocessing step all degree-one vertices from the graph~\cite{BGPL12,PEZDB14,SKSC17}.
By repeatedly removing degree-one vertices, whole ``pending trees'' (subgraphs that are trees and are connected to the rest of the graph by a single edge) can be deleted.
Considering a degree-one vertex~$v$, observe that in \emph{each} shortest path~$P$ starting at~$v$, the second vertex in~$P$ is the single neighbor~$u$ of~$v$.
Hence, after deleting~$v$, one needs to store the information that~$u$ had a degree-one neighbor.
To this end, one uses for each vertex~$w$ a counter called~$\Pend[w]$ (for \emph{pending}) that stores the number of vertices in the subtree pending on~$w$ that were deleted before.
In contrast to e.\,g.\ \citet{BGPL12}, we initialize for each vertex~$w \in V$ the value~$\Pend[w]$ with one instead of zero (so we count~$w$ as well).
This simplifies most of our formulas.
See \Cref{fig:pen} (Parts (1.)~to (3.))~for an example of the~$\Pend[\cdot]$-values of the vertices at different points in time.
\begin{figure}
	\centering
	\begin{tikzpicture}[scale=1.25]
		\node at (1.5,1.55) {(1.)};
		\draw[rounded corners,draw=black!50,thick] (-0.35,1.35) rectangle (3.35,-0.35);
	
		\node[circle, draw,inner sep=2pt] at(0,1) (a) {1};
		\node[circle, draw,inner sep=2pt] at(0,0) (b) {1};

		\node[circle, draw,inner sep=2pt] at(1,1) (c) {1} edge (a);
		\node[circle, draw,inner sep=2pt] at(1,0) (d) {1} edge (b) edge (a);

		\node[circle, draw,inner sep=2pt] at(2,1) (e) {1} edge (d);
		\node[circle, draw,inner sep=2pt] at(2,0) (f) {1} edge (d) edge (e);

		\node[circle, draw,inner sep=2pt] at(3,1) (g) {1} edge (f);
		\node[circle, draw,inner sep=2pt] at(3,0) (h) {1} edge (f) edge (g);

 		\draw [->,decorate,decoration=snake,thick,draw=black!60] (3.5,0.5) -- (4.5,0.5);
 		\draw [->,decorate,decoration=snake,thick,draw=black!60] (4.5,0.00) -- (3.5,-1.00);
 		\draw [->,decorate,decoration=snake,thick,draw=black!60] (3.5,-1.5) -- (4.5,-1.5);
		
		\begin{scope}[xshift = 5cm]
		\draw[rounded corners,draw=black!50,thick] (-0.35,1.35) rectangle (3.35,-0.35);
			\node at (1.5,1.55) {(2.)};
			\node[circle, draw,inner sep=2pt] at(0,1) (a) {1};
			
			\node[circle, draw,inner sep=2pt] at(1,1) (c) {1} edge (a);
			\node[circle, draw,inner sep=2pt] at(1,0) (d) {2} edge (a);

			\node[circle, draw,inner sep=2pt] at(2,1) (e) {1} edge (d);
			\node[circle, draw,inner sep=2pt] at(2,0) (f) {1} edge (d) edge (e);

			\node[circle, draw,inner sep=2pt] at(3,1) (g) {1} edge (f);
			\node[circle, draw,inner sep=2pt] at(3,0) (h) {1} edge (f) edge (g);
		\end{scope}

		\begin{scope}[yshift = -2.2cm]
		\draw[rounded corners,draw=black!50,thick] (-0.35,1.35) rectangle (3.35,-0.35);
			\node at (1.5,1.55) {(3.)};
			\node[circle, draw,inner sep=2pt] at(1,0) (d) {4};

			\node[circle, draw,inner sep=2pt] at(2,1) (e) {1} edge (d);
			\node[circle, draw,inner sep=2pt] at(2,0) (f) {1} edge (d) edge (e);

			\node[circle, draw,inner sep=2pt] at(3,1) (g) {1} edge (f);
			\node[circle, draw,inner sep=2pt] at(3,0) (h) {1} edge (f) edge (g);
		\end{scope}

		\begin{scope}[xshift = 5cm, yshift = -2.2cm]
		\draw[rounded corners,draw=black!50,thick] (-0.35,1.35) rectangle (3.35,-0.35);
			\node at (1.5,1.55) {(4.)};
			\node[circle, draw,inner sep=2pt] at(0,0) (d) {4};

			\node[circle, draw,inner sep=2pt] at(1,1) (e) {1} edge (d);
			\node[circle, draw,inner sep=2pt] at(1,0) (f) {3} edge (d) edge (e);

			\node[circle, draw,inner sep=2pt] at(2,0) (ff) {6};

			\node[circle, draw,inner sep=2pt] at(3,1) (g) {1} edge (ff);
			\node[circle, draw,inner sep=2pt] at(3,0) (h) {1} edge (ff) edge (g);
		\end{scope}
	\end{tikzpicture}
	\caption{An initial graph where the $\Pend[\cdot]$-value of each vertex is~$1$ (top left) and the same graph after deleting one (top right) or both (bottom left) pending trees using Reduction Rule~\ref{rr:deg-one-vertices}. The labels are the respective~$\Pend[\cdot]$-values. Subfigure (4.)~shows the graph of (3.)~after applying Lemma~\ref{lem:split} to the only cut vertex of the graph.}
	\label{fig:pen}
\end{figure}
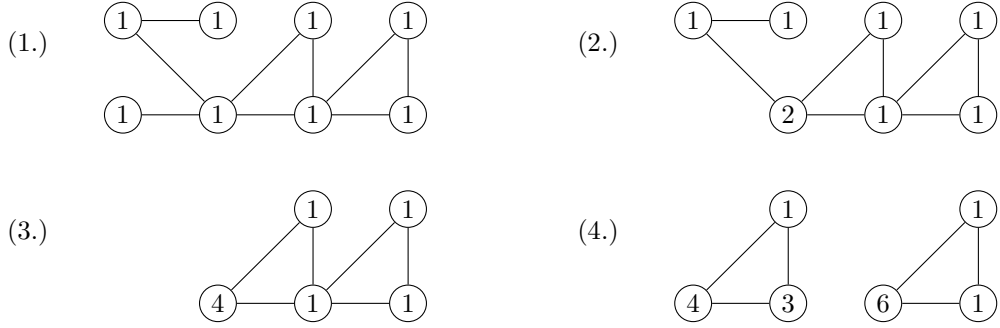
We obtain the following (weighted) problem variant. 
\taskdef{Weighted Betweenness Centrality}
{An undirected graph~$G$ and vertex weights~$\Pend\colon V(G) \rightarrow \N$.}
{Compute for each vertex~$v \in V(G)$ the weighted betweenness centrality 
\begin{align}
	C_B(v) := \sum_{s,t \in V(G)} \gamma(s, t, v), \label{eq:betw-cent-with-pend}
\end{align}
where~$\gamma(s, t, v) \coloneqq \Pend[s] \cdot \Pend[t] \cdot \sigma_{st}(v)/\sigma_{st}$.
}

The effect of a degree-one vertex to the betweenness centrality value of its neighbor is captured in the next data reduction rule. 
\begin{rrule}[\cite{BGPL12,PEZDB14,SKSC17}] 
	\label[rrule]{rr:deg-one-vertices}
	Let~$G$ be a graph, let $s \in V(G)$ be a degree-one vertex, and let~$v \in V(G)$ be the neighbor of~$s$. 
	Then increase~$\Pend[v]$ by~$\Pend[s]$, increase the betweenness centrality of~$v$ by~$\Pend[s]  \cdot \sum_{t\in V(G) \setminus \{s,v\}} \Pend[t]$, and remove~$s$ from the graph.
\end{rrule}
By \cref{rr:deg-one-vertices} the influence of a degree-one vertex to the betweenness centrality of its neighbor can be computed in constant time.
This is since
\[\sum_{t \in V(G) \setminus \{s, v\}} \Pend[t] = \Big( \sum_{t \in V(G)} \Pend[t] \Big) - \Pend[s] - \Pend[v],\]
and~$\sum_{t \in V(G)}\Pend[t]$ can be precomputed in linear time.

A second approach to speed up Brandes' algorithm is to split the input graph~$G$ into smaller connected components and process them separately~\cite{PEZDB14,SKSC17}. 
This approach is a generalization of the ideas behind removing degree-one vertices and works with cut vertices.
The basic observation for this approach is as follows:
Consider a cut vertex~$v$ such that removing~$v$ breaks the graph into two connected components~$C_1$ and~$C_2$ (the idea generalizes to more components).
Obviously, every shortest path~$P$ in~$G$ that starts in~$C_1$ and ends in~$C_2$ has to pass through~$v$.
For the betweenness centrality values of the vertices inside~$C_1$ (inside~$C_2$) it is not important where exactly~$P$ ends (starts).
Hence, for computing the betweenness centrality values of the vertices in~$C_1$, it is sufficient to know which vertices in~$C_1$ are adjacent to~$v$ and how many vertices are contained in~$C_2$.
Thus, in a preprocessing step one can just add to~$C_1$ the cut vertex~$v$ with~$\Pend[v]$ being increased by the sum of~$\Pend[\cdot]$-values of the vertices in~$C_2$ (see \cref{fig:pen} (bottom)).
Formally, this is done as follows.

\begin{lemma}[\cite{PEZDB14,SKSC17}]
\label[lemma]{lem:split}
Let~$G$ be a connected graph, let~$v$ be a cut vertex such that removing~$v$ yields~$\ell \ge 2$ connected components~$C_1, \dots, C_\ell$, and let~$\xi \coloneqq \Pend[v]$.
Then remove~$v$, add a vertex~$v_i$ to every component~$C_i$, make it adjacent to all vertices in the respective component that were adjacent to~$v$, and set
\[ \Pend[v_i] = \xi + \sum_{j \in [1,\ell]\setminus \{i\}} \sum_{w \in V(C_j) \setminus \{v_j\}} \Pend[w]. \]
For a vertex~$v$ in component~$C_i$ denote by~$\centrality^{C_i}(v)$ the \bc{} of~$v$ within the component~$C_i$.
Computing the \bc{} of each connected component independently, increasing the \bc{} of~$v$ by
\[ \sum_{i=1}^{\ell} \big( C_B^{C_i}(v_i) + (\Pend[v_i] - \xi) \cdot \sum_{s \in V(C_i) \setminus \{v_i\}} \Pend[s] \big), \]
and ignoring all new vertices~$v_i$ is the same as computing the \bc{} in~$G$, that is,
\[
	C_B^G(u) = \begin{cases}
		C_B^{C_i}(u), & \hspace{-1.25cm}\text{if } u \in V(C_i) \setminus \{v_i\}; \\
		\sum_{i = 1}^{\ell} \big ( C_B^{C_i}(v_i) + (\Pend[v_i] - \xi) \cdot \sum_{s\in V(C_i) \setminus\{v_i\}} \Pend[s] \big ), & \hspace{0.40cm}\text{if } u = v.
	\end{cases}
\]
\end{lemma}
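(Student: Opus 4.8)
My plan is to expand every betweenness value as a sum $\sum_{s,t}\gamma(s,t,\cdot)$ over ordered pairs of vertices (recall $\gamma(s,t,w)=\Pend[s]\cdot\Pend[t]\cdot\sigma_{st}(w)/\sigma_{st}$), to sort those pairs according to which of the parts $\{v\},S_1,\dots,S_\ell$ contain $s$ and $t$ — writing $S_i\coloneqq V(C_i)\setminus\{v_i\}$ for the original vertices of the $i$-th component and $P_i\coloneqq\sum_{s\in S_i}\Pend[s]$ — to discard the pairs whose contribution is zero, and to match what remains against the analogous expansions inside the components. Before the bookkeeping I would isolate two structural facts. First, since $v$ is a cut vertex of the connected graph $G$, any $s$--$t$ path with $s\in S_a$, $t\in S_b$ and $a\ne b$ passes through $v$ exactly once; hence $d_G(s,t)=d_G(s,v)+d_G(v,t)$, $\sigma_{st}=\sigma_{sv}\cdot\sigma_{vt}$, and $\sigma_{st}(u)=\sigma_{sv}(u)\cdot\sigma_{vt}$ for every $u\in S_a$ (the $v$--$t$ part of a shortest path never meets $S_a$). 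In particular $\sigma_{st}(v)=\sigma_{st}$, and for $u\in S_a$ the ratio $\sigma_{st}(u)/\sigma_{st}$ equals $\sigma_{sv}(u)/\sigma_{sv}$. Second, the subgraph of $G$ induced by $S_i\cup\{v\}$ is isomorphic to the modified component $C_i$ via $v\mapsto v_i$ (fixing $S_i$), so all shortest-path counts among vertices of $S_i\cup\{v\}$ equal the corresponding counts in $C_i$; crucially, these counts do not depend on the weights $\Pend[\cdot]$. I would also use that $\gamma$ is symmetric in its first two arguments and vanishes when no shortest path between them runs through the third.

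For a vertex $u\in S_i$ I would argue as follows. In $C_B^G(u)=\sum_{s,t}\gamma(s,t,u)$ every pair with neither $s$ nor $t$ in $S_i$ contributes $0$, since every shortest path joining two vertices of $\{v\}\cup\bigcup_{j\ne i}S_j$ avoids $S_i$; the surviving pairs are exactly those with $s\in S_i$ (or, symmetrically, $t\in S_i$). The pairs with both endpoints in $S_i$ reproduce, via the isomorphism, the term $\sum_{s,t\in S_i}\Pend[s]\Pend[t]\,\sigma^{C_i}_{st}(u)/\sigma^{C_i}_{st}$. For a pair with $s\in S_i$ and $t\in\{v\}\cup\bigcup_{j\ne i}S_j$, the first structural fact gives $\sigma_{st}(u)/\sigma_{st}=\sigma_{sv}(u)/\sigma_{sv}=\sigma^{C_i}_{sv_i}(u)/\sigma^{C_i}_{sv_i}$; summing all such pairs together with their transposes gives $2\sum_{s\in S_i}\Pend[s]\bigl(\sigma^{C_i}_{sv_i}(u)/\sigma^{C_i}_{sv_i}\bigr)\bigl(\Pend[v]+\sum_{j\ne i}P_j\bigr)$, and the last factor is precisely the newly assigned value $\Pend[v_i]$. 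Expanding $C_B^{C_i}(u)$ and separating the pairs that involve $v_i$ yields exactly these same two groups of terms, so $C_B^G(u)=C_B^{C_i}(u)$, as claimed.

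For the cut vertex $u=v$, only pairs with $s,t\ne v$ contribute, say $s\in S_a$ and $t\in S_b$. When $a=b$, the isomorphism gives $\sigma^G_{st}(v)/\sigma^G_{st}=\sigma^{C_a}_{st}(v_a)/\sigma^{C_a}_{st}$; since in $C_B^{C_a}(v_a)$ only pairs inside $S_a$ contribute, summing these pairs over all $a$ produces exactly $\sum_{i=1}^{\ell}C_B^{C_i}(v_i)$. When $a\ne b$, every shortest $s$--$t$ path runs through $v$, so $\sigma^G_{st}(v)=\sigma^G_{st}$ and the pair contributes $\Pend[s]\Pend[t]$; summing gives $\sum_{a\ne b}P_aP_b$. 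It then remains to check the identity $\sum_{a\ne b}P_aP_b=\sum_{i}(\Pend[v_i]-\xi)\,P_i$, which is immediate once one observes $\Pend[v_i]-\xi=\Pend[v_i]-\Pend[v]=\sum_{j\ne i}P_j$. Combining the two contributions gives the stated formula for $C_B^G(v)$.

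I expect the real work — and the step most likely to hide a slip — to be the bookkeeping for $u\in S_i$: confirming that \emph{every} pair without an endpoint in $S_i$ genuinely contributes nothing, that each transpose pair is counted exactly once (hence the factor $2$), and, above all, that the aggregate weight of the ``far side'' of the cut collapses to the redefined $\Pend[v_i]$ rather than to the old $\xi=\Pend[v]$. All of this rests on the single observation that a shortest path between two distinct components crosses $v$ exactly once, which is precisely what licenses the multiplicative decompositions $\sigma_{st}=\sigma_{sv}\cdot\sigma_{vt}$ and $\sigma_{st}(u)=\sigma_{sv}(u)\cdot\sigma_{vt}$ and makes the argument insensitive to whether $\ell=2$ or $\ell\ge3$.
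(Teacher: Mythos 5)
The paper itself does not prove \cref{lem:split}; it imports the statement from \citet{PEZDB14} and \citet{SKSC17}, so there is no in-paper proof to compare against. Your argument is correct and complete as a self-contained proof. The two load-bearing facts are exactly the ones you isolate: a shortest path between two vertices in distinct components crosses $v$ exactly once (giving $\sigma_{st}=\sigma_{sv}\sigma_{vt}$ and, for $u\in S_a$, $\sigma_{st}(u)/\sigma_{st}=\sigma_{sv}(u)/\sigma_{sv}$), and a shortest path between two vertices of the same $S_i$ never leaves $S_i\cup\{v\}$, so the isomorphism $G[S_i\cup\{v\}]\cong C_i$ preserves all path counts. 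With those in hand, the bookkeeping you flag as the delicate part checks out: for $u\in S_i$ the aggregate weight of the far side collapses to $\xi+\sum_{j\ne i}P_j=\Pend[v_i]$ and matches the $v_i$-terms in the expansion of $C_B^{C_i}(u)$ over ordered pairs; for $u=v$ the convention $\gamma(s,t,v)=0$ when $s=v$ or $t=v$ kills the pairs involving $v$ in each $C_B^{C_i}(v_i)$, and the cross-component contribution $\sum_{a\ne b}P_aP_b=\sum_i(\Pend[v_i]-\xi)P_i$ is the stated correction term. The argument is indeed insensitive to whether $\ell=2$ or $\ell\ge 3$.
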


Applying the above procedure as preprocessing on all cut vertices and degree-one vertices leaves us with biconnected components that we can each solve independently.
We first look at the special case that the connected component (obtained from \cref{lem:split}) is a cycle, then deal with the general, more involved case.

\subsection{Dealing with Cycles}
We now show how to solve \textsc{Weighted Betweenness Centrality} on cycles with a linear-time dynamic programming algorithm.
Note that the vertices in the cycle can have different \bc{} values as they can have different~$\Pend[\cdot]$-values.

\begin{proposition}
	\label[proposition]{prop:pendingcycles}
	Let~$C = x_0 \dots x_\qq x_0$ be a cycle.
	Then, one can compute the weighted \bc{} of the vertices in~$C$ in~$O(\qq)$ time and space.
\end{proposition}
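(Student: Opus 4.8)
The plan is to derive an explicit closed form for the weighted \bc{}~$\centrality(x_m)$ of each cycle vertex, and then to show that consecutive values along the cycle differ by a quantity that can be maintained in amortized constant time after an $O(\qq)$-time precomputation.

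First I would set up notation: write $n := \qq + 1$, $P_i := \Pend[x_i]$, and take all indices modulo~$n$. On the cycle the forward distance from~$x_a$ to~$x_b$ is $f(a,b) := (b-a) \bmod n$, the distance is $d(x_a,x_b) = \min\{f(a,b),\, n - f(a,b)\}$, and $\sigma_{x_a x_b} = 2$ exactly when $n$ is even and $f(a,b) = n/2$ (the two antipodal arcs), and $\sigma_{x_a x_b} = 1$ otherwise. A vertex $x_m \notin \{x_a,x_b\}$ lies on a shortest $x_a$-$x_b$ path iff it is an interior vertex of the strictly shorter arc, or, in the tie case $d(x_a,x_b)=n/2$, of one of the two equal arcs (it is then interior to exactly one of them). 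Set $w_\ell := 1$ for $2 \le \ell < n/2$ and $w_{n/2} := 1/2$ when $n$ is even. Parametrising a shortest arc that has $x_m$ as an interior vertex by the forward distance $i \ge 1$ from its start to~$x_m$ and the forward distance $j \ge 1$ from~$x_m$ to its end --- so the arc runs from $x_{m-i}$ to $x_{m+j}$ and has length $i+j \le \lfloor n/2 \rfloor$ --- and using that summing over all ordered pairs $(s,t)$ amounts to twice the sum over those pairs whose forward $s$-to-$t$ arc carries $x_m$, one obtains
\[
	\centrality(x_m) \;=\; 2 \sum_{\substack{i,\,j \ge 1 \\ i + j \le \lfloor n/2 \rfloor}} w_{i+j}\, P_{m-i}\, P_{m+j}.
\]

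Next I would compare this expression for $x_{m+1}$ and for $x_m$ through the index shift $(i,j) \mapsto (i-1, j+1)$, which leaves $i+j$ (hence $w_{i+j}$) and the summation range invariant. All terms cancel except those with $i=0$ on one side and $j=1$ on the other, and a short calculation yields
\[
	\centrality(x_{m+1}) - \centrality(x_m) \;=\; 2\, P_m \cdot A_m \;-\; 2\, P_{m+1} \cdot B_{m+1},
\]
where $A_m := \sum_{j=2}^{\lfloor n/2 \rfloor} w_j\, P_{m+j}$ and $B_m := \sum_{i=2}^{\lfloor n/2 \rfloor} w_i\, P_{m-i}$. Up to a single half-weighted boundary term present only when $n$ is even, $A_m$ and $B_m$ are sliding-window sums of $(P_i)$ of fixed width along the cycle, so $A_{m+1}$ (resp.\ $B_{m+1}$) is obtained from $A_m$ (resp.\ $B_m$) by dropping one entry and adding one; hence all $n$ values of $A_\cdot$ and of $B_\cdot$ can be computed together in $O(n)$ time and space. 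The algorithm then is: (i)~compute $n$, $\lfloor n/2\rfloor$ and the prefix sums of $(P_i)$ around the cycle, and evaluate the single base value $\centrality(x_0)$ from the closed form by writing the double sum as $\sum_i P_{-i}\bigl(\sum_j w_{i+j} P_j\bigr)$ and resolving the inner sums via the prefix sums, which takes $O(n)$; (ii)~compute $A_m$ and $B_m$ for all $m$ by the sliding-window updates; (iii)~sweep $m = 0,1,\dots,\qq-1$, filling in $\centrality(x_{m+1})$ from $\centrality(x_m)$ via the recurrence in $O(1)$ per step. Each stage runs in $O(\qq)$ time and space.

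The one real obstacle is keeping every stage linear: a naive evaluation of the closed form costs $\Theta(n^2)$ time overall, so the base case must be reduced to prefix sums and the per-vertex increments to sliding windows --- the derivation of the clean telescoping recurrence above is exactly what makes this possible. Beyond that, the only delicate points are the case of even~$n$, where antipodal pairs have two shortest paths and so contribute with weight $\tfrac12$, and the cyclic index arithmetic; everything else is routine bookkeeping. (As sanity checks one can verify the formula on a triangle, where it returns $0$, and on small even and odd unit-weight cycles, where it returns the known symmetric values.)
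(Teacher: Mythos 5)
Your proposal is correct and follows essentially the same strategy as the paper's proof: compute the base value at one cycle vertex in $O(\qq)$ time using prefix/arc sums (with weight $\tfrac12$ for antipodal pairs on even cycles), then propagate around the cycle via a constant-time recurrence whose correction terms are weighted window sums of the $\Pend[\cdot]$-values. The only cosmetic difference is that you derive the recurrence algebraically from an explicit closed form via an index shift, whereas the paper derives it combinatorially by comparing which shortest paths pass through $x_k$ but not $x_{k+1}$ and vice versa; the resulting algorithm is the same.
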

\begin{proof}
	We first introduce some notation needed for the proof, then we will show how to
	compute~$\BC[v]$ for~$v \in V(C)$ efficiently. Finally, we prove the running time.

	By~$[x_i, x_j]$, $0 \le i, j \le \qq$ we denote the set of vertices~$\{x_i, x_{i+1 \mmod (\qq+1)},\allowbreak x_{i+2 \mmod (\qq+1)}, \dots, x_j\}$.
	For a \maxpath{}~$\Pmax = x_0 \ldots x_\qq$ we define
	\begin{align*}
		\Wleft[x_i] &:= \sum_{k=0}^{i} \Pend[x_i],\; \text{and} \\
		W[x_i, x_j] &:= \begin{cases}
			\Pend[x_i], & \text{if } i=j;\\
			\Wleft[x_j] - \Wleft[x_i] + \Pend[x_i], & \text{if } i < j;\\
			\Wleft[x_\qq] - \Wleft[x_i] + \Wleft[x_j] + \Pend[x_i], & \text{if } i > j.\\
		\end{cases}
	\end{align*}
	The value~$W[x_i, x_j]$ is the sum of all~$\Pend$-values from~$x_i$
	to~$x_j$, clockwise.
	Further, we denote by~$\varphi(i) = (\frac{\qq+1}{2} + i) \mmod (\qq+1)$ the index that is
	``opposite'' to~$i$ on the cycle.
	Note that if~$\varphi(i) \in \mathds{N}$, then~$x_{\varphi(i)}$ is the unique vertex
	in~$C$ to which there are two shortest paths from~$x_i$, one
	visiting~$x_{i+1 \mmod (\qq+1)}$ and one visiting~$x_{i-1 \mmod (\qq+1)}$.
	Otherwise, if~$\varphi(i) \not \in \mathds{N}$, then there is only one shortest path
	from~$x_k$ to any~$t \in V(C)$.
	For the sake of convenience in the next parts of the proof,
	if~$\varphi(i) \not \in \mathds{N}$, we say that~$\Pend[x_{\varphi(x_i)}] = 0$.
	Further, by~${\varphi^{\Left}(i) = \lceil\varphi(i)\rceil - 1 \mmod (\qq+1)}$ we denote the
	index of the vertex to the left of index~$\varphi(i)$
	and by~${\varphi^{\Right}(i) = \lfloor\varphi(i)\rfloor + 1 \mmod (\qq+1)}$ the index of the
	vertex to the right of index~$\varphi(i)$.

	We now describe the algorithm.
	For every vertex~$x_k$,~$0 \le k \le \qq$, we need to compute
	\[
		\BC[x_k] \coloneqq \sum_{s, t \in V(C)} \gamma(s, t, x_k)
		= \sum_{i=0}^\qq \sum_{t \in V(C)} \gamma(x_i, t, x_k).
	\]
	We determine these values with a dynamic program.
	We first compute~$\BC[x_0]$ and use it as the base case.
	Afterwards we show how to compute~$\BC[x_{k+1}]$ for~$0 \le k < \qq$ given the value of~$\BC[x_k]$.

	Towards computing~$\BC[x_0]$, observe that~$\gamma(x_i, t, x_0) = 0$
	if~$x_i = x_0$ or~$t = x_0$.
	Also, for every shortest path starting in~$x_{\varphi(0)}$ and ending
	in some~$x_j$,~$1 \le j \le \qq$, it holds
	that~$d_C(x_{\varphi(0)}, x_j) < d_C(x_{\varphi(0)}, x_0)$.
	Thus there is no shortest path starting in~$x_{\varphi(0)}$ that visits~$x_0$.
	Hence, we do not need to consider the cases~$i = 0$ or~$i = \varphi(0)$ and we have
	\begin{align*}
		\BC[x_0]
		&= \sum_{\substack{i=0\\0 \ne i\ne \varphi(0)}}^{\qq} \sum_{t \in V(C)} \gamma(x_i, t, x_0)\\
		&= \sum_{i=1}^{\varphi^{\Left}(0)} \sum_{t \in V(C)} \gamma(x_i, t, x_0) + \sum_{i = \varphi^{\Right}(0)}^{\qq} \sum_{t \in V(C)} \gamma(x_i, t, x_0)\\
		&= \sum_{i=1}^{\varphi^{\Left}(0)} \sum_{t \in V(C)} \Pend[x_i] \cdot \Pend[t] \cdot\frac{\sigma_{x_i t}(x_0)}{\sigma_{x_i t}}\\
		&\phantom{=}~+ \sum_{i = \varphi^{\Right}(0)}^{\qq} \sum_{t \in V(C)} \Pend[x_i] \cdot \Pend[t] \cdot \frac{\sigma_{x_i t}(x_0)}{\sigma_{x_i t}}
	\end{align*}
	By definition of~$\varphi(i)$ we have
	that~$d_C(x_i, x_{\varphi^{\Left}(i)}) =d_C(x_i, x_{\varphi^{\Right}(i)}) < \frac{\qq+1}{2}$.
	Hence, there is a unique shortest path from~$x_i$ to~$x_{\varphi^{\Left}(i)}$
	visiting~$x_{i+1 \mmod (\qq+1)}$,
	and there is a unique shortest path from~$x_i$ to~$x_{\varphi^{\Right}(i)}$
	visiting~$x_{i-1 \mmod (\qq+1)}$.
	This gives us that in the equation above, in the first sum, all shortest paths
	from~$x_i$ to~$t \in [x_{\varphi^{\Right}(i)}, x_\qq]$ visit~$x_0$, and in the second sum,
	all shortest paths from~$x_i$ to~$t \in [x_1, x_{\varphi^{\Left}(i)}]$ visit~$x_0$.
	If~$\varphi(x_i) \in \mathds{N}$, then there are two shortest paths from~$x_i$
	to~$x_{\varphi(i)}$, and one of them visits~$x_0$.
	With this we can rewrite the sum as follows:
	\begin{align*}
		\BC[x_0]
		&= \sum_{i=1}^{\varphi^{\Left}(0)}  \Big(
			\Pend[x_i] \cdot \Pend[x_{\varphi(i)}] \cdot \frac{1}{2}
			+ \sum_{t \in [x_{\varphi^{\Right}(i)}, x_a]} \Pend[x_i] \cdot \Pend[t]
			\Big)\\
		&+ \sum_{i=\varphi^{\Right}(0)}^{\qq} \Big(
			\Pend[x_i] \cdot \Pend[x_{\varphi(i)}] \cdot \frac{1}{2}
			+ \sum_{t \in [x_1, x_{\varphi^{\Left}(i)}]} \Pend[x_i] \cdot \Pend[t]
			\Big)\\
		&= \sum_{i=1}^{\varphi^{\Left}(0)} \Pend[x_i] \Big(
			\frac{1}{2} \Pend[x_{\varphi(i)}] + W[x_{\varphi^{\Right}(i)}, x_\qq] \Big)\\
		&+ \sum_{i=\varphi^{\Right}(0)}^{\qq} \Pend[x_i] \Big(
			\frac{1}{2} \Pend[x_{\varphi(i)}] + W[x_1, x_{\varphi^{\Left}(i)}] \Big).
	\end{align*}
	Since the~$\Pend[\cdot]$-values are given, the values~$\Wleft[\cdot]$ can be precomputed
	in~$O(\qq)$ time, and thus, when computing~$\BC[x_0]$,
	the values~$W[\cdot,\cdot]$ can be obtained in constant time.
	The values~$\varphi(i)$, $\varphi^{\Left}(i)$, and~$\varphi^{\Right}(i)$ can be computed in constant time as well, and thus it takes~$O(\qq)$ time to compute~$\BC[x_0]$.

	Assume now that we have computed~$\BC[x_k]$.
	Then we claim that for~$0 \le k < \qq$, $\BC[x_{k+1}]$ can be computed as follows:
	\begin{equation}
		\label{eq:bc-cpen-dp}
		\begin{aligned}
		\BC[x_{k+1}] = \BC[x_k]
		&- \Pend[x_{k+1}] \big(\Pend[x_{\varphi(k+1)}]\\
		&+ 2 W[x_{\varphi^{\Right}(k+1)},x_{k-1 \mmod (\qq+1)}] \big)\\
		&+ \Pend[x_k] \big(\Pend[x_{\varphi(k)}] +
			2 W[x_{k+2 \mmod (\qq+1)}, x_{\varphi^{\Left}(k)}]).
		\end{aligned}
	\end{equation}
	To this end, observe that all shortest paths in~$C$ that contain~$x_k$ as an inner vertex also contain~$x_{k+1}$ as an inner vertex, except for those paths that start or end in~$x_{k+1}$.
	Likewise, all shortest paths in~$C$ that contain~$x_{k+1}$ as an inner vertex also contain~$x_k$ as an inner vertex, except for those paths that start or end in~$x_k$.
	Hence, to compute~$\BC[x_{k+1}]$ from~$\BC[x_k]$, we need to subtract the~$\gamma$-values
	for shortest paths starting in~$x_{k+1}$ and visiting~$x_k$,
	and we need to add the~$\gamma$-values for shortest paths starting in~$x_k$ and
	visiting~$x_{k+1}$.
	Since by~\Cref{obs:gamma} each path contributes the same value to the \bc{} as its reverse, it holds
	\begin{align}
		\label{eq:bc-cpen-dp-gamma}
		\BC[x_{k+1}] &= \BC[x_k]
		+ 2 \cdot \sum_{t \in V(C)} \gamma(x_k, t, x_{k+1}) - \gamma(x_{k+1}, t, x_k).
	\end{align}
	With a similar argumentation as above for the computation of~$\BC[x_0]$ one can show
	that shortest paths starting in~$x_k$ and visiting~$x_{k+1}$ must end
	in~$t \in [x_{k+2}, x_{\varphi^{\Left}(k)}]$, or in~$x_{\varphi(k+1)}$.
	Shortest paths starting in~$x_{k+1}$ and visiting~$x_k$ must end
	in~$t \in [x_{\varphi^{\Right}(k+1)},x_{k-1}]$, or in~$x_{\varphi(k)}$.
	Just as above, for both~$i = k$ and~$i = k+1$, some fixed vertex~$x_j$ is
	visited by only half of the shortest paths from~$x_i$ to~$x_{\varphi(i)}$.
	With the arguments above we can rewrite~\Cref{eq:bc-cpen-dp-gamma} to obtain
	the claimed~\Cref{eq:bc-cpen-dp}.

	After precomputing the values~$\Wleft[\cdot]$ and~$\BC[x_0]$ in~$O(\qq)$ time and space, we can compute each of the values~$\BC[x_{k+1}]$ for~$0 \le k < \qq$ in constant time.
	Hence, the procedure requires~$O(\qq)$ time.
\end{proof}

\subsection{Dealing with Other Biconnected Graphs}
Recall that, after our preprocessing on all cut vertices and degree-one vertices, we obtain a graph consisting of biconnected components, each of which can be solved independently.
Also, in the previous subsection, we showed how to solve \textsc{Weighted Betweenness Centrality} on cycles.
It remains to show how to solve the problem on biconnected graphs that are not cycles (but may contain one).

\paragraph{Remark.}
Henceforth, in this paper, we assume that we are given a vertex-weighted biconnected graph that is not a cycle.

\paragraph{Outline of the algorithmic approach.}
Starting with a vertex-weighted biconnected graph, our algorithm focuses on degree-two vertices.
In contrast to degree-one vertices, degree-two vertices can lie on shortest paths between two other vertices.
Moreover, different degree-two vertices on the same shortest path can have different \bc{} values (see \cref{fig:bc-hard} for an example). 
This makes degree-two vertices harder to handle: 
Removing a degree-two vertex~$v$ in a similar way as done with degree-one vertices (see \Cref{rr:deg-one-vertices}) potentially affects many other shortest paths that neither start nor end in~$v$.
Thus, we treat degree-two vertices differently:
Instead of removing vertices one-by-one, we process multiple degree-two vertices at once and exploit that consecutive degree-two vertices share many shortest paths they lie on.
To this end we introduce the notion of \maxpaths{}.

\begin{figure}
	\centering
	\begin{tikzpicture}[scale=1.9]
		\usetikzlibrary{shapes}
		\tikzstyle{pnode}=[fill,circle,scale=3/8]
		\tikzstyle{d1}=[red,very thick]
		\tikzstyle{d2}=[blue,dotted,very thick]
		\tikzstyle{vertex}=[draw,circle,inner sep=1pt,minimum size=7pt,fill=white]
		\tikzstyle{edge}=[thick,color=black!50]
		\def\centerarc[#1](#2)(#3:#4:#5) { \draw[#1] ($(#2)+({#5*cos(#3)},{#5*sin(#3)})$) arc (#3:#4:#5); };

		\foreach[count=\i] \name / \x / \y / \z in {{x_0}/180/14.5/180, {x_1}/216/13.0/180, {x_2}/252/18.0/270, {x_3}/288/23.0/270, {x_4}/324/28.0/0, {x_5}/0/44.5/0} {
			\pgfmathtruncatemacro{\j}{\i - 1}
			\node (x\j) at (\x:1) [vertex,label=\z:{\small$\y$}] {$\name$};
		}

		\node (a1) at (144:0.7) [vertex] {$a_1$};
		\node (a2) at (144:1.3) [vertex] {$a_2$};
		\node (b) at (108:1) [vertex] {};
		\node (q) at (90:1) [vertex] {};
		\node (c) at (72:1) [vertex] {};
		\node (d1) at (36:0.6) [vertex] {$d_1$};
		\node (dots) at (36:1) [rotate=36,align=center] {$\dots$};
		\node (dj) at (36:1.4) [vertex] {$d_7$};

		\draw[edge] (x0) -- (x1);
		\draw[edge] (x1) -- (x2);
		\draw[edge] (x2) -- (x3);
		\draw[edge] (x3) -- (x4);
		\draw[edge] (x4) -- (x5);

		\draw[edge] (x0) -- (a1);
		\draw[edge] (x0) -- (a2);
		\draw[edge] (a1) -- (b);
		\draw[edge] (a2) -- (b);
		\draw[edge] (b) -- (q);
		\draw[edge] (q) -- (c);
		\draw[edge] (c) -- (d1);
		\draw[edge] (c) -- (dj);
		\draw[edge] (x5) -- (d1);
		\draw[edge] (x5) -- (dj);

		\tikzstyle{path} = [color=black,opacity=.15,line cap=round, line join=round, line width=8pt]
		\begin{pgfonlayer}{background}
			\draw[path] (x3.center) -- (x2.center) -- (x1.center) -- (x0.center) -- (a1.center);
			\draw[path] (x3.center) -- (x2.center) -- (x1.center) -- (x0.center) -- (a2.center);
		\end{pgfonlayer}
		
		\begin{scope}[xshift=3.25cm]
			\foreach[count=\i] \name / \x / \y / \z in {{x_0}/180/14.5/180, {x_1}/216/13.0/180, {x_2}/252/18.0/270, {x_3}/288/23.0/270, {x_4}/324/28.0/0, {x_5}/0/44.5/0} {
				\pgfmathtruncatemacro{\j}{\i - 1}
				\node (x\j) at (\x:1) [vertex,label=\z:{\small$\y$}] {$\name$};
			}

			\node (a1) at (144:0.7) [vertex] {$a_1$};
			\node (a2) at (144:1.3) [vertex] {$a_2$};
			\node (b) at (108:1) [vertex] {};
			\node (q) at (90:1) [vertex] {};
			\node (c) at (72:1) [vertex] {};
			\node (d1) at (36:0.6) [vertex] {$d_1$};
			\node (dots) at (36:1) [rotate=36,align=center] {$\dots$};
			\node (dj) at (36:1.4) [vertex] {$d_7$};

			\draw[edge] (x0) -- (x1);
			\draw[edge] (x1) -- (x2);
			\draw[edge] (x2) -- (x3);
			\draw[edge] (x3) -- (x4);
			\draw[edge] (x4) -- (x5);

			\draw[edge] (x0) -- (a1);
			\draw[edge] (x0) -- (a2);
			\draw[edge] (a1) -- (b);
			\draw[edge] (a2) -- (b);
			\draw[edge] (b) -- (q);
			\draw[edge] (q) -- (c);
			\draw[edge] (c) -- (d1);
			\draw[edge] (c) -- (dj);
			\draw[edge] (x5) -- (d1);
			\draw[edge] (x5) -- (dj);

			\begin{pgfonlayer}{background}
				\draw[path] (x2.center) -- (x3.center) -- (x4.center) -- (x5.center) -- (d1.center);
				\draw[path] (x2.center) -- (x3.center) -- (x4.center) -- (x5.center) -- (dots);
				\draw[path] (x2.center) -- (x3.center) -- (x4.center) -- (x5.center) -- (dj.center);
			\end{pgfonlayer}
		\end{scope}

	\end{tikzpicture}
	\caption{
		\label{fig:bc-hard}
		An example graph containing a \maxpath{} $x_0 \dots x_5$ (see \cref{def:maxpath}).
		The labels give the \bc{} values of the vertices.
		Marked are shortest paths from~$a_i$ to~$x_3$ (left-hand side) and from~$d_i$ to~$x_2$ (right-hand side).
		The former affect the \bc{} value of~$x_2$, but not of~$x_3$; the latter affect the \bc{} value of~$x_3$, but not of~$x_2$.
		Hence, most, but not all, of the paths traversing through~$x_2$ also affect the \bc{} value of~$x_3$.
		Note that this difference cannot be decided locally within the \maxpath{}, but can have an arbitrary effect on the difference arbitrarily far away in the graph.
		In this example graph, one could add more and more ``$d$-vertices'' (the figure shows~$d_1$--$d_7$) to further increase the difference in the \bc{} values of~$x_2$ and~$x_3$.
	}
\end{figure}

\begin{definition}
	\label[definition]{def:maxpath}
	\emph{
	Let~$G$ be a graph. 
	A path~$P = v_0\ldots v_\ell$ is a \emph{\maxpath{}} in~$G$ if $\ell \ge 2$ and the inner vertices~$v_1, \ldots, v_{\ell-1}$ all have degree two in~$G$, but the endpoints~$v_0$ and~$v_\ell$ do not, that is, $\deg_G(v_1) = \ldots = \deg_G(v_{\ell-1}) = 2$, $\deg_G(v_0) \neq 2$, and $\deg_G(v_\ell) \neq 2$.
Moreover, $\AllPmax$ is the set of all \maxpaths{} in~$G$.}
\end{definition}

%
In a nutshell, our algorithm treats each biconnected component of the input graph in the following three stages (compare with \Cref{alg:overview}):

\begin{algorithm}[p]
	\caption{Algorithm for computing  betweenness centrality of a biconnected graph that is not a cycle.} 
	\label{alg:overview}
	\KwIn{An undirected biconnected graph~$G$ with vertex weights~$\Pend \colon V(G) \rightarrow \N$.}
	\KwOut{The betweenness centrality values of all vertices.}
 	\vspace{1em}
	
	\lForEach{$v \in V(G)$}{$\BC[v] \gets 0$\;
	\tcp*[f]{\it $\BC$ will contain the betweenness centrality values}}
		$\AllPmax \gets{}$all \maxpaths{} of~$G$ \label{line:compute-max-paths}\;
		\tcp*{\it computable in $O(n+m)$ time, see \Cref{lem:compute-maximal-pendcycles}}
		
		\ForEach(\label{line:init-tables}){$s \in \Vthree(G)$}
		{
			\tcp*{\it some precomputations taking~$O(kn)$ time, see \Cref{lem:init-tables}}
			compute~$d_G(s,t)$ and~$\sigma_{st}$ for each~$t \in V(G) \setminus \{s\}$\;
			$\Inc[s,t] \gets 2 \cdot \Pend[s]\cdot\Pend[t] / \sigma_{st}$ for each~$t \in \Vtwo(G)$ \label{line:init-inc-table-vtwo} \;
			$\Inc[s,t] \gets \Pend[s]\cdot\Pend[t] / \sigma_{st}$ for each~$t \in \Vthree(G) \setminus \{s\}$ \label{line:init-inc-table}
		}
		\ForEach( \label{line:wleftright} ){$x_0x_1\ldots x_\qq = \Pmax \in \AllPmax$}
		{
			\tcp*{\it initialize~$\Wleft$ and~$\Wright$ in~$O(n)$ time}
			$\Wleft[x_0] \gets \Pend[x_0]$; $\Wright[x_\qq] \gets \Pend[x_\qq]$\;
			\lFor{$i = 1$ \KwTo $\qq$}
			{
				$\Wleft[x_i] \gets \Wleft[x_{i-1}] + \Pend[x_i]$
			}
			\lFor{$i = \qq-1$ \KwTo $0$}
			{
				$\Wright[x_i] \gets \Wright[x_{i+1}] + \Pend[x_i]$ \label{line:compute-wright}
			}
		
		\ForEach( \label{line:maxpaths-loop}){$x_0x_1\ldots x_\qq = \Pmax_1 \in \AllPmax$}
		{
			\tcp*[f]{\it case~$s \in \Vtwo(\Pmax_1)$, see \Cref{sec:maxpaths}}\\
			\tcc{\it deal with the case~$t \in \Vtwo(\Pmax_2)$, see \Cref{sec:two-maxpaths}}
			\ForEach{$y_0y_1\ldots y_\qr = \Pmax_2 \in \AllPmax \setminus \{\Pmax_1\}$}
			{
				\tcc{\it update~$\BC$ for the case~$v \in V(\Pmax_1) \cup V(\Pmax_2)$}
				\lForEach{$v \in V(\Pmax_1) \cup V(\Pmax_2)$}
				{
					$\BC[v] \gets \BC[v] + \gamma(s,t,v)$
				}
				\tcc{\it now deal with the case~$v \notin V(\Pmax_1) \cup V(\Pmax_2)$}
				update~$\Inc[x_0,y_0]$, $\Inc[x_\qq,y_0]$, $\Inc[x_0,y_\qr]$, and~$\Inc[x_\qq,y_\qr]$ \label{line:inc-update-two-paths}
			}
			
			\tcc{\it deal with the case that~$t \in \Vtwo(\Pmax_1)$, see Section~\ref{sec:one-maxpath}}
			\lForEach{$v \in V(\Pmax_1)$}
			{
				$\BC[v] \gets \BC[v] + \gamma(s,t,v)$
			}
			update~$\Inc[x_0,x_\qq]$ \tcp*{\it this deals with the case~$v \notin V(\Pmax_1)$} \label{line:inc-update-one-path}
		}
		
		\ForEach{$s \in \Vthree(G)$}
		{
			\tcp*{\it perform modified BFS from~$s$, see \Cref{sec:postproc-summary}}\label{line:modified-bfs-loop}
			\lForEach{$t,v \in V(G)$}
			{
				$\BC[v] \gets \BC[v] + \Inc[s,t] \cdot \sigma_{st}(v)$ \label{line:modified-bfs}
			}
		}
	}
	\KwRet{$\BC$}.
\end{algorithm}

\begin{enumerate}
	\item For all pairs~$s, t$ of vertices of degree at least three, precompute~$d_G(s, t)$ and~$\sigma_{s t}$, and initialize~$\Inc[s, t]$ (see \Crefrange{line:init-tables}{line:init-inc-table}).
	\item Compute \bc{} values for paths starting and ending  in \maxpaths{} and store them in~$\Inc[\cdot,\cdot]$, considering two cases (see \Crefrange{line:maxpaths-loop}{line:inc-update-one-path}):
		\begin{compactitem}[--]
	\item both endpoints of the path are in the same \maxpath{};
			\item the endpoints are in two different \maxpaths{}.
		\end{compactitem}
	\item In a postprocessing step, compute the \bc{} for all remaining paths (at least one endpoint is of degree at least three) and incorporate the values stored in~$\Inc[\cdot,\cdot]$ (see \Crefrange{line:modified-bfs-loop}{line:modified-bfs}).
\end{enumerate}

Note that in a biconnected graph that is not a cycle, every degree-two vertex is an inner vertex of a \maxpath{}.
If some degree-two vertex~$v$ was not contained in a \maxpath{}, then~$v$ would be contained in a cycle that contains exactly one vertex~$u$ that is of degree at least three.
But then~$u$ is a cut vertex and the graph would not be biconnected; a contradiction.
The remaining part of the algorithm deals with \maxpaths{}.
Note that if the (biconnected) graph is not a cycle, then all degree-two vertices are contained in maximal induced paths:

Using standard arguments, we can show that the number of \maxpaths{} is upper-bounded by the minimum of the feedback edge number~$k$ of the input graph and the number~$n$ of vertices.
Moreover, one can easily compute all \maxpaths{} in linear-time (see \Cref{line:compute-max-paths} of \cref{alg:overview}).

\begin{lemma}
	\label[lemma]{lem:fen}
	Let~$G$ be a graph with feedback edge number~$k$ that does not contain degree-one vertices.
	Then~$G$ contains at most~$\min \{n, 2k\}$ vertices of degree at least three and at most~$\min \{n, 3k\}$ \maxpaths{}.
\end{lemma}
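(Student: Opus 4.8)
The plan is to combine the handshake identity with the structural fact that every \maxpath{} has length at least two, together with the elementary bound $m - n \le k$ on the feedback edge number (for a graph with $c \ge 1$ connected components one has $k = m - n + c$). First I would note that isolated vertices may be deleted, as this changes neither $k$, nor $\Vthree(G)$, nor $\AllPmax$; since $G$ has no degree-one vertices, I may then assume $G$ has minimum degree two, so that $|\Vtwo(G)| = n - |\Vthree(G)|$.

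For the bound on $|\Vthree(G)|$, write $a := |\Vthree(G)|$. Summing degrees gives $2m = \sum_{v \in V(G)} \deg_G(v) \ge 3a + 2(n - a) = 2n + a$, hence $a \le 2(m - n) \le 2k$; since trivially $a \le n$, this yields $|\Vthree(G)| \le \min\{n, 2k\}$.

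For the bound on $|\AllPmax|$ I would argue two things separately. The bound $|\AllPmax| \le n$: every \maxpath{} has length at least two, hence contains at least one inner vertex, which by definition lies in $\Vtwo(G)$; and a degree-two vertex lies in at most one \maxpath{}, since a \maxpath{} through it must traverse both of its incident edges, which already determines the path. Thus $|\AllPmax| \le |\Vtwo(G)| \le n$. The bound $|\AllPmax| \le 3k$: to each \maxpath{} $P = v_0 v_1 \dots v_\ell$ assign the two pairs $(v_0, \{v_0, v_1\})$ and $(v_\ell, \{v_{\ell-1}, v_\ell\})$; because $\ell \ge 2$ the vertices $v_1$ and $v_{\ell-1}$ are inner and hence of degree two, and because $v_0 \neq v_\ell$ the two pairs are distinct with first coordinate in $\Vthree(G)$. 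The key step is that the resulting map from \maxpaths{} to (vertex, incident-edge) pairs is injective: if $(v, \{v, w\})$ is hit by two \maxpaths{}, then $\deg_G(w) = 2$ forces both of them to contain $w$ as an inner vertex, hence both equal the unique \maxpath{} through $w$. Therefore $2\,|\AllPmax| \le \sum_{v \in \Vthree(G)} \deg_G(v) = 2m - 2|\Vtwo(G)| = 2(m - n) + 2a \le 6(m - n) \le 6k$, so $|\AllPmax| \le 3k$. Combining, $|\AllPmax| \le \min\{n, 3k\}$.

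The main obstacle is making the injectivity argument fully rigorous: one has to be careful to invoke $\ell \ge 2$ so that the edge at an endpoint of a \maxpath{} always leads to a degree-two vertex, and to justify that a degree-two vertex determines its \maxpath{} uniquely (so that two \maxpaths{} sharing such an endpoint edge must coincide). Everything else is routine degree counting.
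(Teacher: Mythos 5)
Your proof is correct and follows essentially the same route as the paper's: a handshake-lemma count combined with $m-n\le k$ to bound $|\Vthree(G)|$, and then charging each \maxpath{} to two of the $\sum_{v\in\Vthree(G)}\deg(v)\le 6k$ edge-slots at its degree-$\ge 3$ endpoints. You are in fact somewhat more careful than the paper, which leaves the injectivity of that charging and the $\le n$ bound on $|\AllPmax|$ implicit.
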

\begin{proof}
	Recall that our graph is biconnected.
	Thus~$\sum_{v \in V(G)} \deg(v) = 2m = 2(n-1+k)$, and 
	\begin{align*}
		2 (n-1) + k &= 2(|\Vtwo(G)| + |\Vthree(G)| - 1 + k)\\
			    &= \sum_{v \in V(G)} \deg(v)
			    = \sum_{v \in \Vtwo(G)} \deg(v) + \sum_{v \in \Vthree(G)} \deg(v)\\
			    &\ge 2 \cdot |\Vtwo(G)| + 3 \cdot |\Vthree(G)|.
	\end{align*}
	Solving for~$|\Vthree(G)|$ gives us that there are at most~$2k - 2$ vertices of degree at least three.
	Then~$\sum_{v \in \Vthree(G)} \deg(v) = 3 |\Vthree(G)| \le 6k - 6$.
	It follows that there are at most~$3k$ paths whose endpoints are in~$\Vthree(G)$, hence~$|\AllPmax| \le 3k -3$.
	Clearly, for both the number of vertices of degree at least three and number of \maxpaths{}, $n$ is also a valid upper bound.
%
\end{proof}

\begin{lemma}
	\label[lemma]{lem:compute-maximal-pendcycles}
	The set~$\AllPmax$ of all \maxpaths{} of a graph with~$n$ vertices and~$m$ edges can be computed in~$O(n+m)$ time.
\end{lemma}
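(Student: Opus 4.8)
The plan is to reduce the task to tracing out maximal chains of degree-two vertices. First I would compute $\deg_G(v)$ for every vertex (in $O(n+m)$ time by one scan over the adjacency lists), record which vertices have degree exactly two, and allocate a Boolean array $\mathrm{used}[\cdot]$ over $V(G)$ initialised to \emph{false}. Then, for every vertex $v$ with $\deg_G(v)\neq 2$ and every neighbour $u$ of $v$ with $\deg_G(u)=2$ and $\mathrm{used}[u]=\mathrm{false}$, I would launch a walk $w_0=v,\ w_1=u,\ w_2,\ldots$, where for $i\ge 1$ the vertex $w_{i+1}$ is the unique neighbour of $w_i$ distinct from $w_{i-1}$; the walk continues while the current vertex has degree two and stops at the first $w_\ell$ with $\deg_G(w_\ell)\neq 2$. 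I then set $\mathrm{used}[w_i]\gets\mathrm{true}$ for $1\le i\le\ell-1$, and if $w_0\neq w_\ell$ I add the \maxpath{} $w_0 w_1\ldots w_\ell$ to $\AllPmax$; if $w_0=w_\ell$ the chain closes into a cycle through the single vertex $v$, which is not a path, and is discarded.

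For correctness I would argue three points. The walk is well-defined and terminates: every $w_i$ with $i\ge 1$ has degree exactly two, so its successor exists and is unique; an interior vertex cannot recur (once visited, both of its neighbours are already pinned down as its predecessor and successor), and the component cannot be an all-degree-two cycle since Step 2 is only entered from a vertex of degree $\neq 2$, so the walk visits at most $|\Vtwo(G)|+1$ vertices. Every path that is output is indeed a \maxpath{}: its interior vertices have degree two by the stopping rule, its endpoints have degree $\neq 2$, and $\ell\ge 2$ because $w_1$ exists and has degree two. Finally, every \maxpath{} is output exactly once: each of its interior vertices has degree two and lies on no other \maxpath{}, so such a vertex becomes $\mathrm{used}$ precisely when this path is traced from one of its two endpoints; the $\mathrm{used}$ test therefore suppresses the (otherwise identical) trace that would start from the other endpoint, while still guaranteeing at least one trace.

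For the running time, computing the degrees and initialising the arrays costs $O(n+m)$. Over all vertices $v$ with $\deg_G(v)\neq 2$, the outer loop inspects each incident edge once and spends $O(1)$ per edge to decide whether to launch a walk, contributing $O(m)$ in total. Each launched walk does $O(1)$ work per traversed edge, i.e.\ $O(\ell)$ work where $\ell-1$ is its number of interior vertices; since the interior vertices of distinct (and of discarded) walks are pairwise disjoint and each genuine \maxpath{} is launched at most once, the interior vertices summed over all walks is $O(n)$, and the number of walks is at most $|\Vtwo(G)|\le n$, so the walks together cost $O(n)$. Hence the procedure runs in $O(n+m)$ time and uses $O(n)$ extra space. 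The only genuinely delicate point is exactly this charging argument: without the $\mathrm{used}$ array one could retrace the same long degree-two chain once per incident high-degree endpoint, inflating the bound; and the $w_0\neq w_\ell$ test must not be forgotten, as it is what excludes a degree-two chain that closes into a cycle through a single high-degree anchor.
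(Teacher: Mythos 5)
Your proposal is correct and follows essentially the same approach as the paper: trace out each maximal chain of degree-two vertices once, mark its interior vertices so the chain is not retraced, and discard the degenerate case where the chain closes into a cycle. The only cosmetic difference is that you anchor each walk at a vertex of degree~$\neq 2$ whereas the paper launches the traversal from an unvisited degree-two vertex and expands in both directions; the correctness and $O(n+m)$ charging argument are the same.
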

\begin{proof}
	Iterate through all vertices~$v \in V(G)$.
	If~$v \in \Vtwo(G)$, then iteratively traverse the two edges incident to~$v$ to discover adjacent degree-two vertices until finding endpoints~$v_\ell, v_r \in \Vthree(G)$.
	If~$v_\ell = v_r$, then we found a cycle which can be ignored.
	Otherwise, we have a \maxpath{}~$\Pmax = v_\ell \dots v_r$, which we add to~$\AllPmax$.
	
	Note that every degree-two vertex is contained either in exactly one \maxpath{} or in exactly one cycle.
	Hence, we do not need to reconsider any degree-two vertex found in the traversal above and we can find all \maxpaths{} in~$O(m+n)$ time. 
\end{proof}

Our algorithm processes the \maxpaths{} one by one (see \Crefrange{line:init-tables}{line:modified-bfs}).
This part of the algorithm requires pre- and postprocessing (see \Crefrange{line:init-tables}{line:compute-wright} and \Crefrange{line:modified-bfs-loop}{line:modified-bfs} respectively).
In the preprocessing, we initialize tables that are frequently used in the main part (of \Cref{sec:maxpaths}).
The postprocessing computes the final betweenness centrality values of each vertex as this computation is too time-consuming to be executed for each \maxpath{}.
When explaining our basic ideas, we will first present the postprocessing as this explains why certain values will be computed during the algorithm.

Recall that we want to compute~$\sum_{s,t \in V(G)} \gamma(s,t,v)$ for each~$v\in V(G)$ (see \Cref{eq:betw-cent-with-pend}).
Using the following observations, we split~\Cref{eq:betw-cent-with-pend} into different parts.
\begin{observation}
	\label[observation]{obs:gamma}
	For~$s, t, v \in V(G)$ it holds that~$\gamma(s, t, v) = \gamma(t, s, v)$.
\end{observation}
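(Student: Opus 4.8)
The plan is to unfold the definition $\gamma(s,t,v) = \Pend[s]\cdot\Pend[t]\cdot\sigma_{st}(v)/\sigma_{st}$ and observe that every factor is symmetric in $s$ and $t$. The product $\Pend[s]\cdot\Pend[t]$ is symmetric by commutativity of multiplication, so the only thing that needs an argument is the equality $\sigma_{st}(v)/\sigma_{st} = \sigma_{ts}(v)/\sigma_{ts}$, and for this it suffices to show $\sigma_{st} = \sigma_{ts}$ and $\sigma_{st}(v) = \sigma_{ts}(v)$ separately.

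First I would establish a bijection between shortest $s$-$t$--paths and shortest $t$-$s$--paths. Since $G$ is undirected, reversing the vertex sequence of a path $P = w_0 w_1 \dots w_\ell$ yields a path $\overline{P} = w_\ell \dots w_1 w_0$ with the same edge set and the same length; reversal is an involution, hence a bijection on the set of all $s$-$t$--paths that restricts to a bijection between shortest $s$-$t$--paths and shortest $t$-$s$--paths (the distance $d_G(s,t) = d_G(t,s)$ being the common length). This gives $\sigma_{st} = \sigma_{ts}$. Moreover, $P$ contains $v$ as an internal vertex if and only if $\overline{P}$ does, so the same bijection restricts to the shortest paths through $v$, giving $\sigma_{st}(v) = \sigma_{ts}(v)$. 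In the degenerate case $v \in \{s,t\}$ both $\sigma_{st}(v)$ and $\sigma_{ts}(v)$ equal $0$ by the convention fixed in the notation paragraph, so the equality still holds trivially (and $\sigma_{st} = \sigma_{ts} \ge 1$ since $G$ is connected, so the quotients are well defined).

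Combining these, $\gamma(s,t,v) = \Pend[s]\cdot\Pend[t]\cdot\sigma_{st}(v)/\sigma_{st} = \Pend[t]\cdot\Pend[s]\cdot\sigma_{ts}(v)/\sigma_{ts} = \gamma(t,s,v)$, which is the claim. There is essentially no obstacle here: the only subtlety worth stating explicitly is that the path-reversal map is a well-defined bijection in the undirected setting and respects both the shortest-path property and membership of an internal vertex, which is exactly what makes all three factors symmetric.
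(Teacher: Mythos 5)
Your proof is correct; the paper states this observation without proof, and your argument—commutativity of $\Pend[s]\cdot\Pend[t]$ plus the path-reversal bijection giving $\sigma_{st}=\sigma_{ts}$ and $\sigma_{st}(v)=\sigma_{ts}(v)$ in an undirected graph—is exactly the reasoning the paper implicitly relies on.
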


\begin{observation}
	\label[observation]{obs:split-bc-sum}
	Let~$G$ be a biconnected graph with at least one vertex of degree at least three.
	Let~$v \in V(G)$. Then, 
	\begin{align*}
		\sum_{s,t \in V(G)}
		\gamma(s, t, v)
		&= \sum_{s \in \Vthree(G),\, t \in V(G)}
		\gamma(s, t, v)
		+ \sum_{s \in \Vtwo(G),\, t \in \Vthree(G)}
		\gamma(t, s, v)\\&
		+ \sum_{\substack{
			s \in \Vtwo(\Pmax_1),\, t \in \Vtwo(\Pmax_2)\\
			\Pmax_1 \ne \Pmax_2 \in \AllPmax}}
		\gamma(s, t, v)
		+ \sum_{\substack{
			s, t \in \Vtwo(\Pmax)\\
			\Pmax \in \AllPmax}}
		\gamma(s, t, v).
	\end{align*}
\end{observation}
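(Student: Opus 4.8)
The plan is to show that the four sums on the right-hand side are obtained simply by partitioning the index set $V(G) \times V(G)$ of the left-hand side according to whether $s$ and $t$ lie in $\Vthree(G)$ or $\Vtwo(G)$ (and, in the latter case, in which \maxpath{} they lie), together with a single application of \cref{obs:gamma}.

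First I would record the structural fact that makes the partition work. Since $G$ is biconnected it has no degree-one vertices, so $V(G) = \Vthree(G) \cup \Vtwo(G)$. Since $G$ additionally contains a vertex of degree at least three, $G$ is not a cycle, and hence -- as argued above, right after \cref{prop:pendingcycles} -- every degree-two vertex lies on exactly one \maxpath{}. Writing $\Vtwo(\Pmax)$ for the set of inner (degree-two) vertices of a \maxpath{} $\Pmax$, this means that $\Vtwo(G) = \bigcup_{\Pmax \in \AllPmax} \Vtwo(\Pmax)$ and that the sets $\Vtwo(\Pmax)$, for $\Pmax \in \AllPmax$, are pairwise disjoint.

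Next I would split the sum. Partition $V(G) \times V(G)$ into four classes: (i) $s \in \Vthree(G)$ with $t$ arbitrary; (ii) $s \in \Vtwo(G)$ and $t \in \Vthree(G)$; (iii) $s \in \Vtwo(\Pmax_1)$ and $t \in \Vtwo(\Pmax_2)$ with $\Pmax_1 \ne \Pmax_2$; and (iv) $s, t \in \Vtwo(\Pmax)$ for a common \maxpath{} $\Pmax$. These classes are pairwise disjoint and cover $V(G) \times V(G)$: if $s \notin \Vthree(G)$ then $s \in \Vtwo(G)$, and then either $t \in \Vthree(G)$, giving class (ii), or $t \in \Vtwo(G)$, in which case the structural fact assigns $s$ and $t$ each a unique \maxpath{}, and these are either distinct (class (iii)) or equal (class (iv)). Summing $\gamma(s,t,v)$ over class (i) gives the first term; over class (iii), using the disjoint union $\Vtwo(G) = \bigcup_{\Pmax} \Vtwo(\Pmax)$, it gives the third term; over class (iv) it gives the fourth term. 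For class (ii), summing $\gamma(s,t,v)$ over all $s \in \Vtwo(G)$, $t \in \Vthree(G)$ and then rewriting each summand via $\gamma(s,t,v) = \gamma(t,s,v)$ from \cref{obs:gamma} yields exactly the second term. Adding the four contributions proves the identity.

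There is essentially no obstacle here; the only two points that need care are that the union over \maxpaths{} is disjoint (so that no pair $(s,t)$ with $s,t \in \Vtwo(G)$ is double-counted or omitted), which is precisely the biconnectedness and ``not a cycle'' argument recalled above, and that degenerate summands -- such as $s = t$, or pairs for which no shortest path through $v$ exists -- need no special treatment, since $\gamma(s,t,v)$ is defined for all triples and simply evaluates to~$0$ in those cases.
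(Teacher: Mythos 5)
Your proof is correct and follows essentially the same route as the paper's: partition the pairs $(s,t)$ according to whether each endpoint has degree at least three or degree two, use that a biconnected non-cycle graph has no degree-one vertices and that every degree-two vertex lies in exactly one \maxpath{}, and invoke \cref{obs:gamma} to match the second summand. You merely spell out the disjointness of the partition and the symmetry step in more detail than the paper does.
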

\begin{proof}
	The first two sums cover all pairs of vertices in which at least one of the two vertices
	is of degree at least three.
	The other two sums cover all pairs of vertices which both have degree two.
	As all vertices of degree two must be part of some \maxpath{}, we have~$\Vtwo(G) = \Vtwo(\bigcup \AllPmax)$.
	Two vertices of degree two can thus either be in two different \maxpaths{} (third sum)
	or in the same \maxpath{} (fourth sum).
\end{proof}

In the remaining graph, by \Cref{lem:fen}, there are at most~$O(\min\{k,n\})$ vertices of degree at least three and at most~$O(k)$ \maxpaths.
This implies that we can afford to run the modified BFS (similar to Brandes' algorithm) from each vertex~$s \in \Vthree(G)$ in~$O(\min\{k,n\} \cdot (n+k)) = O(kn)$ time.
This computes the first summand and, by \Cref{obs:gamma}, also the second summand in \Cref{obs:split-bc-sum}.
However, we cannot afford to run such a BFS from every vertex. 
Thus, we need to compute the third and fourth summand differently.

To this end, note that~$\sigma_{st}(v)$ is the only term in~$\gamma(s,t,v)$ that depends on~$v$.
Our goal is to precompute~$\gamma(s,t,v) / \sigma_{st}(v) = \Pend[s] \cdot \Pend[t] / \sigma_{st}$ for as many vertices as possible.
Hence, we store precomputed values in a table~$\Inc[\cdot,\cdot]$ (see \Cref{line:init-inc-table,line:inc-update-two-paths,line:inc-update-one-path}).
Then, we plug this factor into the next lemma which provides our postprocessing. 
\begin{lemma}
	\label[lemma]{lem:brandesalg}
	Let~$s$ be a vertex and let $f \colon V(G)^2 \to \mathds{N}$ be a function such that for each~$u,v\in V(G)$ the value~$f(u,v)$ can be computed in~$O(\tau)$ time.
	Then, for all~$v \in V(G)$ one can compute the value~${\sum_{t \in V(G)} f(s,t) \cdot \sigma_{st}(v)}$ in~$O(n \cdot \tau + m)$ time.
\end{lemma}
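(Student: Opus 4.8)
The plan is to mimic the accumulation technique from Brandes' algorithm, which computes, from a single source $s$, values of the form $\sum_{t} (\text{something depending on } t) \cdot \sigma_{st}(v)$ for all $v$ simultaneously, in one backward sweep over the BFS/DAG structure rooted at $s$. The key structural fact is the standard recurrence: if $\delta(v) := \sum_{t \in V(G)} f(s,t)\cdot \sigma_{st}(v)$, and $P_s(w)$ denotes the set of predecessors of $w$ on shortest paths from $s$ (i.e.\ neighbors $w'$ of $w$ with $d_G(s,w') + 1 = d_G(s,w)$), then
\[
	\delta(v) \;=\; \sum_{\substack{w : v \in P_s(w)}} \frac{\sigma_{sv}}{\sigma_{sw}}\bigl(f(s,w)\,\sigma_{sw} + \delta(w)\bigr).
\]
Intuitively, $f(s,w)\,\sigma_{sw}$ accounts for paths that end at $w$ and pass through $v$, while the $\delta(w)$ term propagates the contribution of paths ending beyond $w$; the factor $\sigma_{sv}/\sigma_{sw}$ is the fraction of shortest $s$-$w$--paths routed through $v$. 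I would state this recurrence, prove it by the same induction on decreasing distance from $s$ that Brandes uses (the base case being the vertices at maximum distance, where the sum over $w$ is empty, so $\delta$ equals $0$ plus whatever is contributed when such a vertex is itself a $w$ for closer vertices), and note it is essentially Brandes' lemma with the constant weight replaced by the term $f(s,w)\,\sigma_{sw}$.

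The algorithm then is: (i) run a BFS from $s$ to compute $d_G(s,\cdot)$, the shortest-path counts $\sigma_{s\cdot}$, and the predecessor lists $P_s(\cdot)$; this takes $O(n+m)$ time, and the total size of all predecessor lists is $O(m)$. (ii) Process the vertices in order of non-increasing distance from $s$ (available for free from the BFS layering), and for each such $w$, for each $v \in P_s(w)$, add $\tfrac{\sigma_{sv}}{\sigma_{sw}}\bigl(f(s,w)\sigma_{sw} + \delta(w)\bigr)$ to $\delta(v)$. Each value $f(s,w)$ is computed once, when $w$ is processed, at cost $O(\tau)$, contributing $O(n\tau)$ overall; the double loop over predecessor lists does $O(m)$ arithmetic operations. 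Summing, the running time is $O(n\tau + m)$, as claimed, and the space is $O(n+m)$.

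The main obstacle is not the running-time bookkeeping but getting the recurrence exactly right and proving it, in particular handling the convention $\sigma_{st}(v) = 0$ when $v \in \{s,t\}$ (so that $v=s$ contributes nothing spurious, and the "paths ending at $w$" term $f(s,w)\sigma_{sw}$ should only count $v$ strictly between $s$ and $w$ — which the recurrence does automatically since $v$ ranges over predecessors, hence $v \neq w$, and $v \neq s$ is handled because $\delta(s)$ is never read), and making sure that when $f$ is not symmetric the right argument order $f(s,w)$ is used throughout. I would therefore be careful to phrase the inductive claim as: for every $v$, after all $w$ with $d_G(s,w) > d_G(s,v)$ have been processed, $\delta(v)$ holds its final value $\sum_{t} f(s,t)\sigma_{st}(v)$; the induction step expands $\sigma_{st}(v)$ over the last predecessor of $t$ on a shortest path before reaching the subtree hanging below $v$, exactly as in Brandes' original argument, with the only change being the weight attached to each destination $t$. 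Everything else is a routine adaptation.
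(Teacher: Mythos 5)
Your proposal is correct and follows essentially the same route as the paper's proof: the recurrence $\delta(v) = \sum_{w}\frac{\sigma_{sv}}{\sigma_{sw}}\bigl(f(s,w)\sigma_{sw} + \delta(w)\bigr)$ is algebraically identical to the paper's $\chi_{sv} = \sum_{w \in S_s(v)}\bigl(\chi_{sw}\cdot\frac{\sigma_{sv}}{\sigma_{sw}} + f(s,w)\cdot\sigma_{sv}\bigr)$ (your predecessor relation $v \in P_s(w)$ is just the successor relation $w \in S_s(v)$ read backwards), and the BFS-plus-backward-sweep implementation with its $O(n\tau + m)$ accounting matches the paper's argument.
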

\begin{proof}
	This proof generally follows the structure of the proof by \citet[Theorem~6, Corollary~7]{Bra01}, the main difference being the generalization of the distance function to an arbitrary function~$f$.

	Analogously to Brandes we define~$\sigma_{st}(v,w)$ as the number of shortest paths from~$s$ to~$t$ that contain the edge~$\{v,w\}$, and~$S_s(v)$ as the set of successors of a vertex~$v$ on shortest paths from~$s$, that is, $S_s(v) = \{w\in V(G) \mid \{v,w\}\in E \land d_G(s,w) = d_G(s,v) + 1\}$. 
	For the sake of readability we also define~$\chi_{sv} = \sum_{t\in V(G)} f(s,t) \cdot \sigma_{st}(v)$.
	We will first derive a series of equations that show how to compute~$\chi_{sv}$.
	Afterwards we justify \Cref{eq:chi1,eq:chi2}.
	\begin{align*}
	\chi_{sv} =& \sum_{t \in V(G)} f(s,t) \cdot \sigma_{st}(v)\\
	= &\sum_{t \in V(G)}  f(s,t) \sum_{w \in S_s(v)} \sigma_{st}(v,w)=\sum_{w \in S_s(v)} \sum_{t \in V(G)}  f(s,t) \cdot  \sigma_{st}(v,w)\stepcounter{equation}\tag{\theequation}\label{eq:chi1}\\
	\phantom{\chi_{sv}}= &\sum_{w \in S_s(v)} \Big (  \big( \sum_{t \in V(G)\setminus \{w\}}  f(s,t) \cdot \sigma_{st}(v,w) \big) + f(s,w)  \cdot \sigma_{sw}(v,w) \Big )\\
	= &\sum_{w \in S_s(v)} \Big (  \big( \sum_{t \in V(G)\setminus \{w\}}  f(s,t) \cdot \sigma_{st}(w) \cdot \frac{\sigma_{sv}}{\sigma_{sw}} \big) + f(s,w)  \cdot \sigma_{sv} \Big )\stepcounter{equation}\tag{\theequation}\label{eq:chi2}\\
	= &\sum_{w \in S_s(v)} \Big (  \chi_{sw} \cdot \frac{\sigma_{sv}}{\sigma_{sw}}  + f(s,w)  \cdot \sigma_{sv} \Big )
	\end{align*}
	We will now show that \Cref{eq:chi1,eq:chi2} are correct.
	All other equalities are based on simple arithmetics.
	To see that \Cref{eq:chi1} is correct, observe that each shortest path from~$s$ to any other vertex~$t$ that contains~$v$ either ends in~$v$, that is,~$t = v$, or contains exactly one edge~$\{v,w\}$, where~${w \in S_s(v)}$.
	If~$t = v$, then~$\sigma_{st}(v) = 0$ and therefore~$\sum_{t \in V} \sigma_{st}(v) =  \sum_{t \in V}\sum_{w \in S_s(v)} \sigma_{st}(v,w)$.
	To see that \Cref{eq:chi2} is correct, observe the following: 
	First, note that the number of shortest paths from~$s$ to~$t$ that contain a vertex~$v$ is \[\sigma_{st}(v) = \begin{cases} 0, & \text{if }d_G(s,v) + d_G(v,t) > d_G(s,t);\\\sigma_{sv} \cdot \sigma_{vt}, & \text{otherwise;}\end{cases}\]
	second, note that the number of shortest $st$-paths that contain an edge~$\{v,w\}$,~$w \in S_s(v)$, is \[\sigma_{st}(v,w) = \begin{cases} 0, & \text{if }d_G(s,v) + d_G(w,t) + 1 > d_G(s,t);\\\sigma_{sv} \cdot \sigma_{wt}, & \text{otherwise;}\end{cases}\]
	and third, note that the number of shortest~$sw$-paths that contain~$v$ is equal to the number of shortest~$sv$-paths.
	The combination of these three observations yields~$\sigma_{st}(v,w) =  \sigma_{sv} \cdot \sigma_{wt} = \sigma_{sv} \cdot \sigma_{st}(w) / \sigma_{sw}.$

	We next show how to compute~$\chi_{sv}$ for all~$v\in V$ in~$O(m + n \cdot \tau)$ time.
	First, order the vertices in non-increasing distance to~$s$ and compute the set of all successors of each vertex in~$O(m)$ time using breadth-first search.
	Note that the number of successors of all vertices is at most~$m$ since each edge defines at most one successor-predecessor relation.
	Then compute~$\chi_{sv}$ for each vertex by a dynamic program that iterates over the ordered list of vertices and computes
	\[\sum_{w \in S_s(v)} \Big (  \chi_{sw} \cdot \frac{\sigma_{sv}}{\sigma_{sw}}  + f(s,w)  \cdot \sigma_{sv} \Big )\]
	in overall~$O(m + n \cdot \tau)$ time.
	This can be done by first computing~$\sigma_{st}$ for all~$t\in V$ in overall~$O(m)$ time due to \citet[Corollary 4]{Bra01} and~$f(s,t)$ for all~$t \in V(G)$ in~$O(n \cdot \tau)$ time, and then using the already computed values~$S_s(v)$ and~$\chi_{sw}$ to compute
	\[\chi_{sv} = \sum_{w \in S_s(v)} \Big (  \chi_{sw} \cdot \frac{\sigma_{sv}}{\sigma_{sw}}  + f(s,w)  \cdot \sigma_{sv} \Big )\]
	in~$O(|S_s(v)|)$ time.
	Note that~$\sum_{v\in V} |S_s(v)| \leq O(m)$.
	This concludes the proof.
\end{proof}

The proof of \cref{lem:brandesalg} provides us with an algorithm.
Our goal is then to only start this algorithm from few vertices, specifically the vertices of degree at least three (see \Cref{line:modified-bfs} of \Cref{alg:overview}).
Since the term~$\tau$ in the above lemma will be constant, we obtain a running time of~$O(kn)$ for running this postprocessing on all vertices of degree at least three.
The most intricate part will be to precompute the factors in $\Inc[\cdot,\cdot]$ (see \Cref{line:inc-update-two-paths,line:inc-update-one-path} of \Cref{alg:overview}).
We defer the details to \Cref{sec:one-maxpath,sec:two-maxpaths}.
In these parts, we need the tables~$\Wleft$ and~$\Wright$.
These tables store values depending on the \maxpath{} a vertex is in.
More precisely, for a vertex~$x_i$ in a \maxpath{}~$\Pmax = x_0x_1\ldots x_\qq$, we store in~$\Wleft[x_k]$ the sum of the~$\Pend[\cdot]$\nobreakdash-values of vertices ``left of''~$x_k$ in~$\Pmax$; formally, $\Wleft[x_k] = \sum_{i=1}^{k} \Pend[x_i]$.
Similarly, we have~$\Wright[x_k] = \sum_{i=k}^{\qq-1} \Pend[x_i]$.
The reason for having these tables is easy to see:
Assume for the vertex~$x_k \in \Pmax$ that the shortest paths to~$t \notin V(\Pmax)$ leave~$\Pmax$ through~$x_0$.
Then, it is equivalent to just consider the shortest path(s) starting in~$x_0$ and simulate the vertices between~$x_k$ and~$x_0$ in~$\Pmax$ by ``temporarily increasing''~$\Pend[x_0]$ by~$\Wleft[x_k]$.
This is also the idea behind the argument that we only need to increase the values~$\Inc[\cdot,\cdot]$ for the endpoints of the \maxpaths{} in \Cref{line:inc-update-two-paths} of \cref{alg:overview}.

This leaves us with the remaining part of the preprocessing: the computation of the distances~$d_G(s,t)$, the number of shortest paths~$\sigma_{st}$, and~$\Inc[s,t]$ for~$s \in \Vthree(G), t \in V(G)$ (see \Crefrange{line:init-tables}{line:init-inc-table}).
This can be done in~$O(kn)$ time as well:
\begin{lemma}
	\label[lemma]{lem:init-tables}
	The initialization in the for-loop in \Crefrange{line:init-tables}{line:init-inc-table} of \Cref{alg:overview} can be done in~$O(kn)$ time.
\end{lemma}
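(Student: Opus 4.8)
The statement to prove is \Cref{lem:init-tables}: that \Crefrange{line:init-tables}{line:init-inc-table} of \Cref{alg:overview} can be executed in $O(kn)$ time. The loop iterates over all $s \in \Vthree(G)$, and by \Cref{lem:fen} there are at most $\min\{n,2k\} = O(\min\{k,n\})$ such vertices; hence it suffices to show that a single iteration runs in $O(n+k)$ time, since $\min\{k,n\}\cdot(n+k) = O(kn)$ (using $k < m \le n-1+k$, so $n+k = O(\max\{n,k\})$ and $\min\{k,n\}\cdot\max\{n,k\} = kn$). Within one iteration we must (i) compute $d_G(s,t)$ and $\sigma_{st}$ for all $t\in V(G)\setminus\{s\}$, and (ii) fill in $\Inc[s,t]$ for all $t\in\Vtwo(G)$ and all $t\in\Vthree(G)\setminus\{s\}$ according to \Cref{line:init-inc-table-vtwo,line:init-inc-table}.

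First I would handle step (i). Running a plain BFS from $s$ would cost $O(n+m)$ per source, giving $O(\min\{k,n\}\cdot m) = O(km)$ overall, which is too slow (we want $O(kn)$, and $m$ may be up to $n-1+k$). The key observation is that the biconnected graph $G$, after contracting each \maxpath{} to a single edge, has only $O(k)$ vertices and $O(k)$ edges (by \Cref{lem:fen} there are $O(k)$ vertices of degree at least three and $O(k)$ \maxpaths{}). So the plan is: build the ``reduced'' multigraph $H$ on vertex set $\Vthree(G)$, where each \maxpath{} $P = x_0\ldots x_\qq$ becomes an edge between $x_0$ and $x_\qq$ of length $\qq$; this takes $O(n+m)$ time once (outside the loop). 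Within one iteration for source $s\in\Vthree(G)$, compute the distances and shortest-path counts from $s$ to all vertices of $\Vthree(G)$ using a shortest-path computation on $H$ (Dijkstra, or since edge lengths are integers one can use BFS on the subdivided-but-tracked structure); since $|V(H)|, |E(H)| = O(k)$, this costs $O(k\log k)$ or, with a more careful integer approach, $O(k)$ per source. Then extend these values to all degree-two vertices: for each \maxpath{} $P = x_0\ldots x_\qq$ and each inner vertex $x_i$, the distance $d_G(s,x_i) = \min\{d_G(s,x_0)+i,\; d_G(s,x_\qq)+(\qq-i)\}$, and $\sigma_{s x_i}$ is obtained by summing $\sigma_{s x_0}$ (if the left option is a shortest route), $\sigma_{s x_\qq}$ (if the right option is), with an extra internal term if both are equal — all computable in $O(1)$ per inner vertex, hence $O(n)$ total over all \maxpaths{}. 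The only subtlety here is the case where a shortest $s$-$x_i$ path passes through the \maxpath{} containing $x_i$ and also has an alternative route; one must be slightly careful to count paths that go ``around'' correctly, but since $G$ is biconnected and each inner vertex has degree two, the routes into $x_i$ are exactly from its two sides, so the formula is clean.

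Next, step (ii): once $d_G(s,t)$ and $\sigma_{st}$ are known for all $t$, each entry $\Inc[s,t] = 2\Pend[s]\Pend[t]/\sigma_{st}$ (for $t\in\Vtwo$) or $\Pend[s]\Pend[t]/\sigma_{st}$ (for $t\in\Vthree\setminus\{s\}$) is computed in $O(1)$ time, so this costs $O(n)$ per iteration and $O(kn)$ in total. Summing: the one-time preprocessing of $H$ is $O(n+m) = O(n+k) \subseteq O(kn)$ for $k\ge 1$ (and for $k=0$ the graph is a forest, excluded by the biconnectedness assumption with a vertex of degree $\ge 3$, so the cycle case handles it separately); each of the $O(\min\{k,n\})$ iterations costs $O(n+k)$; hence the total is $O(kn)$.

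\textbf{Main obstacle.} The delicate part is not the asymptotics but correctly computing $\sigma_{st}$ — the number of shortest paths — for $t$ a degree-two vertex, and in particular making sure that when we ``route through'' $H$ we do not lose or double-count shortest paths that traverse an entire \maxpath{} as an intermediate segment. One has to argue that a shortest $s$-$t$ path in $G$ projects to a shortest walk in the length-weighted reduced graph $H$ and conversely, that each such reduced walk lifts back uniquely (no shortcuts inside a \maxpath{}, since its inner vertices have degree two). For $t\in\Vthree$ this is the standard subdivision argument; for $t$ inner to a \maxpath{} $P$ one additionally splits on which of the two endpoints of $P$ the path enters through (or both, if $d_G(s,x_0)+i = d_G(s,x_\qq)+(\qq-i)$), and the count formula must treat the ``enter from both ends'' case with the right additive contribution. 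I expect this bookkeeping — rather than any genuinely hard idea — to be where the proof needs care, and it is exactly the content deferred to the detailed proof.
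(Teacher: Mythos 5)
Your argument goes astray at the point where you reject the plain per-source BFS as ``too slow.'' You write that running an ordinary BFS from each source costs $O(\min\{k,n\}\cdot m)=O(km)$ and that this misses the target, but this is a miscalculation: since the input is connected, $m\le n-1+k$, so each BFS (together with Brandes' shortest-path counting, his Corollary~4) costs $O(n+m)=O(n+k)$ per source, and $\min\{k,n\}\cdot(n+k)=O(kn)$ --- exactly the bound your own opening paragraph says suffices. (If $k\le n$ then $n+k\le 2n$ and the product is $O(kn)$; if $k>n$ then $n+k\le 2k$ and the product is again $O(kn)$.) This simple observation \emph{is} the paper's entire proof: run the standard single-source distance and $\sigma_{st}$ computation from each of the $O(\min\{k,n\})$ vertices of $\Vthree(G)$ (using \Cref{lem:fen}), then fill each row of $\Inc[s,\cdot]$ in $O(n)$ additional time. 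No reduced graph is needed.

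The detour you build on this false premise --- contracting each \maxpath{} to a weighted edge, running Dijkstra on the resulting multigraph $H$, and lifting distances and path counts back to the inner degree-two vertices --- is not only unnecessary but also leaves real gaps. Dijkstra on $H$ costs $O(k\log k)$ per source, and $\min\{k,n\}\cdot k\log k$ is \emph{not} $O(kn)$ when $k\gg n$; the ``more careful integer approach'' in $O(k)$ per source is asserted, not proved (edge weights in $H$ can be as large as $n$, so a plain BFS on $H$ does not apply). And the shortest-path-count lifting that you yourself flag as the ``main obstacle'' is precisely the kind of bookkeeping the lemma never needs to touch. In short: delete the reduced-graph machinery, keep your first paragraph, and observe that a plain BFS already runs in the $O(n+k)$ per-iteration budget you computed there.
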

\begin{proof}
	Following \citet[Corollary 4]{Bra01}, computing the distances and the number of shortest paths
	from a fixed vertex~$s$ to every~$t \in V(G)$ takes~$O(m) = O(n+k)$ time.
	Once these values are computed for a fixed~$s$, computing~$\Inc[s, t]$ for~$t \in V(G)$ takes~$O(n)$ time since the values~$\Pend[s]$,~$\Pend[t]$, and~$\sigma_{s t}$ are known.
	Since, by \Cref{lem:fen}, there are~$O(\min \{k, n\})$ vertices of degree at least three, it takes~$O(\min\{k, n\} \cdot (n+k + n)) = O(kn)$ time to compute \Crefrange{line:init-tables}{line:init-inc-table}.
\end{proof}
%
%

\section{Dealing with \maxpaths{}}
\label{sec:maxpaths}

In this section, we focus on degree-two vertices contained in \maxpaths. Recall that the goal is to compute the betweenness centrality~$C_B(v)$ (see \Cref{eq:betw-cent-with-pend}) for all~$v \in V(G)$ in~$O(kn)$ time.
In the end of this section, we finally prove our main theorem (\Cref{thm:main}).

\Cref{fig:diagram} shows the general proof structure of the main theorem.
\begin{figure}
	\centering
	\small
	\begin{tikzpicture}[
		every node/.style = {draw, align=center},
		sibling distance=17em,
		every child/.style={edge from parent/.style={draw, {-{Latex}}}},
	]
		\node {\cref{thm:main}}
			child[edge from parent fork down, level distance=6.0em, sibling distance=13.5em] { node {both endpoints in\\different paths\\ (\cref{prop:mp-pairs})}
				child[sibling distance=6.5em] { node {$v$ outside\\ of the paths\\(\cref{lem:bc-two-mp-outside})} }
				child[sibling distance=6.5em] { node {$v$ inside\\ one path\\ (\cref{lem:bc-two-mp-inside}$^*$)}
					child[level distance=4.0em] { node {symmetry (\cref{lem:bc-two-mp-inside-two}$^*$)} }
				}
			}
			child[edge from parent fork down, level distance=6.0em, sibling distance=13.5em] { node {both endpoints in\\the same path\\ (\cref{prop:mp-single})}
				child[sibling distance=6.5em] { node {$v$ inside\\the path\\(\cref{lem:v-in-one-maxpath}$^*$)} }
				child[sibling distance=6.5em] { node (post2) {$v$ outside\\of the path\\(\cref{lem:v-not-within-maxpath}$^*$)} 
					child[level distance=4.0em,xshift=5em] { node (post) {postprocessing (\cref{lem:brandesalg})} edge from parent[draw=none] }
				}
			}
			child[edge from parent fork down, level distance=6.0em, sibling distance=13.5em] { node (post1) {at least one end-\\ point of degree\\ at least three} };

		\draw[-{Latex}] (post1) to (post);
		\draw[-{Latex}] (post2) to (post);

	\end{tikzpicture}
	\caption{
		Structure of how the proof of \cref{thm:main} is split into different cases.
		By ``paths'' we mean \maxpaths{}.
		The first layer below the main theorem specifies the positions of the endpoints~$s$ and~$t$, whereas the second layer specifies the position of the vertex~$v$, for which the \bc{} is computed.
		The third layer displays further lemmata used to prove the corresponding lemma above.
		Proofs of lemmata marked with an asterisk are deferred to the appendix.
	}
	\label{fig:diagram}
\end{figure}
Based on \Cref{obs:split-bc-sum}, which we use to split the sum in \Cref{eq:betw-cent-with-pend} in the definition of \textsc{Weighted Betweenness Centrality}, we compute~$C_B(v)$ in three steps.
By starting a modified BFS from vertices in~$\Vthree(G)$ similarly to \citet{BGPL12} and \citet{Bra01}, we can compute 
\begin{align*}
	\sum_{s \in \Vthree(G), t \in V(G)} \gamma(t, s, v) + \sum_{s \in \Vtwo(G), t \in \Vthree(G)} \gamma(s, t, v)
\end{align*}
for all~$v \in V(G)$ in overall~$O(kn)$ time. 
In the next two subsections, we show how to compute the remaining two summands given in \Cref{obs:split-bc-sum} (i.e., we prove \cref{prop:mp-pairs,prop:mp-single}).
In the last subsection, we prove \Cref{thm:main}.

\subsection{Paths with endpoints in different \maxpaths{}}
\label{sec:two-maxpaths}
In this subsection, we look at shortest paths between pairs of \maxpaths{}~$\Pmax_1 = x_0 \dots x_\qq$ and~$\Pmax_2 = y_0 \dots y_\qr$, and how to efficiently determine how these paths affect the \bc{} of each vertex.
\newcommand{\propmppairs}{%
	In~$O(kn)$ time one can compute the following values for all~$v \in V(G)$:
	\begin{equation*}
		\sum_{\substack{
			s \in \Vtwo(\Pmax_1),\, t \in \Vtwo(\Pmax_2)\\
			\Pmax_1 \ne \Pmax_2 \in \AllPmax}}
			\gamma(s, t, v).
	\end{equation*}
}
\begin{proposition}
	\label[proposition]{prop:mp-pairs}
	\propmppairs
\end{proposition}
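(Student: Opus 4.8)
We iterate over all ordered pairs $(\Pmax_1,\Pmax_2)$ of distinct \maxpaths{}; since, by \cref{lem:fen}, $|\AllPmax| = O(\min\{k,n\})$, there are $O(\min\{k,n\}^2) = O(kn)$ such pairs. Fix $\Pmax_1 = x_0\dots x_\qq$ and $\Pmax_2 = y_0\dots y_\qr$. As in \cref{fig:diagram} we treat separately the vertices $v\notin V(\Pmax_1)\cup V(\Pmax_2)$ and the vertices $v\in V(\Pmax_1)$, the case $v\in V(\Pmax_2)$ following by exchanging the two paths and invoking \cref{obs:gamma}. The key structural fact is that, for $s = x_i$ and $t = y_j$ with $0<i<\qq$, $0<j<\qr$, every shortest $st$-path splits into a segment of $\Pmax_1$ from $x_i$ to an endpoint $\psi\in\{x_0,x_\qq\}$, a shortest $\psi$-$\varphi$ path for some $\varphi\in\{y_0,y_\qr\}$, and a segment of $\Pmax_2$ from $\varphi$ to $y_j$; as $\sigma_{x_i\psi}=\sigma_{\varphi y_j}=1$ and the relevant distances $d_G(\psi,\varphi)$ are available from \cref{lem:init-tables}, we have $d_G(x_i,y_j) = \min_{\psi,\varphi}\bigl(d_G(x_i,\psi)+d_G(\psi,\varphi)+d_G(\varphi,y_j)\bigr)$. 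Writing $\mathrm{Valid}(i,j)$ for the set of pairs $(\psi,\varphi)$ attaining this minimum (the data underlying $S(\cdot,\cdot)$ and $\eta$ in \cref{tab:allnotation}), we get $\sigma_{x_iy_j} = \sum_{(\psi,\varphi)\in\mathrm{Valid}(i,j)}\sigma_{\psi\varphi}$. One shows that $\mathrm{Valid}(i,j)$ has a monotone ``threshold'' shape: for fixed $j$ the exit endpoint of $\Pmax_1$ is $x_0$ for all small $i$ and $x_\qq$ for all large $i$, with at most one intermediate index admitting both (an analogue of $\xmid$), and symmetrically in $j$; so $\Vtwo(\Pmax_1)\times\Vtwo(\Pmax_2)$ decomposes into $O(1)$ combinatorial rectangles (one per value of $\mathrm{Valid}$) plus $O(\qq+\qr)$ boundary indices.

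\emph{Vertices outside both \maxpaths{}.} For such a $v$, a shortest $x_iy_j$-path visits $v$ exactly when $v$ lies on the middle $\psi$-$\varphi$ part, so $\sigma_{x_iy_j}(v) = \sum_{(\psi,\varphi)\in\mathrm{Valid}(i,j)}\sigma_{\psi\varphi}(v)$ (automatically $0$ if $v$ is off every shortest $\psi$-$\varphi$ path), and hence
\[
\sum_{x_i\in\Vtwo(\Pmax_1)}\sum_{y_j\in\Vtwo(\Pmax_2)}\gamma(x_i,y_j,v)
= \sum_{i,j}\Pend[x_i]\,\Pend[y_j]\;\frac{\sum_{(\psi,\varphi)\in\mathrm{Valid}(i,j)}\sigma_{\psi\varphi}(v)}{\sum_{(\psi,\varphi)\in\mathrm{Valid}(i,j)}\sigma_{\psi\varphi}}.
\]
Only the numerator depends on $v$. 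Grouping the $(i,j)$-sum by the value of $\mathrm{Valid}(i,j)$ and using the tables $\Wleft,\Wright$ on $\Pmax_1$ and $\Pmax_2$ to collapse each rectangle into a single $\Pend$-weight product, the whole double sum turns into a constant number of terms $c_{\psi\varphi}\cdot\sigma_{\psi\varphi}(v)$, with $O(1)$ extra such terms coming from the $O(1)$ ``balanced'' indices where $|\mathrm{Valid}(i,j)|\ge 2$. Adding the scalar $c_{\psi\varphi}$ to $\Inc[\psi,\varphi]$ for each of the four endpoint pairs is precisely \cref{line:inc-update-two-paths} of \cref{alg:overview}, and afterwards running \cref{lem:brandesalg} from every $s\in\Vthree(G)$ with $f(s,t) = \Inc[s,t]$ and $\tau = O(1)$ (\cref{line:modified-bfs}) yields, for every $v\notin V(\Pmax_1)\cup V(\Pmax_2)$, exactly the sum of these contributions over all pairs, in total $O(kn)$ time by \cref{lem:fen}. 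Here one has to verify that shortest $\psi$-$\varphi$ paths can be chosen to avoid the interiors of $\Pmax_1$ and $\Pmax_2$ (so that the $G$-values $\sigma_{\psi\varphi}$ and $\sigma_{\psi\varphi}(v)$ are the right counts), which holds because a detour into such an interior would repeat a vertex of the traversed $\Pmax_1$- or $\Pmax_2$-segment.

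\emph{Vertices inside one \maxpath{}.} Let $v = x_\ell\in V(\Pmax_1)$. If $0<\ell<\qq$, then $x_\ell$ is on a shortest $x_iy_j$-path iff $\ell$ lies strictly between $i$ and the $\Pmax_1$-exit index for some $(\psi,\varphi)\in\mathrm{Valid}(i,j)$; thus $\sigma_{x_iy_j}(x_\ell)/\sigma_{x_iy_j}$ equals $1$, equals $0$, or -- only for the $O(1)$ balanced indices $i$ -- is a ratio of two $\sigma_{\psi\varphi}$'s. If $\ell\in\{0,\qq\}$ there is an additional term from the middle $\psi$-$\varphi$ segment, computed exactly as in the outside case. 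By the threshold structure, for each exit side and each fixed $j$ the set of $i$ whose $\Pmax_1$-segment covers $x_\ell$ is an interval whose endpoints move monotonically in $j$; a single left-to-right sweep over $\ell$ together with suffix/prefix sums of $\Pend$ over $\Pmax_2$ (which are exactly the $\Wleft,\Wright$ values of $\Pmax_2$) therefore computes $\sum_{i,j:\,x_\ell\text{ on segment}}\Pend[x_i]\Pend[y_j]$ for all $\ell$ at once in $O(|\Pmax_1|+|\Pmax_2|)$ time, the $O(1)$ balanced indices being handled individually. Adding the resulting value to $\BC[x_\ell]$ for every $x_\ell\in V(\Pmax_1)$, and doing the same with $\Pmax_1$ and $\Pmax_2$ exchanged (licensed by $\gamma(s,t,v)=\gamma(t,s,v)$, \cref{obs:gamma}) for $v\in V(\Pmax_2)$, handles all vertices on the two \maxpaths{}.

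\emph{Assembly and main obstacle.} Over all $O(\min\{k,n\}^2)$ ordered pairs the inside case costs $\sum_{\Pmax_1,\Pmax_2}O(|\Pmax_1|+|\Pmax_2|) = O\bigl(|\AllPmax|\cdot\sum_{\Pmax}|\Pmax|\bigr) = O(\min\{k,n\}\cdot n) = O(kn)$, and the outside case costs $O(1)$ per pair; adding the $O(kn)$ precomputation of \cref{lem:init-tables} and the $O(kn)$ postprocessing of \cref{lem:brandesalg} from the $O(\min\{k,n\})$ vertices of $\Vthree(G)$ proves \cref{prop:mp-pairs}. I expect the main difficulty to be the structural analysis of $\mathrm{Valid}(i,j)$: proving the monotonicity of the exit/entry side in the endpoint index, enumerating the finitely many shapes of $\mathrm{Valid}$, and pinning down the correction terms for the ``balanced'' indices -- these are the analogue of the $\xmid$ phenomenon and the place where $\sigma_{st}(v)/\sigma_{st}$ ceases to be $0$ or $1$ and genuine fractions appear.
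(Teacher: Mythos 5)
Your overall route is the paper's: split by the position of~$v$, exploit the threshold/$\xmid$ structure of where shortest paths leave~$\Pmax_1$, accumulate contributions for outside vertices into~$\Inc[\psi,\varphi]$ and defer to \cref{lem:brandesalg}, and handle vertices on the paths directly. But there is a concrete gap in your central structural claim: the set $\{(i,j) : \mathrm{Valid}(i,j) = \{(\psi,\varphi)\}\}$ is \emph{not} a combinatorial rectangle, and $\Vtwo(\Pmax_1)\times\Vtwo(\Pmax_2)$ does not decompose into $O(1)$ rectangles plus boundary. Comparing the route through $(x_0,y_0)$ with the one through $(x_\qq,y_\qr)$ gives the condition $i + d_G(x_0,y_0) + j \lessgtr (\qq-i) + d_G(x_\qq,y_\qr) + (\qr-j)$, i.e.\ a constraint on $i+j$; so the optimality regions are rectangles cut by diagonal half-planes. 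A sum $\sum_{i+j<c}\Pend[x_i]\Pend[y_j]$ cannot be collapsed into $O(1)$ lookups in $\Wleft,\Wright$, so your claimed $O(1)$ per-pair cost for the outside case does not go through, and the same diagonal issue threatens the double sum in your inside case. The repair is to fix one coordinate first: for each fixed $t=y_j$ the optimal exit side as a function of $i$ alone \emph{is} an interval split (this is exactly the $\Xleft_t,\xmid_t,\Xright_t$ decomposition), so the inner sum over $i$ collapses via $\Wleft,\Wright$ in $O(1)$ and one pays $O(|V(\Pmax_2)|)$ per ordered pair. This is what the paper does, and the budget tolerates it, since $\sum_{\Pmax_1\ne\Pmax_2}|V(\Pmax_2)| = O(|\AllPmax|\cdot n) = O(kn)$; so the theorem survives, but not by the argument you gave.

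Two smaller omissions: (i) when you handle $v\in V(\Pmax_1)$ for the pair $(\Pmax_1,\Pmax_2)$ and then "do the same with the paths exchanged," a vertex lying in $V(\Pmax_1)\cap V(\Pmax_2)$ (a shared endpoint) is counted twice; the paper's \cref{lem:bc-two-mp-inside-two} explicitly halves the contribution in that case. (ii) You should note that incrementing $\Inc[\psi,\varphi]$ never perturbs the \bc{} of $\psi$ or $\varphi$ themselves (since $\sigma_{\psi\varphi}(\psi)=\sigma_{\psi\varphi}(\varphi)=0$), and that $\Inc[\psi,\varphi]$ is only ever incremented when no shortest $\psi\varphi$-path passes through the interior of either \maxpath{} — your "detour would repeat a vertex" remark gestures at this but the needed statement is about which entries get incremented at all.
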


In the proof of \Cref{prop:mp-pairs}, we consider two cases for every pair~$\Pmax_1 \ne \Pmax_2 \in \AllPmax$ of \maxpaths{}:
First, we look at how the shortest paths between vertices in~$\Pmax_1$ and~$\Pmax_2$
affect the \bc{} of those vertices that are not contained in the two \maxpaths{},
and second, how they affect the \bc{} of those vertices that are contained in the two \maxpaths{}.
Finally, we prove \Cref{prop:mp-pairs}.

Throughout the following proofs, we will need the following definitions (see \cref{fig:twopaths} for an illustration).
\begin{figure}
	\centering
	\begin{tikzpicture}[rotate=90]
		\usetikzlibrary{calc}
		\tikzstyle{pnode}=[fill,circle,scale=1/3]
		\tikzstyle{d1}=[red,very thick]
		\tikzstyle{d2}=[blue,dashed,very thick]
		\def\centerarc[#1](#2)(#3:#4:#5) { \draw[#1] ($(#2)+({#5*cos(#3)},{#5*sin(#3)})$) arc (#3:#4:#5); };

		\node (x0) at (135:2 cm) [pnode,label=180:{$x_0$}]{};
		\node (xq) at (45:2 cm) [pnode,label=180:{$x_\qq$}]{};


		\node (xtmid) at (90:2 cm) [pnode,label=180:{$\xmid_t$}]{};
		\node (xtleft) at (70:2 cm) [pnode,label=180:{$\xright_t$}]{};
		\node (xtright) at (110:2 cm) [pnode,label=180:{$\xleft_t$}]{};
		\node (t) at (270:2 cm) [pnode,label=0:{$t \in \Pmax_2$}]{};

		\node (p1max) at (0,4) {$\Pmax_1$};

		\centerarc[black](0,0)(70:90:2)
		\centerarc[black](0,0)(90:110:2)

		\centerarc[black,densely dotted](0,0)(45:70:2)
		\centerarc[black,densely dotted](0,0)(110:135:2)

		\centerarc[black,decorate,decoration=snake](0,0)(135:269:2)
		\centerarc[black,decorate,decoration=snake](0,0)(-90:45:2)

		\centerarc[d1](0,0)(110:270:1.7)
		\centerarc[d2](0,0)(-90:70:1.7)

	\end{tikzpicture}
	\caption{
		An exemplary graph containing two \maxpaths{}~$\Pmax_1 = x_0 \ldots x_\qq$ and~$\Pmax_2$.
		The curled lines depict shortest paths from~$t$ to~$x_0$ and to~$x_\qq$ respectively.
		We then choose~$\xleft_t, \xmid_t, \xright_t \in V(\Pmax_1)$ in such a way that the distance from~$t$ to~$\xleft_t$ and to~$\xright_t$ is equal, that is, the red (solid) line and the blue (dashed) line represent shortest paths of same length.
		Since~$\xmid_t$ is adjacent to~$\xleft_t$ and~$\xright_t$, there are shortest paths from~$\xmid_t$ to~$t$ via both~$x_0$ and~$x_\qq$, that is, along the blue and the red line.
	}
	\label{fig:twopaths}
\end{figure}
Let~$t \in \Pmax_2$. 
Then we choose vertices~$\xleft_t, \xright_t \in \Vtwo(\Pmax_1)$ such that shortest paths from~$t$ to~$s \in \{x_1, x_2, \dots, \xleft_t\} \eqqcolon \Xleft_t$ enter~$\Pmax_1$ only via~$x_0$, and shortest paths from~$t$ to~$s \in \{\xright_t, \dots, x_{\qq-2}, x_{\qq-1}\} \eqqcolon \Xright_t$ enter~$\Pmax_1$ only via~$x_\qq$.
There may exist a vertex~$\xmid_t$ to which there are shortest paths both via~$x_0$ and via~$x_\qq$.
For computing the indices of these vertices, we determine an index~$i$ such that~$d_G(x_0, t) + i = d_G(x_\qq, t) + \qq - i$ which is equivalent to~$i = \frac{1}{2}(\qq - d_G(x_0, t) + d_G(x_\qq, t))$.
If~$i$ is integral, then~$\xmid_t = x_i$,~$\xleft_t = x_{i-1}$ and~$\xright_t = x_{i+1}$.
Otherwise,~$\xmid_t$ does not exist, and~$\xleft_t= x_{i-1/2}$ and~$\xright_t = x_{i+1/2}$.
For easier argumentation, if~$\xmid_t$ does not exist, then we say that~$\Pend[\xmid_t] = \sigma_{t \xmid_t}(v)/\sigma_{t \xmid_t} = 0$, and hence,~$\gamma(\xmid_t, t, v) = 0$.


\subsubsection{Vertices outside of the \maxpaths{}}

We now show how shortest paths between two fixed \maxpaths{}~$\Pmax_1$ and~$\Pmax_2$ affect the \bc{} of vertices that are not contained in~$\Pmax_1$ or in~$\Pmax_2$, that is~$v \in V(G) \setminus (V(\Pmax_1) \cup V(\Pmax_2))$.
Note that in the course of the algorithm, we first gather values in~$\Inc[\cdot,\cdot]$ and in the final step compute for each~$s,t \in \Vthree(G)$ the values~$\Inc[s, t] \cdot \sigma_{s t}(v)$ in~$O(m)$ time (\Cref{lem:brandesalg}).
This postprocessing (see \Cref{line:modified-bfs-loop,line:modified-bfs} in \Cref{alg:overview}) can be run in~$O(kn)$ time.
To keep the following proofs simple we assume that these values~$\Inc[s, t] \cdot \sigma_{s t}(v)$ can be computed in constant time for every~$s, t \in \Vthree(G)$ and~$v \in V(G)$.

\begin{lemma}
	\label[lemma]{lem:bc-two-mp-outside}
	Let~$\Pmax_1 \ne \Pmax_2 \in \AllPmax$.
	Then, assuming that the values~$d_G(s, t)$, $\Wleft[v]$ and~$\Wright[v]$ are known for~$s, t \in \Vthree(G)$ and~$v \in \Vtwo(G)$ respectively, and that the values~$\Inc[s, t] \cdot \sigma_{s t}(v)$ can be computed in constant time for every~$s, t \in \Vthree(G)$ and~$v \in V(G)$, one can compute the following for all~$v \in V(G) \setminus (V(\Pmax_1 \cup V(\Pmax_2))$ in~$O(|V(\Pmax_2)|)$ time:
	\begin{equation}
		\label{eq:bc-two-mp-outside}
		\sum_{s \in \Vtwo(\Pmax_1), t \in \Vtwo(\Pmax_2)} \gamma(s, t, v).
	\end{equation}
\end{lemma}
\begin{proof}
	We fix~$\Pmax_1 \ne \Pmax_2 \in \AllPmax$ with~$\Pmax_1 = x_0 \dots x_\qq$
	and~$\Pmax_2 = y_0 \dots y_\qr$.
	We show how to compute~$\sum_{s \in \Vtwo(\Pmax_1)} \gamma(s, t, v)$
	for a fixed~$t \in \Vtwo(\Pmax_2)$ and~$v \in V(G) \setminus (V(\Pmax_1) \cup V(\Pmax_2))$.
	Afterwards, we analyze the running time.

	By definition of~$\xleft_t$,~$\xmid_t$ and~$\xright_t$ we have
	\begin{equation} \label{eq:bc-two-mp-split}
		\sum_{s \in \Vtwo(\Pmax_1)} \gamma(s, t, v) = \gamma(\xmid_t, t, v) + \sum_{s \in \Xleft_t} \gamma(s, t, v) + \sum_{s \in \Xright_t} \gamma(s, t, v).
	\end{equation}

	By definition of \maxpaths{}, every shortest path from~$s \in \Vtwo(\Pmax_1)$ to~$t$ visits either~$y_0$ or~$y_\qr$.
	For~$\psi \in \{x_0, x_\qq\}$ let~$S(t, \psi)$ be a maximal subset of~$\{y_0, y_\qr\}$ such that for each~$\varphi \in S(t, \psi)$ there is a shortest~$st$-path via~$\psi$ and~$\varphi$.
	An example for this notation is given in \cref{fig:v-outside-two-paths}.
	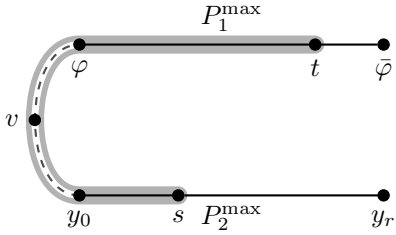
\begin{figure}[t]
		\begin{minipage}{0.5\textwidth}
			\begin{tikzpicture}
				\begin{scope}
					\node (pmax) at (2,1.35) {$\Pmax_1$};
					\node (pmax) at (2,-1.30) {$\Pmax_2$};
					\node[vertex,label=below:{$\varphi$}] (x0) at (0,1.0) {};
					\node[vertex,label=below:{$\bar\varphi$}] (xq) at (4,1.0) {};

					\node[vertex,label=below:{$y_0$}] (y0) at (0,-1.0) {};
					\node[vertex,label=below:{$y_r$}] (yr) at (4,-1.0) {};

					\node[vertex,label=below:{$s$}] (s) at (1.3,-1.0) {};
					\node[vertex,label=below:{$t$}] (t) at (3.1,1.0) {};

					\draw[thick] (x0.center) -- (t.center);
					\draw[thick] (t.center) -- (xq);
					\draw[thick] (y0.center) -- (s);
					\draw[thick] (s) -- (yr.center);

					\node[vertex,fill,black,label=left:{$v$}] at (-0.587,0) {};
					
					\tikzstyle{path} = [color=black!30,line cap=round, line join=round, line width=2.4mm]
					\begin{pgfonlayer}{background}
						\draw[path] (s.center) -- (y0.center) to[out=180,in=180] (x0.center) -- (t.center);
						\draw[path,opacity=1,line width=0.9mm,color=white,out=180,in=180] (y0.center) to (x0.center);
						\draw[dash pattern=on 3pt off 3pt,thick,black!70,out=180,in=180] (y0.center) to (x0.center);
					\end{pgfonlayer}
				\end{scope}
			\end{tikzpicture}
		\end{minipage}
		\hfill
		\begin{minipage}{0.48\textwidth}
			\caption{
				\label{fig:v-outside-two-paths}
				An example for the proof of \cref{lem:bc-two-mp-outside}.
				The endpoints of~$\Pmax_1$ are~$\varphi$ and~$\bar\varphi$.
				In this example we have~$s \in \Xleft_t$, and the set~$S(t, x_0) = \{\varphi\}$.
				Hence, every shortest path from~$s$ to~$t$ visits~$y_0$ and~$\varphi$.
			}
		\end{minipage}
	\end{figure}
	Then, for~$s \in \Xleft_t$, all~$st$-paths visit~$x_0$ and~$\varphi \in S(t, x_0)$.
	Hence, we have that~$\sigma_{s t} = \sum_{\varphi \in S(t, x_0)} \sigma_{x_0 \varphi}$ and~$\sigma_{s t}(v) = \sum_{\varphi \in S(t, x_0)} \sigma_{x_0 \varphi}(v)$.
	Analogously, for~$s \in \Xright_t$ we have that~$\sigma_{s t} = \sum_{\varphi \in S(t, x_\qq)} \sigma_{x_\qq \varphi}$ and~$\sigma_{s t}(v) = \sum_{\varphi \in S(t, x_\qq)} \sigma_{x_\qq \varphi}(v)$.
	Paths from~$t$ to~$\xmid_t$ may visit~$x_0$ and~$\varphi \in S(t, x_0)$ or~$x_\qq$ and~$\varphi \in S(t, x_\qq)$.
	Hence,~$\sigma_{t \xmid_t} = \sum_{\varphi \in S(t x_0)} \sigma_{x_0 \varphi} + \sum_{\varphi \in S(t, x_\qq)} \sigma_{x_\qq \varphi}$.
	The equality holds analogously for~$\sigma_{t \xmid_t}(v)$.
	With this at hand, we can simplify the computation of the first sum of \Cref{eq:bc-two-mp-split}:
	\begin{align*}
		\sum_{s \in \Xleft_t} \gamma(s, t, v)
		& = \sum_{s \in \Xleft_t} \Pend[s] \cdot \Pend[t] \cdot \frac{\sigma_{s t}(v)}{\sigma_{s t}}\\
			& = \Big(\sum_{s \in \Xleft_t} \Pend[s] \Big) \cdot \Pend[t] \cdot
			\frac{\sum_{\varphi \in S(t, x_0)} \sigma_{x_0 \varphi}(v)} {\sum_{\varphi \in S(t, x_0)} \sigma_{x_0 \varphi}} \\
		& = \Wleft[\xleft_t] \cdot \Pend[t] \cdot \frac{\sum_{\varphi \in S(t, x_0)} \sigma_{x_0 \varphi}(v)}{\sum_{\varphi \in S(t, x_0)} \sigma_{x_0 \varphi}}.
			\stepcounter{equation}\tag{\theequation}\label{eq:bc-two-mp-wleft}\\
	\intertext{Analogously,}
		\sum_{s \in \Xright_t} \gamma(s, t, v)
		&= \Wright[\xright_t] \cdot \Pend[t] \cdot
			\frac{\sum_{\varphi \in S(t, x_\qq)} \sigma_{x_\qq \varphi}(v)}
				{\sum_{\varphi \in S(t, x_\qq)} \sigma_{x_\qq \varphi}},
			\stepcounter{equation}\tag{\theequation}\label{eq:bc-two-mp-wright}
	\end{align*}
	and
	\begin{align*}
		\gamma(\xmid_t, t, v)
		&= \Pend[\xmid_t] \cdot \Pend[t] \cdot
			\frac{\sum_{\varphi \in S(t, x_0)} \sigma_{x_0 \varphi}(v)
				+ \sum_{\varphi \in S(t, x_\qq)} \sigma_{x_\qq \varphi}(v)}
			{\sum_{\varphi \in S(t, x_0)} \sigma_{x_0 \varphi}
			+ \sum_{\varphi \in S(t, x_\qq)} \sigma_{x_\qq \varphi}}
			\stepcounter{equation}\tag{\theequation}\label{eq:bc-two-mp-xmid}.
	\end{align*}
	With this we can rewrite \Cref{eq:bc-two-mp-split} to
	\begin{align*}
		&\sum_{s \in \Vtwo(\Pmax_1)} \gamma(s, t, v)\\
		\overset{\eqref{eq:bc-two-mp-wleft},\eqref{eq:bc-two-mp-wright},\eqref{eq:bc-two-mp-xmid}}
		=&~ \frac{\Wleft[\xleft_t] \cdot \Pend[t]}
			{\sum_{\varphi \in S(t, x_0)} \sigma_{x_0 \varphi}}
			\cdot \sum_{\varphi \in S(t, x_0)} \sigma_{x_0 \varphi}(v)\\
		+&~ \frac{\Wright[\xright_t] \cdot \Pend[t]}
			{\sum_{\varphi \in S(t, x_\qq)} \sigma_{x_\qq \varphi}}
			\cdot \sum_{\varphi \in S(t, x_\qq)} \sigma_{x_\qq \varphi}(v)\\
		+&~ \Pend[\xmid_t] \cdot \Pend[t] \cdot
			\frac{\sum_{\varphi \in S(t, x_0)} \sigma_{x_0 \varphi}(v)
				+ \sum_{\varphi \in S(t, x_\qq)} \sigma_{x_\qq \varphi}(v)}
			{\sum_{\varphi \in S(t, x_0)} \sigma_{x_0 \varphi}
			+ \sum_{\varphi \in S(t, x_\qq)} \sigma_{x_\qq \varphi}}.
	\end{align*}
	By joining values~$\sigma_{x_0 \varphi}(v)$ and~$\sigma_{x_\qq \varphi}(v)$ we obtain
	\begin{align*}
		&\sum_{s \in \Vtwo(\Pmax_1)} \gamma(s, t, v)\\
		\overset{~~~~}
		=&~\Big(\tfrac{\Wleft[\xleft_t] \cdot \Pend[t]}
				{\sum_{\varphi \in S(t, x_0)} \sigma_{x_0 \varphi}}
				+ \tfrac{\Pend[\xmid_t] \cdot \Pend[t]}
				{\sum_{\varphi \in S(t, x_0)} \sigma_{x_0 \varphi}
				+ \sum_{\varphi \in S(t, x_\qq)} \sigma_{x_\qq \varphi}}
			\Big) \cdot \sum_{\varphi \in S(t, x_0)} \!\!\!\!\sigma_{x_0 \varphi}(v)
			\stepcounter{equation}\tag{\theequation}\label{eq:bc-two-mp-left}\\
		+&~\Big(\tfrac{\Wright[\xright_t] \cdot \Pend[t]}
				{\sum_{\varphi \in S(t, x_\qq)} \sigma_{x_\qq \varphi}}
				+ \tfrac{\Pend[\xmid_t] \cdot \Pend[t]}
				{\sum_{\varphi \in S(t, x_0)} \sigma_{x_0 \varphi}
				+ \sum_{\varphi \in S(t, x_\qq)} \sigma_{x_\qq \varphi}}
			\Big) \cdot \sum_{\varphi \in S(t, x_\qq)} \!\!\!\!\sigma_{x_\qq \varphi}(v)
			\stepcounter{equation}\tag{\theequation}\label{eq:bc-two-mp-right}\\
		\overset{~~~~}
		\eqqcolon &~X_1 \cdot \sum_{\varphi \in S(t, x_0)} \sigma_{x_0 \varphi}(v)
			+ X_2 \cdot \sum_{\varphi \in S(t, x_\qq)} \sigma_{x_\qq \varphi}(v).
	\end{align*}
	Note that we define~$X_1$ and~$X_2$ to be the terms in the parentheses before the two sums.

	We need to increase the \bc{} of all vertices on shortest paths from~$s$ to~$t$ via~$x_0$ by the value of Term~\eqref{eq:bc-two-mp-left}, and those shortest paths via~$x_\qq$ by the value of Term~\eqref{eq:bc-two-mp-right}.
	By \Cref{lem:brandesalg}, increasing~$\Inc[s, t]$ by some value~$A$ ensures the increment of the \bc{} of~$v$ by~$A \cdot \sigma_{s t}(v)$ for all vertices~$v$ that are on a shortest path between~$s$ and~$t$.
	Hence, increasing~$\Inc[x_0, \varphi]$ for every~$\varphi \in S(t, x_0)$ by~$X_1$
	is equivalent to increasing the \bc{} of~$v$ by the value of Term~\eqref{eq:bc-two-mp-left}.
	Analogously, increasing~$\Inc[x_\qq, \varphi]$ for every~$\varphi \in S(t, x_\qq)$ by~$X_2$
	is equivalent to increasing the \bc{} of~$v$ by the value of Term~\eqref{eq:bc-two-mp-right}.

	We now have incremented~$\Inc[\psi, \varphi]$ for~$\psi \in \{x_0, x_\qq\}$ and~$\varphi \in \{y_0, y_\qr\}$ by certain values, and we have shown that this increment is correct if the shortest~$\psi \varphi$-paths do not visit inner vertices of~$\Pmax_1$ or~$\Pmax_2$.
	We still need to show that
	\begin{inparaenum}[(1)]
		\item increasing~$\Inc[\psi, \varphi]$ does not affect the \bc{} of~$\psi$ or~$\varphi$, and that
		\item we increase~$\Inc[\psi, \varphi]$ only if no shortest~$\psi \varphi$-path visits inner vertices of~$\Pmax_1$ or~$\Pmax_2$.
	\end{inparaenum}

	For the first point, recall that for each~$s, t \in \Vthree(G)$ the \bc{} of~$v \in V(G)$ is increased by~$\Inc[s, t] \cdot \sigma_{s t}(v)$.
	But since~$\sigma_{\psi \varphi}(\psi) = \sigma_{\psi \varphi}(\varphi) = 0$, increments of~$\Inc[\psi, \varphi]$ do not affect the \bc{} of~$\psi$ or~$\varphi$.

	For the second point, suppose that there is a shortest~$\psi \varphi$-path that visits inner vertices of~$\Pmax_2$.
	Let~$\bar{\varphi} \ne \varphi$ be the second endpoint of~$\Pmax_2$.
	Then~$d_G(\psi, \varphi) = d_G(\psi, \bar{\varphi}) + d_G(\bar{\varphi}, \varphi)$, and for all inner vertices~$y_i$ of~$\Pmax_2$, that is, for all~$y_i$ with~$1 \le i < r$, it holds that
	\[ d_G(\psi, \varphi) + d_G(\varphi, y_i) \ =\  d_G(\psi, \bar{\varphi}) + d_G(\bar{\varphi}, \varphi) + d_G(\varphi, y_i) \ >\  d_G(\psi, \bar{\varphi}) + d_G(\bar{\varphi}, y_i). \]
	Hence, there are no shortest~$y_i \psi$-paths that visit~$\varphi$, and consequently~$\Inc[\psi, \varphi]$ will not be incremen\-ted.
	The same argument holds if there is a shortest~$\psi \varphi$-path that visits inner vertices of~$\Pmax_1$.

	Finally, we analyze the running time.
	The values~$\Wleft[\cdot]$,~$\Wright[\cdot]$ and~$\Pend[\cdot]$ as well as the distances and number of shortest paths between all pairs of vertices of degree at least three are assumed to be known.
	With this,~$S(t, x_0)$ and~$S(t, x_\qq)$ can be computed in constant time.
	Hence, the values~$X_1$ and~$X_2$ can be computed in constant time for a fixed~$t \in \Vtwo(\Pmax_2)$.
	Thus, the running time to compute the increments of~$\Inc[\cdot, \cdot]$ is upper-bounded by~$O(|V(\Pmax_2)|)$.
\end{proof}

\subsubsection{Vertices inside the \maxpaths{}}
We now consider how shortest paths between pairs of two \maxpaths{}~$\Pmax_1 \ne \Pmax_2$ affect the \bc{} of their vertices.

When iterating through all pairs~$\Pmax_1 \ne \Pmax_2 \in \AllPmax$, one will encounter the pair~$(\Pmax_1, \Pmax_2)$ and its reverse~$(\Pmax_2, \Pmax_1)$.
Since our graph is undirected, instead of looking at the \bc{} of the vertices in both \maxpaths{}, it suffices to consider only the vertices inside the second \maxpath{} of the pair.
This is shown in the following lemma.
\begin{lemma}
	\label[lemma]{lem:bc-two-mp-inside-two}
	Computing for each~$\Pmax_1 \ne \Pmax_2 \in \AllPmax$ and
	for each vertex~${v \in V(\Pmax_1) \cup V(\Pmax_2)}$
	\begin{equation*}
		\sum_{s \in \Vtwo(\Pmax_1),\, t \in \Vtwo(\Pmax_2)} \gamma(s, t, v)
	\end{equation*}
	is equivalent to computing for every~$\Pmax_1 \ne \Pmax_2 \in \AllPmax$ and for each~$v \in V(\Pmax_2)$
	\begin{equation}
		\label{eq:Xv}
		X_v = 
		\begin{cases}
			\phantom{2 \cdot}
			\displaystyle \sum_{s \in \Vtwo(\Pmax_1), t \in \Vtwo(\Pmax_2)}
				\gamma(s, t, v), & \text{if } v \in V(\Pmax_1) \cap V(\Pmax_2);\\
			2 \cdot
			\displaystyle \sum_{s \in \Vtwo(\Pmax_1), t \in \Vtwo(\Pmax_2)}
				\gamma(s, t, v), & \text{otherwise.}
		\end{cases}
	\end{equation}
\end{lemma}
Since the proof of \cref{lem:bc-two-mp-inside-two} is rather tedious and not special in terms of employed methods, we defer it to \Cref{sec:proofs-two-maxpaths}.

With this at hand we can show how to compute~$X_v$ for each~$v \in V(\Pmax_2)$, for a
pair~$\Pmax_1 \ne \Pmax_2 \in \AllPmax$ of \maxpaths{}.
To this end, we show the following lemma.
\begin{lemma}
	\label[lemma]{lem:bc-two-mp-inside}
	Let~$\Pmax_1 \ne \Pmax_2 \in \AllPmax$.
	Then, given that the values~$d_G(s, t)$, $\sigma_{s t}$, $\Wleft[v]$ and~$\Wright[v]$ are known for~$s, t \in \Vthree(G)$ and~$v \in \Vtwo(G)$, respectively, one can compute for all~$v \in V(\Pmax_2)$ in~$O(|V(\Pmax_2)|)$ time:
	\begin{equation}
		\label{eq:bc-two-mp-inside}
		\sum_{s \in \Vtwo(\Pmax_1),\, t \in \Vtwo(\Pmax_2)} \gamma(s, t, v).
	\end{equation}
\end{lemma}
Again, due to being tedious, we defer the proof of \cref{lem:bc-two-mp-inside} to \Cref{sec:proofs-two-maxpaths-inside}.
The high-level approach has two steps:
First, we show how to compute the value~$\sum_{s \in \Vtwo(\Pmax_1)} \gamma(s, t, v)$ for a fixed~$t \in \Vtwo(\Pmax_2)$ and~$v \in V(\Pmax_2)$ in constant time; here we use that the values listed in the lemma are known.
Second, we use a dynamic program to compute for all~$v \in V(\Pmax_2)$ the value of Sum~\eqref{eq:bc-two-mp-inside} in~$O(|V(\Pmax_2)|)$ time, using the fact that the difference between the sums of two adjacent~$v, v' \in V(\Pmax_2)$ can be computed in constant time.

We are now ready to combine \Cref{lem:bc-two-mp-outside,lem:bc-two-mp-inside-two,lem:bc-two-mp-inside} to prove \Cref{prop:mp-pairs}.
As mentioned above, to keep the proposition simple, we assume that the values~$\Inc[s, t] \cdot \sigma_{s t}(v)$ can be computed in constant time for every~$s, t \in \Vthree(G)$ and~$v \in V(G)$.
In fact, these values are computed in the last step of the algorithm (see \Cref{line:modified-bfs-loop,line:modified-bfs} in \Cref{alg:overview} and \Cref{lem:brandesalg}).
\begingroup
\def\theproposition{\ref{prop:mp-pairs}}
\begin{proposition}[Restated]
	\propmppairs
\end{proposition}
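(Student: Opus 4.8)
The plan is to combine the three preceding lemmata according to whether the vertex~$v$ whose \bc{} is affected lies inside or outside the two \maxpaths{} under consideration, and then to add up the running time over all ordered pairs~$(\Pmax_1, \Pmax_2)$ of distinct \maxpaths{}. First I would fix a vertex~$v \in V(G)$ and split the target sum according to the pair and the location of~$v$: for each ordered pair~$\Pmax_1 \ne \Pmax_2 \in \AllPmax$ we have either~$v \notin V(\Pmax_1) \cup V(\Pmax_2)$ or~$v \in V(\Pmax_1) \cup V(\Pmax_2)$, and these cases are exhaustive and mutually exclusive for that pair. Summing~$\sum_{s \in \Vtwo(\Pmax_1),\, t \in \Vtwo(\Pmax_2)} \gamma(s,t,v)$ over all ordered pairs of the first kind is handled by \cref{lem:bc-two-mp-outside}: for each ordered pair we increment only the four table entries~$\Inc[x_0, y_0]$,~$\Inc[x_\qq, y_0]$,~$\Inc[x_0, y_\qr]$,~$\Inc[x_\qq, y_\qr]$ in~$O(|V(\Pmax_2)|)$ time, and the postprocessing of \cref{lem:brandesalg} (run once from every~$s \in \Vthree(G)$, see \cref{line:modified-bfs-loop,line:modified-bfs} of \cref{alg:overview}) turns these into the correct \bc{} increments for exactly the vertices outside the two paths. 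Here I would reuse the argument from the proof of \cref{lem:bc-two-mp-outside} that~$\Inc[\psi,\varphi]$ is incremented only when no shortest~$\psi\varphi$-path passes through an inner vertex of~$\Pmax_1$ or~$\Pmax_2$, so the postprocessing adds nothing to~$\BC[v]$ for~$v$ an inner vertex of either path, while~$\sigma_{\psi\varphi}(\psi) = \sigma_{\psi\varphi}(\varphi) = 0$ takes care of the endpoints.

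For the ordered pairs with~$v \in V(\Pmax_1) \cup V(\Pmax_2)$, I would invoke \cref{lem:bc-two-mp-inside-two} to avoid doing the work twice: since every unordered pair is met both as~$(\Pmax_1, \Pmax_2)$ and as~$(\Pmax_2, \Pmax_1)$, it suffices, for each ordered pair, to touch only~$v \in V(\Pmax_2)$, adding the value~$X_v$ from \cref{eq:Xv} (which equals~$2\sum_{s,t}\gamma(s,t,v)$, or~$\sum_{s,t}\gamma(s,t,v)$ for the at most two shared endpoints~$v \in V(\Pmax_1) \cap V(\Pmax_2)$) to~$\BC[v]$. Then \cref{lem:bc-two-mp-inside} computes~$\sum_{s \in \Vtwo(\Pmax_1),\, t \in \Vtwo(\Pmax_2)} \gamma(s,t,v)$ for all~$v \in V(\Pmax_2)$ simultaneously in~$O(|V(\Pmax_2)|)$ time from the precomputed distances, shortest-path counts, and the tables~$\Wleft$ and~$\Wright$; multiplying each value by the appropriate factor and checking membership in~$V(\Pmax_1) \cap V(\Pmax_2)$ costs an additional~$O(|V(\Pmax_2)|)$ per ordered pair. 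This is also where I would point out that the loop body of \cref{alg:overview} is schematic: the naive per-pair iteration over all~$v \in V(\Pmax_1) \cup V(\Pmax_2)$ for each~$(s,t)$ is too slow, and the efficient realization is exactly \cref{lem:bc-two-mp-inside-two,lem:bc-two-mp-inside}.

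It remains to total the running time. The precomputations---$d_G(s,\cdot)$ and~$\sigma_{s\cdot}$ for all~$s \in \Vthree(G)$, the~$\Inc$ initializations, and the~$\Wleft, \Wright$ tables---take~$O(kn)$ time by \cref{lem:init-tables} and \cref{line:wleftright,line:compute-wright}. For each ordered pair~$(\Pmax_1, \Pmax_2)$ the combined cost from \cref{lem:bc-two-mp-outside,lem:bc-two-mp-inside} is~$O(|V(\Pmax_2)|)$, so summing over all ordered pairs gives~$\sum_{\Pmax_2 \in \AllPmax} (|\AllPmax|-1)\,|V(\Pmax_2)| \le |\AllPmax| \cdot \sum_{\Pmax \in \AllPmax} |V(\Pmax)|$. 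By \cref{lem:fen} we have~$|\AllPmax| \le \min\{n, 3k\} \le 3k$, and since every degree-two vertex lies in exactly one \maxpath{} (it is not a cycle, as the graph is biconnected and has a vertex of degree at least three) and every \maxpath{} has two endpoints, $\sum_{\Pmax} |V(\Pmax)| \le |\Vtwo(G)| + 2|\AllPmax| \le n + 2n = 3n$; hence the total is~$O(kn)$. Finally, the postprocessing of \cref{lem:brandesalg} from all~$s \in \Vthree(G)$ (with constant~$\tau$, as~$\Inc[\cdot,\cdot]$ is precomputed) runs in~$O(\min\{k,n\}\cdot(n+k)) = O(kn)$ time, which also discharges the constant-time assumption on~$\Inc[s,t]\cdot\sigma_{st}(v)$ used by \cref{lem:bc-two-mp-outside}.

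The step I expect to be the main obstacle is not any single calculation but the global bookkeeping in the correctness argument: one must verify that the~$\Inc$-based increments for vertices outside the two paths and the explicit~$\BC$-increments for vertices inside the paths are disjoint and jointly exhaustive over all ordered pairs, and, in particular, that a vertex which is a shared endpoint of two \maxpaths{} is counted with exactly the right multiplicity. This is precisely what \cref{lem:bc-two-mp-inside-two} is built to handle, but it has to be applied carefully in tandem with the~$\sigma_{\psi\varphi}(\psi) = 0$ observation and the "no inner vertex on a shortest~$\psi\varphi$-path" argument so that no contribution is lost and none is double-counted.
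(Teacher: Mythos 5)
Your proposal is correct and follows essentially the same route as the paper's proof: it combines \cref{lem:bc-two-mp-outside} for vertices outside the two \maxpaths{} (via the $\Inc$ table and the postprocessing of \cref{lem:brandesalg}) with \cref{lem:bc-two-mp-inside-two,lem:bc-two-mp-inside} for vertices inside, and charges $O(|V(\Pmax_2)|)$ per ordered pair to obtain the $O(kn)$ bound. Your accounting is in places slightly more explicit than the paper's (e.g.\ the bound $\sum_{\Pmax}|V(\Pmax)| = O(n)$ and the explicit discharge of the constant-time assumption on $\Inc[s,t]\cdot\sigma_{st}(v)$), but the decomposition and the key lemmata are identical.
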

\addtocounter{proposition}{-1}
\endgroup
\begin{proof}
	Let~$\Pmax_1 \ne \Pmax_2 \in \AllPmax$.
	Then, for each~$v \in V(G) = (V(G) \setminus (V(\Pmax_1) \cup V(\Pmax_2)))
	\cup (V(\Pmax_1) \cup V(\Pmax_2))$, we need to compute
	\begin{equation}
		\label{eq:bc-two-mp-sum}
		\sum_{s \in \Vtwo(\Pmax_1), t \in \Vtwo(\Pmax_2)} \gamma(s, t, v).
	\end{equation}
	We first compute in~$O(kn)$ time the values~$d_G(s, t)$ and~$\sigma_{s t}$ for every~$s, t \in \Vthree(G)$, as well as the values~$\Wleft[v]$ and~$\Wright[v]$ for every~$v \in \Vtwo(G)$, see \Crefrange{line:init-tables}{line:compute-wright} in \Cref{alg:overview}.
	By \Cref{lem:bc-two-mp-outside} we then can compute Sum~\eqref{eq:bc-two-mp-sum} in~$O(|V(\Pmax_2)|)$ time for~$v \in V(G) \setminus (V(\Pmax_1) \cup V(\Pmax_2))$.
	Given the values~$\rho_i$ of \Cref{lem:bc-two-mp-inside} we can compute the values~$X_v$ defined in \Cref{eq:Xv} for~$v = y_i \in V(\Pmax_2)$ as follows:
	\begin{equation*}
		X_v = X_{y_i} =
			\begin{cases}
				\rho_i, & \text{if } v \in V(\Pmax_1) \cap V(\Pmax_2);\\
				2 \rho_i, & \text{otherwise.}
			\end{cases}
	\end{equation*}
	This can be done in constant time for a single~$v \in V(\Pmax_2)$; thus it can be done in~$O(|V(\Pmax_2)|)$ time overall.
	Hence, by \Cref{lem:bc-two-mp-inside-two}, we can compute Sum~\eqref{eq:bc-two-mp-sum} for~$V(\Pmax_1) \cup V(\Pmax_2)$ in~$O(|V(\Pmax_2)|)$ time.

	Sum~\eqref{eq:bc-two-mp-sum} must be computed for every
	pair~$\Pmax_1 \ne \Pmax_2 \in \AllPmax$.
	Thus, overall, we require
	\begin{align*}
		&~O\Big(\sum_{\Pmax_1 \ne \Pmax_2 \in \AllPmax} |V(\Pmax_2)|\Big)\\
		=&~O\Big(\sum_{\Pmax_1 \in \AllPmax} 
			\sum_{\substack{\Pmax_2 \in \AllPmax\\ \Pmax_1 \ne \Pmax_2}}
		\big(|\Vtwo(\Pmax_2)| + |\Vthree(\Pmax_2)|\big)\Big)\\
		\stepcounter{equation}\tag{\theequation}\label{eq:bc-mp-inside-time}
		=&~ O\Big(\sum_{\Pmax_1 \in \AllPmax} n \Big) = O(kn)
	\end{align*}
	time, since there are at most~$O(k)$ \maxpaths{} and at most~$n$ vertices in all \maxpaths{} combined.
\end{proof}

\subsection{Paths with endpoints in the same \maxpath{}}
\label{sec:one-maxpath}
We now look at shortest paths starting and ending in a \maxpath{}~$\Pmax = x_0 \dots x_\qq$ and show how to efficiently compute how these paths affect the \bc{} of all vertices in the graph.
The goal is to prove the following:

\begin{proposition}
	\label[proposition]{prop:mp-single}
	In~$O(kn)$ time one can compute the following for all~$v \in V(G)$:
	\begin{equation*}
		\sum_{\substack{s, t \in \Vtwo(\Pmax)\\	\Pmax \in \AllPmax}}\gamma(s, t, v).
	\end{equation*}
\end{proposition}

We start off by noting the following:
\begin{observation}
	Let~$v \in V(G)$ and let~$\Pmax = x_0\ldots x_\qq$ be a \maxpath{}.
	Then
	\[\sum_{\substack{s, t \in \Vtwo(\Pmax)}}\gamma(s, t, v) = \sum_{\substack{i, j \in [1,\qq-1]}}\gamma(x_i, x_j, v) = 2\cdot\sum_{i=1}^{\qq-1}\sum_{j = i + 1} ^{\qq-1}\gamma(x_i, x_j, v).\]
\end{observation}

For the sake of readability we set~$[x_p,x_r] := \{x_p, x_{p+1}, \ldots, x_r\}$, $p<r$.
We will distinguish between two different cases that we then treat separately:
Either~$v \in [x_i,x_j]$ or~$v \in V(G) \setminus [x_i,x_j]$.
We will show that both cases can be solved in overall~$O(|V(\Pmax)|)$ time for~$\Pmax$.
Doing this for all \maxpaths{} results in a time of~$O(\sum_{\Pmax \in \AllPmax} |\Vtwo(\Pmax)|) = O(n)$.
In the calculations we will distinguish between the two main cases---all shortest~$x_ix_j$-paths are fully contained in~$\Pmax$, or all shortest~$x_ix_j$-paths leave~$\Pmax$---and the corner case that there are some shortest paths inside~$\Pmax$ and some that partially leave it.

We will now compute the value for all paths that only consist of vertices in~$\Pmax$, that is, we will compute for each~$x_k$ with~$i<k<j$ the term
\[
	2\cdot\sum_{i=1}^{\qq-1}\sum_{j = i+1}^{\qq-1}\gamma(x_i, x_j, x_k)
\]
with a dynamic program in~$O(|V(\Pmax)|)$ time.
Since~$i<k<j$, by \cref{obs:gamma}, this can be simplified to
\[
	2\cdot\sum_{\substack{i\in[1,\qq-1]\\i<k}}\sum_{\substack{j \in [i+1,\qq-1]\\k<j}}\gamma(x_i, x_j, x_k)
	= 2\cdot\sum_{i\in[1,k-1]}\sum_{j \in [k+1,\qq-1]}\gamma(x_i, x_j, x_k).
\]
\begin{lemma}
	\label[lemma]{lem:v-in-one-maxpath}
	Let~$\Pmax = x_0 \ldots x_\qq$ be a \maxpath{}.
	In~$O(|V(\Pmax)|)$ time, one can compute the following for all~$x_k$ with~$0 \le k \le \qq$:
	\[\alpha_{x_k} \coloneqq 2\cdot\sum_{i\in[1,k-1]}\sum_{j \in [k+1,\qq-1]}\gamma(x_i, x_j, x_k).\]
\end{lemma}
The main idea of the dynamic program is the following:
Given the value of~$\alpha_{x_k}$, one can compute its difference to~$\alpha_{x_{k+1}}$ in constant time, once~$\Wleft, \Wright$ are precomputed (see \Crefrange{line:wleftright}{line:compute-wright} in \cref{alg:overview}).
These tables can be computed in~$O(|V(\Pmax)|)$ time as well.
The proof of \cref{lem:v-in-one-maxpath} is deferred to \cref{sec:proofs-one-maxpath}.

Now we need to show how to compute the value for all paths that (partially) leave~$\Pmax$.
See \cref{fig:v-partially-in-path} for an example of such a path.
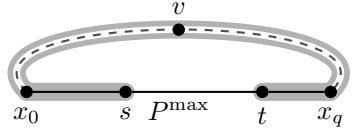
\begin{figure}[t]
	\centering
	\begin{tikzpicture}
		\begin{scope}
			\node (pmax) at (2,-1.00) {$\Pmax$};
			\node[vertex,label=below:{$x_0$}] (x0) at (0,-.75) {};
			\node[vertex,label=below:{$x_q$}] (xq) at (4,-.75) {};

			\node[vertex,label=below:{$s$}] (s) at (1.3,-.75) {};
			\node[vertex,label=below:{$t$}] (t) at (3.1,-.75) {};

			\draw[thick] (t.center) -- (s);
			\draw[thick] (x0) -- (s);
			\draw[thick] (t) -- (xq);

			\node[vertex,fill,black,label=above:{$v$}] at (2, .075) {};
			
			\tikzstyle{path} = [color=black!30,line cap=round, line join=round, line width=2.4mm]
			\begin{pgfonlayer}{background}
				\draw[path] (s.center) -- (x0.center) to[out=135,in=45] (xq.center) -- (t.center);
				\draw[path,opacity=1,line width=0.9mm,color=white,out=135,in=45] (x0.center) to (xq.center);
				\draw[dash pattern=on 3pt off 3pt,thick,black!70,out=135,in=45] (x0.center) to (xq.center);
			\end{pgfonlayer}
		\end{scope}
	\end{tikzpicture}
	\caption{
		\label{fig:v-partially-in-path}
		A \maxpath{} that affects the betweenness centralities of vertces outside of~$\Pmax$, such as~$v$.
		Clearly, if there is a shortest~$st$-path visiting~$v$ (thick edge), then there exists a shortest~$x_0 x_{\qq}$-path visiting~$v$ (dashed edge).
		On an intuitive level, we store the information of the vertices inside of~$\Pmax$ in the table entry~$\Inc[x_0, x_{\qq}]$.
	}
\end{figure}

\begin{lemma}
	\label[lemma]{lem:v-not-within-maxpath}
	Let~$\Pmax = x_0 x_1 \dots x_\qq$ be a \maxpath{}.
	Then, assuming that~$\Inc[s, t] \cdot \sigma_{s t}(v)$ can be computed in constant time for some~$s, t \in \Vthree(G)$, one can compute in~$O(|V(\Pmax)|)$ time the following for all~$v \in V(G) \setminus [x_i, x_j]$:
 	\[\beta_v := \sum_{i\in[1,\qq-1]}\sum_{j \in [i+1,\qq-1]}\gamma(x_i, x_j, v).\]
\end{lemma}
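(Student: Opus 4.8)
The plan is to separate the vertices $v$ for which $\beta_v$ must be accounted into those outside $\Pmax$ — the genuinely global part, which I fold into a single update of $\Inc[x_0,x_\qq]$ and let the postprocessing of \cref{lem:brandesalg} handle — and the vertices of $\Pmax$ itself, which I treat by a direct dynamic program along the path. Abbreviate $d\coloneqq d_G(x_0,x_\qq)$ and $\sigma\coloneqq\sigma_{x_0x_\qq}$; since $x_0,x_\qq\in\Vthree(G)$, both are available from the preprocessing of \cref{lem:init-tables}, as are the tables $\Wleft$ and $\Wright$. For a pair $1\le i<j\le\qq-1$ write $d_{\text{in}}=j-i$ and $d_{\text{out}}=i+d+\qq-j$. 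First I would record two structural facts: (i) if $d<\qq$, then since the inner vertices of $\Pmax$ have degree two, the only $x_0 x_\qq$-path through any of them is $\Pmax$ itself, which has length $\qq>d$; hence $\sigma$ counts exactly the ``external'' $x_0 x_\qq$-paths, and likewise $\sigma_{x_0x_\qq}(v)$ counts the external ones through $v$ whenever $v\notin V(\Pmax)$. (ii) A shortest $x_i x_j$-path that leaves $\Pmax$ has the form $x_i x_{i-1}\dots x_0\cdot Q\cdot x_\qq x_{\qq-1}\dots x_j$ with $Q$ an external $x_0 x_\qq$-path (leaving via $x_\qq$ first would be longer, as $i<j$). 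Combined with \cref{eq:onepathsigma}, this gives, for $v\notin V(\Pmax)$, that $\gamma(x_i,x_j,v)$ equals $\Pend[x_i]\Pend[x_j]\cdot\sigma_{x_0x_\qq}(v)/\sigma$ when $d_{\text{out}}<d_{\text{in}}$, equals $\Pend[x_i]\Pend[x_j]\cdot\sigma_{x_0x_\qq}(v)/(\sigma+1)$ when $d_{\text{out}}=d_{\text{in}}$, and is $0$ otherwise; summing over all pairs yields $\beta_v=A\cdot\sigma_{x_0x_\qq}(v)$ with
\[
A \;=\; \frac1\sigma\!\!\sum_{\substack{1\le i<j\le\qq-1\\ d_{\text{out}}<d_{\text{in}}}}\!\!\Pend[x_i]\,\Pend[x_j]
\;+\;\frac1{\sigma+1}\!\!\sum_{\substack{1\le i<j\le\qq-1\\ d_{\text{out}}=d_{\text{in}}}}\!\!\Pend[x_i]\,\Pend[x_j].
\]

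By \cref{eq:onepathmid} the condition $d_{\text{out}}<d_{\text{in}}$ is just $j>i^+_{\Mid}$ and $d_{\text{out}}=d_{\text{in}}$ is $j=i^+_{\Mid}$, so each of the two sums in $A$ has the shape $\sum_i\Pend[x_i]\cdot(\text{a suffix sum of }\Pend\text{ along }\Pmax)$, plus for the second sum the single boundary term $\Pend[x_i]\Pend[x_{i^+_{\Mid}}]$ when $i^+_{\Mid}$ is integral and in range — exactly the bookkeeping already used in the proof of \cref{lem:v-in-one-maxpath}. Using $\Wleft$ and $\Wright$ this evaluates in constant time per $i$, so $A$ is computed in $O(|V(\Pmax)|)$ time. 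I then increment $\Inc[x_0,x_\qq]$ by $A$. By \cref{lem:brandesalg}, the postprocessing (\crefrange{line:modified-bfs-loop}{line:modified-bfs} of \cref{alg:overview}) adds $A\cdot\sigma_{x_0x_\qq}(v)=\beta_v$ to $\BC[v]$ for every $v$ on a shortest $x_0 x_\qq$-path; in particular this is correct for all $v\notin V(\Pmax)$, and it adds nothing to $\BC[v]$ for the inner vertices of $\Pmax$ (for which $\sigma_{x_0x_\qq}(v)=0$ whenever $A\neq0$, since $A\neq0$ forces $d<\qq$ and then fact (i) applies) nor for $x_0$ and $x_\qq$. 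Thus this one update realizes — and only realizes — the $v\notin V(\Pmax)$ part of $\beta$.

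It remains to handle $v=x_\ell$, $0\le\ell\le\qq$, where we add $\beta_{x_\ell}$ to $\BC[x_\ell]$ directly. Only pairs $(i,j)$ with $\ell\notin\{i,\dots,j\}$ contribute, and for such a pair the external $x_i x_j$-path (which exists iff $d_{\text{out}}\le d_{\text{in}}$) runs through $x_\ell$; by \cref{eq:onepathsigma} the contribution of $(i,j)$ is $\Pend[x_i]\Pend[x_j]$ if $d_{\text{out}}<d_{\text{in}}$, $\tfrac{\sigma}{\sigma+1}\Pend[x_i]\Pend[x_j]$ if $d_{\text{out}}=d_{\text{in}}$, and $0$ otherwise. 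For $\ell\in\{0,\qq\}$ the restriction $\ell\notin\{i,\dots,j\}$ is vacuous, so $\beta_{x_0}=\beta_{x_\qq}$ equals the value $A$ with the factors $1/\sigma$ and $1/(\sigma+1)$ replaced by $1$ and $\sigma/(\sigma+1)$ — again computed in $O(|V(\Pmax)|)$ time. For $1\le\ell\le\qq-1$ I would run a left-to-right dynamic program: going from $\ell$ to $\ell+1$, exactly the pairs $(\ell+1,j)$ with $j>\ell+1$ leave the admissible set and exactly the pairs $(i,\ell)$ with $i<\ell$ enter it, and each of these two correction sums — having one coordinate fixed — is a one-dimensional $\Pend$-interval sum with a single ``mid-element'' correction, hence evaluable in $O(1)$ time via $\Wleft$ and $\Wright$. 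Starting from $\beta_{x_0}$ this produces all $\beta_{x_\ell}$ in $O(|V(\Pmax)|)$ total time.

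Summing up, computing $A$ costs $O(|V(\Pmax)|)$, the dynamic program for the path vertices costs $O(|V(\Pmax)|)$, and the (globally $O(kn)$-bounded) postprocessing of \cref{lem:brandesalg} is charged separately, matching the claim; the factor-$2$ symmetry between $\sum_{s,t\in\Vtwo(\Pmax)}$ and $\sum_{i<j}$ is absorbed when the pieces are assembled in \cref{prop:mp-single}. I expect the crux to be the third step: keeping the membership condition $\ell\notin\{i,\dots,j\}$ compatible with the three-way split on $\operatorname{sign}(d_{\text{out}}-d_{\text{in}})$ so that the $\ell\to\ell+1$ transition stays $O(1)$ (including the integral-mid-element corner case), and verifying that the direct updates to $\BC[x_\ell]$ and the $\Inc[x_0,x_\qq]$ update never overlap — which is exactly what fact (i) guarantees, namely $\sigma_{x_0x_\qq}(x_\ell)=0$ for inner $x_\ell$ and for $x_0,x_\qq$.
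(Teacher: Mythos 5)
Your proposal is correct and follows essentially the same route as the paper: fold the $v\notin V(\Pmax)$ contribution into a single $O(|V(\Pmax)|)$-time update of $\Inc[x_0,x_\qq]$ (justified by the observation that $\sigma_{x_0x_\qq}(v)=0$ for vertices of $\Pmax$ whenever the update is nonzero), and compute $\beta_{x_\ell}$ for the path vertices by a left-to-right dynamic program whose $O(1)$ transitions are one-dimensional $\Pend$-interval sums with a mid-element boundary correction evaluated via $\Wleft$ and $\Wright$. Your handling of the base case ($\beta_{x_0}$ as the pair sum with weights $1$ and $\sigma/(\sigma+1)$) and of the deferred factor~$2$ is, if anything, slightly more careful than the paper's.
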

Note that in the postprocessing (see \Cref{line:modified-bfs-loop,line:modified-bfs} in \Cref{alg:overview}) the \bc{} value of each vertex~$v$ is increased by~$\Inc[s, t] \cdot \sigma_{s t}(v)$ for each pair of vertices~$s, t \in \Vthree(G)$.
The proof is split into two cases:
Either~$v \in V(G) \setminus V(\Pmax)$, or~$v \in V(\Pmax) \setminus [x_i, x_j]$ (the case that~$v \in [x_i, x_j]$ is covered by \cref{lem:v-in-one-maxpath}).
The first case makes use of the postprocessing step (see \Crefrange{line:modified-bfs-loop}{line:modified-bfs} in \cref{alg:overview}) which was used in an analogous way in the proof of \cref{lem:bc-two-mp-outside}, while the second case uses a dynamic programming approach similar to the one used in the proof of \cref{lem:v-in-one-maxpath}.
The proof details can be found in \Cref{sec:proofs-one-maxpath-not-within}.


\subsection{Postprocessing and algorithm summary}
\label{sec:postproc-summary}
We are now ready to combine all parts and prove our main theorem.
\begin{theorem}\label{thm:main}
	\bctask{} can be solved in~$O(k n)$ time and space, where~$k$ is the feedback edge number of the input graph.
\end{theorem}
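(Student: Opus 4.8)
The plan is to assemble the pieces already developed in the paper and to verify that each stage runs within the claimed $O(kn)$ bound. First I would reduce to the biconnected case: by \Cref{rr:deg-one-vertices} we delete all degree-one vertices (and the pending trees they generate) in $O(n+m)$ total time, and by \Cref{lem:split} we split at every cut vertex, so that it suffices to solve \textsc{Weighted Betweenness Centrality} on each vertex-weighted biconnected component independently and then fold the results back. The overhead of these reductions and of the bookkeeping in \Cref{lem:split} is linear in $n+m \le n + (n-1+k) = O(n+k)$, hence $O(kn)$ overall; moreover, summing the per-component bounds is fine because the component sizes add up to $O(n+m)$ and each component's feedback edge number is at most $k$. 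So the core task is to prove that \Cref{alg:overview} correctly computes the betweenness centrality on a vertex-weighted biconnected graph in $O(kn)$ time and space.

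Next I would handle the two branches of \Cref{alg:overview}. If $|F|=1$ the graph is a cycle, and \Cref{prop:pendingcycles} gives correctness and an $O(n)$ bound directly. Otherwise, I would justify correctness by appealing to \Cref{obs:split-bc-sum}: the four summands there partition $\sum_{s,t}\gamma(s,t,v)$ according to whether the endpoints have degree $\ge 3$ or lie in \maxpaths{}. The first two summands are computed by the modified-BFS loop in \Crefrange{line:modified-bfs-loop}{line:modified-bfs} once the table $\Inc[\cdot,\cdot]$ has been correctly populated, invoking \Cref{lem:brandesalg} with $f(s,t) = \Inc[s,t]$ (a constant-time lookup, so $\tau = O(1)$): for each fixed $s \in \Vthree(G)$ this costs $O(n+m) = O(n+k)$ time, and there are $O(\min\{k,n\})$ such vertices by \Cref{lem:fen}, giving $O(kn)$. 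I would need to check that the initialization in \Crefrange{line:init-inc-table-vtwo}{line:init-inc-table} together with the updates in \Crefrange{line:inc-update-two-paths}{line:inc-update-one-path} leaves $\Inc[s,t]$ equal to $f(s,t)$ summed over exactly the contributions of the third and fourth summands in which the final path-traversing vertex $v$ lies outside the relevant \maxpath(s)---this is where the $\Wleft,\Wright$ tables enter, letting a path through inner degree-two vertices be simulated by temporarily inflating the $\Pend$-value of a \maxpath{} endpoint. The contributions with $v$ inside a \maxpath{} are added directly to $\BC[v]$ in the inner \texttt{foreach} loops over $V(\Pmax_1)\cup V(\Pmax_2)$ and $V(\Pmax_1)$, so no path--vertex pair is double counted or missed; \Cref{obs:gamma} is what lets us symmetrize and handle the second summand via the first.

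Then I would do the running-time accounting for the \maxpath{} processing. Computing $\AllPmax$ is $O(n+m)$ by \Cref{lem:compute-maximal-pendcycles}; the precomputation loop of \Crefrange{line:init-tables}{line:init-inc-table} is $O(kn)$ by \Cref{lem:init-tables}; initializing $\Wleft,\Wright$ in \Cref{line:wleftright} is $O(n)$ since the \maxpaths{} are vertex-disjoint on their inner vertices. The double loop over pairs $(\Pmax_1,\Pmax_2)$ of \maxpaths{} (\Cref{line:maxpaths-loop}) runs $O(k^2)$ iterations by \Cref{lem:fen}; each iteration does $O(1)$ work on $\Inc$ plus $O(|V(\Pmax_1)|+|V(\Pmax_2)|)$ work updating $\BC[v]$ for $v$ in the two \maxpaths{}. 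The latter is the only term needing care: summing $|V(\Pmax_1)|+|V(\Pmax_2)|$ over all $O(k^2)$ ordered pairs gives $O(k \cdot \sum_{\Pmax}|V(\Pmax)|) = O(k(n+k)) = O(kn)$, since the total length of all \maxpaths{} is $O(n+m) = O(n+k)$. Finally the space bound: the dominant table is $\Inc[\cdot,\cdot]$ restricted to $s \in \Vthree(G)$, of size $O(\min\{k,n\}\cdot n) = O(kn)$, together with the $O(m)=O(n+k)$ BFS structures reused across iterations of the modified-BFS loop; everything else is $O(n+k)$.

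The main obstacle I expect is not the time bookkeeping above but rather the precise correctness argument for how the four summands of \Cref{obs:split-bc-sum} are realized by the interplay of the direct $\BC$-updates (for path--vertices inside \maxpaths{}) and the deferred $\Inc$-based updates (for path--vertices outside), in particular showing that the $\Wleft/\Wright$-based ``endpoint inflation'' trick exactly reproduces $\sum_t \Pend[s']\Pend[t]\,\sigma_{st}(v)/\sigma_{st}$ where $s'$ ranges over inner degree-two vertices of a \maxpath{} and $v$ is outside it; the detailed verification of this is what the paper defers to \Cref{sec:one-maxpath,sec:two-maxpaths}, and the proof of \Cref{thm:main} in \Cref{sec:postproc-summary} should cite those results rather than re-derive them.
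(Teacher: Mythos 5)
Your proposal is correct and follows essentially the same route as the paper's own proof: split $C_B(v)$ via \Cref{obs:split-bc-sum}, handle cycles with \Cref{prop:pendingcycles}, defer the two \maxpath{} summands to \Cref{prop:mp-pairs,prop:mp-single}, and fold the deferred $\Inc$-contributions into the modified BFS of \Cref{lem:brandesalg} from the $O(\min\{k,n\})$ vertices of degree at least three guaranteed by \Cref{lem:fen}. The only difference is that you additionally spell out the reduction to weighted biconnected components via \Cref{rr:deg-one-vertices} and \Cref{lem:split}, which the paper handles earlier as a standing assumption; this is a welcome completeness check rather than a deviation.
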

\begin{proof}
	As shown in \cref{prop:pendingcycles}, if the input graph~$G$ is a cycle, then we are done.

	We show that \cref{alg:overview} computes the value
	\begin{align*}
		C_B(v) = \sum_{s,t \in V(G)} \Pend[s] \cdot \Pend[t] \cdot \frac{\sigma_{st}(v)}{\sigma_{st}} = \sum_{s,t \in V(G)} \gamma(s,t,v)
	\end{align*}
	for all~$v \in V(G)$ in~$O(kn)$ time and space.
	We use \Cref{obs:split-bc-sum} to split the sum as follows.
	\begin{align*}
		\sum_{s,t \in V(G)}
		\gamma(s, t, v)
		&= \sum_{s \in \Vthree(G),\, t \in V(G)}
		\gamma(s, t, v)
		+ \sum_{s \in \Vtwo(G),\, t \in \Vthree(G)}
		\gamma(t, s, v)\\&
		+ \sum_{\substack{
			s \in \Vtwo(\Pmax_1),\, t \in \Vtwo(\Pmax_2)\\
			\Pmax_1 \ne \Pmax_2 \in \AllPmax}}
		\gamma(s, t, v)
		+ \sum_{\substack{
			s, t \in \Vtwo(\Pmax)\\
			\Pmax \in \AllPmax}}
		\gamma(s, t, v).
	\end{align*}

	By \Cref{prop:mp-pairs,prop:mp-single}, we can compute the third and fourth summand in~$O(kn)$ time provided that~$\Inc[s, t] \cdot \sigma_{s t}(v)$ is computed for every~$s, t \in \Vthree(G)$ and every~$v \in V(G)$ in a postprocessing step (see \Crefrange{line:maxpaths-loop}{line:inc-update-one-path}). 
	We incorporate this postprocessing into the computation of the first two summands in the equation, that is, we next show that for all~$v \in V(G)$ the following value can be computed in~$O(kn)$ time:
	\[\sum\limits_{\substack{s \in \Vthree(G)\\ t \in V(G)}}\gamma(s, t, v) + \sum\limits_{\substack{s \in \Vtwo(G)\\ t \in \Vthree(G)}}\gamma(s, t, v) + \sum\limits_{\substack{s \in \Vthree(G)\\ t \in \Vthree(G)}}\Inc[s,t]\cdot\sigma_{st}(v).\]
	To this end, observe that the above is equal to
	\begin{align*}
		& \smashoperator[r]{\sum\limits_{\substack{s \in \Vthree(G)\\ t \in V(G)}}}\Pend[s] \Pend[t]\tfrac{\sigma_{st}(v)}{\sigma_{st}} + \smashoperator[r]{\sum\limits_{\substack{s \in \Vthree(G)\\ t \in \Vtwo(G)}}}\Pend[s] \Pend[t]\tfrac{\sigma_{st}(v)}{\sigma_{st}}
		+ \smashoperator[r]{\sum\limits_{\substack{s \in \Vthree(G)\\ t \in \Vthree(G)}}}\Inc[s,t]\sigma_{st}(v)\\
		= & \sum\limits_{s \in \Vthree(G)} \!\Big ( (2 \cdot \smashoperator[r]{\sum\limits_{t \in \Vtwo(G)}} \Pend[s] \Pend[t]\tfrac{\sigma_{st}(v)}{\sigma_{st}}) + \smashoperator[r]{\sum\limits_{t\in \Vthree}} \sigma_{st}(v)  (\tfrac{\Pend[s] \Pend[t]}{\sigma_{st}} + \Inc[s,t]) \Big ).
	\end{align*}
	Note that we initialize~$\Inc[s,t]$ in \Cref{line:init-inc-table,line:init-inc-table-vtwo} in \Cref{alg:overview} with~$2\cdot \Pend[s] \Pend[t] / \sigma_{st}$ and~$\Pend[s] \cdot\allowbreak \Pend[t]/ \sigma_{st}$ respectively.
	Thus we can use the algorithm described in \Cref{lem:brandesalg} for each vertex~$s\in \Vthree(G)$ with~$f(s,t) = \Inc[s,t]$.

	Since~$\Pend[s],\, \Pend[t],\, \sigma_{st}$ and~$\Inc[s,t]$ can all be looked up in constant time, the algorithm only takes~$O(n+m)$ time for each vertex~$s$ (see \Cref{line:modified-bfs-loop,line:modified-bfs}).
	By \Cref{lem:fen} there are~$O(\min \{k,n\})$ vertices of degree at least three.
	Thus, altogether, the algorithm needs~$O(\min\{n,k\} \cdot m) = O(\min\{n,k\} \cdot (n + k)) = O(kn)$ time.
	The precomputations in \Crefrange{line:init-tables}{line:init-inc-table} require~$\Theta(kn)$ space.
	As the running time is an upper bound on the space complexity, \cref{alg:overview} requires~$\Theta(kn)$ space overall.
\end{proof}

\section{Conclusion}
\label{sec:conclusion}
Lifting the processing of degree-one vertices due to \citet{BGPL12} to a technically much more involved processing of degree-two vertices, we derived a new algorithm for \textsc{Betweenness Centrality} running in $O(kn)$ worst-case time ($k$~is the feedback edge number of the input graph). 
Our work focuses on algorithm theory and contributes to the field of adaptive algorithm design \cite{CastroW92} as well as to the recent ``FPT in P'' field \cite{GMN17}.
It would be of high interest to identify structural parameterizations ``beyond'' the feedback edge number that might help to get more results in the spirit of our work.
In particular, extending our algorithmic approach and mathematical analysis with respect to the treatment of twin vertices~\cite{PEZDB14,SKSC17} might help to get a running time bound involving the vertex cover number of the input graph.
As for practical relevance, we firmly believe that a running time of~$O(kn)$ as we proved can yield improved performance for some real-world networks.
What remains unclear, however, is whether the constants hidden in the~$O$-notation or the non-linear space requirements of our approach can be avoided.

\bibliographystyle{plainnat}
\bibliography{../bib}

\clearpage
\appendix

\section{Notation for proofs in appendix}
For the following proofs we will introduce a lot of auxiliary notation.
We provide \cref{tab:allnotation} as a reference to the definitions of the notations.
\begin{table}[h!]
	\centering
	\caption{A reference to the notation used in the Appendix.\newline
		We assume~$\Pmax = \Pmax_1 = x_0 \dots x_\qq$ and~$\Pmax_2 = y_0 \dots y_\qr$.
	}
	\label{tab:allnotation}
	\begin{tabular}{l r p{9.01cm}}
		\toprule
		Symbol & & Definition\\
		\midrule
		$\gamma_{st}(v)$	&$=$& $\Pend[s] \cdot \Pend[t] \cdot \sigma_{st}(v) / \sigma_{st}$;\\
		$\Vtwo(G)$		&   & the set of vertices of degree two in~$G$;\\
		$\Vthree(G)$		&   & the set of vertices of degree at least three in~$G$;\\
		$\Inc[\cdot, \cdot]$	&   & a table of size~$|\Vthree(G)| \times |V(G)|$ in which intermediary \bc{} values are stored;\\
		$\AllPmax$		&   & the set of all \maxpaths{};\\
		$\xleft_t$		&   & the rightmost vertex in~$\Pmax_1$ such that all shortest paths from~$t \in V(G - \Pmax_1)$ to~$\xleft_t$ visit~$x_0$;\\
		$\xright_t$		&   & the leftmost vertex in~$\Pmax_1$ such that all shortest paths from~$t \in V(G - \Pmax_1)$ to~$\xright_t$ visit~$x_\qq$;\\
		$\xmid_t$		&   & the vertex in~$\Pmax_1$ such that there are shortest paths from~$t \in V(G - \Pmax_1)$ to~$\xleft_t$ via~$x_0$ and~$x_\qq$ respectively;\\
		$\Xleft_t$		&$=$& $\{x_1, x_2, \dots, \xleft_t\}$;\\
		$\Xright_t$		&$=$& $\{\xright_t, \dots x_{\qq - 2}, x_{\qq - 1}\}$;\\
		$\Wleft[x_k]$		&$=$& $\sum_{i=0}^k \Pend[x_i]$, where~$x_i \in \Pmax_1$;\\
		$\Wright[x_k]$		&$=$& $\sum_{i=k}^{\qq-1} \Pend[x_i]$, where~$x_i \in \Pmax_1$;\\
		$S(t, \psi)$		&   & for~$\psi \in \{x_0, x_{\qq} \} = \Vthree(|\Pmax_1|)$, the maximal subset of~$\{y_0, y_{\qr}\} = \Vthree(|\Pmax_2|)$ such that for each~$\varphi \in S(t, \psi)$ there is a shortest~$st$-path via~$\psi$ and~$\varphi$;\\
		$X_v$			&   & see \Cref{eq:Xv};\\
		$\lambda(y_k, y_i)$	&$=$& $\sum_{s \in \Vtwo(\Pmax_1)} \gamma(s, y_k, y_i)$,\\
		&& for $0 \le i \le r$, $1 \le k < r$, and~$s \in \Vtwo(\Pmax_1)$;\\
		$\eta(y_k, \varphi, \psi)$	&   & is $1$ if there is a shortest path from~$y_k$ to~$\psi \in \{x_0, x_{\qq}\}$ to~$\varphi \in \{y_0, y_{\qr}\}$, $0$~otherwise;\\
		$\omega_i$		&   & for~$0 < k, i < \qr$, $y_r$ if~$k < i$, $y_0$ if $k > i$;\\
		$\kappa(y_k, \omega_i)$	&   & see \Cref{eq:kappa};\\
		$\rho_i$		&$=$& $\displaystyle\sum_{k=1}^{i-1} \kappa(y_k, y_\qr) + \sum_{k = i+1}^{\qr-1} \kappa(y_k, y_0)$, for~$0 < i < \qr$;\\
		$[x_i, x_j]$		&$=$& $\{x_i, x_{i+1}, \dots x_j\}$ for~$0 \le i < j \le q$;\\
		$i^+_{\Mid}$		&$=$& $i + (d_G(x_0,x_\qq) + \qq)/{2}$, where~$0 < i < \qq$;\\
		$j^-_{\Mid}$		&$=$& $j - (d_G(x_0,x_\qq) + \qq)/{2}$; where~$0 < j < \qq$;\\
		$\alpha_k$		&$=$& $2 \cdot \sum_{i \in [1, k-1]} \sum_{j \in [k+1, \qq-1]} \gamma(x_i, x_j, x_k)$, where~$0 \le k \le \qq$;\\
		$\beta_v$		&$=$& $\sum_{i \in [1, \qq-1]} \sum_{j \in [i+1, \qq-1]} \gamma(x_i, x_j, v)$,\\
		&& where~$v \in V(G) \setminus [x_i, x_j]$.\\
		\bottomrule
	\end{tabular}
\end{table}

\section{Proofs of Lemmata 7 and 8}
\subsection{Proof of Lemma 7}
\label{sec:proofs-two-maxpaths}

{
\def\thelemma{\ref{lem:bc-two-mp-inside-two}}
\begin{lemma}[Restated] 
	Computing for every pair~$\Pmax_1 \ne \Pmax_2 \in \AllPmax$ and
	for each vertex~${v \in V(\Pmax_1) \cup V(\Pmax_2)}$
	\begin{equation}
		\label{eq:A:bc-two-mp-inside-two}
		\sum_{s \in \Vtwo(\Pmax_1),\, t \in \Vtwo(\Pmax_2)} \gamma(s, t, v)
	\end{equation}
	is equivalent to computing for every~$\Pmax_1 \ne \Pmax_2 \in \AllPmax$ and for each~$v \in V(\Pmax_2)$
	\begin{equation}
		\label{eq:A:Xv}
		X_v = 
		\begin{cases}
			\phantom{2 \cdot}
			\displaystyle \sum_{s \in \Vtwo(\Pmax_1), t \in \Vtwo(\Pmax_2)}
				\gamma(s, t, v), & \text{if } v \in V(\Pmax_1) \cap V(\Pmax_2);\\
			2 \cdot
			\displaystyle \sum_{s \in \Vtwo(\Pmax_1), t \in \Vtwo(\Pmax_2)}
				\gamma(s, t, v), & \text{otherwise.}
		\end{cases}
	\end{equation}
\end{lemma}
\addtocounter{lemma}{-1}
}
\begin{proof}
	We will first assume that~$V(\Pmax_1) \cap V(\Pmax_2) = \emptyset$ for every~$\Pmax_1 \ne \Pmax_2 \in \AllPmax$, and will discuss the special case~$V(\Pmax_1) \cap V(\Pmax_2) \ne \emptyset$ afterwards.

	For every fixed~$\{\Pmax_1, \Pmax_2\} \in \binom{\AllPmax}{2}$ and for every~$v \in V(\Pmax_2)$, the \bc{} of~$v$ is increased by
	\begin{align*}
		& \sum_{s \in \Vtwo(\Pmax_1), t \in \Vtwo(\Pmax_2)} \gamma(s, t, v)
		+ \sum_{s \in \Vtwo(\Pmax_2), t \in \Vtwo(\Pmax_1)} \gamma(s, t, v),
	\end{align*}
	and by \Cref{obs:gamma} this is equal to
	\begin{align}
		\label{eq:A:bc-two-mp-inside-factor}
		2 \cdot \sum_{s \in \Vtwo(\Pmax_1), t \in \Vtwo(\Pmax_2)} \gamma(s, t, v)
	\end{align}
	Analogously, for every~$w \in V(\Pmax_1)$, the \bc{} of~$v$ is increased by
	\begin{align*}
		2 \cdot	\sum_{s \in \Vtwo(\Pmax_1), t \in \Vtwo(\Pmax_2)} \gamma(s, t, w).
	\end{align*}
	Thus, computing Sum~\eqref{eq:A:bc-two-mp-inside-factor} for~$v \in V(\Pmax_2)$ for every
	pair~$\Pmax_1 \ne \Pmax_2 \in \AllPmax$ is equivalent to computing
	Sum~\eqref{eq:A:bc-two-mp-inside-two} for~$v \in V(\Pmax_1) \cup V(\Pmax_2)$ for every
	pair~$\Pmax_1 \ne \Pmax_2 \in \AllPmax$,
	since when iterating over pairs of \maxpaths{} we will encounter both the
	pairs~$(\Pmax_1, \Pmax_2)$ and~$(\Pmax_2, \Pmax_1)$.

	Consider now the special case that there exists a vertex~$v \in V(\Pmax_1) \cap V(\Pmax_2)$.
	Note that this vertex can only be endpoints of~$\Pmax_1$ and~$\Pmax_2$, and it is covered once when performing the computations for~$(\Pmax_1, \Pmax_2)$, and once when performing the computations for~$(\Pmax_2, \Pmax_1)$.
	Hence, we are doing computations twice.
	We compensate for this by increasing the \bc{} of~$v$ only by
	\begin{align*}
		\sum_{s \in \Vtwo(\Pmax_1), t \in \Vtwo(\Pmax_2)}
		\gamma(s, t, v)
	\end{align*}
	for all~$\Pmax_1 \ne \Pmax_2$, for vertices~$v \in V(\Pmax_1) \cap V(\Pmax_2)$.
\end{proof}

\subsection{Proof of Lemma 8}
\label{sec:proofs-two-maxpaths-inside}
{
\def\thelemma{\ref{lem:bc-two-mp-inside}}
\begin{lemma}[Restated] 
	Let~$\Pmax_1 \ne \Pmax_2 \in \AllPmax$.
	Then, given that the values~$d_G(s, t)$, $\sigma_{s t}$, $\Wleft[v]$ and~$\Wright[v]$ are known for~$s, t \in \Vthree(G)$ and~$v \in \Vtwo(G)$, respectively, one can compute for all~$v \in V(\Pmax_2)$ in~$O(|V(\Pmax_2)|)$ time:
	\begin{equation}
		\label{eq:A:bc-two-mp-inside}
		\sum_{s \in \Vtwo(\Pmax_1),\, t \in \Vtwo(\Pmax_2)} \gamma(s, t, v).
	\end{equation}
\end{lemma}
\addtocounter{lemma}{-1}
}
\begin{proof}
	We first show how to compute~$\sum_{s \in \Vtwo(\Pmax_1)} \gamma(s, t, v)$ for fixed~$t \in \Vtwo(\Pmax_2)$ and~$v \in V(\Pmax_2)$ in constant time when the values listed above are known.
	Then we present a dynamic program that computes for all~$v \in V(\Pmax_2)$ the value of Sum~\eqref{eq:A:bc-two-mp-inside} in~$O(|V(\Pmax_2)|)$ time.

	Let~$\Pmax_1 = x_0 \dots x_\qq$ and let~$\Pmax_2 = y_0 \dots y_\qr$.
	For~$v = y_i,\,0 \le i \le \qr$, we compute
	\begin{align*}
		\sum_{s \in \Vtwo(\Pmax_1),\, t \in \Vtwo(\Pmax_2)} \gamma(s, t, y_i)
		&= \sum_{s \in \Vtwo(\Pmax_1)}
			\sum_{k=1}^{\qr-1} \gamma(s, y_k, y_i)\\
		&= \sum_{k=1}^{\qr-1}
			\sum_{s \in \Vtwo(\Pmax_1)} \gamma(s, y_k, y_i).
		\stepcounter{equation}\tag{\theequation}\label{eq:bc-mp-inside-base}
	\end{align*}
	For easier reading, we define for~$0 \le i \le \qr$ and for~$1 \le k < \qr$
	\begin{align*}
		\lambda(y_k, y_i) = \sum_{s \in \Vtwo(\Pmax_1)} \gamma(s, y_k, y_i).
	\end{align*}
	Recall that all shortest paths from~$y_k$ to~$s \in \Xleft_{y_k}$ visit~$x_0$ and all shortest paths from~$y_k$ to~$s \in \Xright_{y_k}$ visit~$x_\qq$.
	Recall also that for each~$y_k$ there may exist a unique vertex~$\xmid_{y_k}$ to which there are shortest paths via~$x_0$ and via~$x_\qq$.

	With this at hand, we have
	\begin{align*}
		\lambda(y_k, y_i)
	 	&= \gamma(\xmid_{y_k}, y_k, y_i) 
			+ \sum_{s \in \Xleft_{y_k}} \gamma(s, y_k, y_i)
			+ \sum_{s \in \Xright_{y_k}} \gamma(s, y_k, y_i)\\
		&= \Pend[\xmid_{y_k}] \cdot \Pend[y_k] \cdot
			\frac{\sigma_{y_k \xmid_{y_k}}(y_i)}{\sigma_{y_k \xmid_{y_k}}}
		+ \sum_{s \in \Xleft_{y_k}}
			\Pend[s] \cdot \Pend[y_k] \cdot \frac{\sigma_{s y_k}(y_i)}{\sigma_{s y_k}}\\
		&\qquad\qquad\qquad\qquad\qquad + \sum_{s \in \Xright_{y_k}}
			\Pend[s] \cdot \Pend[y_k] \cdot\frac{\sigma_{s y_k}(y_i)}{\sigma_{s y_k}}
		\stepcounter{equation}\tag{\theequation}\label{eq:bc-mp-inside-lambda}\\
	\end{align*}

	Next, we rewrite~$\lambda$ in such a way that we can compute it in constant time.
	To this end, we need to make the values~$\sigma$ independent of~$s$ and~$y_i$.
	To this end, note that if~$k < i$, then~$y_i$ is visited only by shortest paths from~$y_k$ to~$s \in \Vtwo(\Pmax_1)$ that also visit~$y_\qr$.
	If~$k > i$, then~$y_i$ is only visited by paths that also visit~$y_0$.
	Hence, we need to know whether there are shortest paths from~$y_k$ to some endpoint of~$\Pmax_1$ via either~$y_0$ or~$y_\qr$.
	For this we define~$\eta(y_k, \varphi, \psi)$, which, informally speaking, tells us whether there is a shortest path from~$y_k$ to~$\psi \in \{x_0, x_\qq\}$ via~$\varphi \in \{y_0, y_\qr\}$.
	Formally,
	\begin{align*}
		\eta(y_k, \varphi, \psi) = \begin{cases}
			1, & \text{if } d_G(y_k, \varphi) + d_G(\varphi, \psi) = d_G(y_k, \psi);\\
			0, & \text{otherwise.}
		\end{cases}
	\end{align*}
	Since~$d_G(s, t)$ is given for all~$s, t \in \Vthree(G)$, the values~$\eta$ can be computed in constant time.

	We now show how to compute~$\sigma_{s y_k}(y_i)/\sigma_{s y_k}$.
	Let~$\omega_i = y_\qr$ if~$k < i$, and~$\omega_i = y_0$ if~$k > i$.
	As stated above, for~$y_i$ to be on a shortest path from~$y_k$ to~$s \in \Vtwo(\Pmax_1)$, the path must visit~$\omega_i$.
	If~$s$ is in~$\Xleft_{y_k}$, then the shortest paths enter~$\Pmax_1$ via~$x_0$, and~$\sigma_{s y_k}(y_i)/\sigma_{s y_k} = \sigma_{x_0 y_k}(y_i)/\sigma_{x_0 y_k}$. 
	Note that there may be shortest~$s y_k$-paths that pass via~$y_0$ and~$s y_k$-paths that pass via~$y_\qr$.
	Thus we have
	\begin{align}
		\label{eq:bc-mp-inside-sigma1}
		\frac{\sigma_{x_0 y_k}(y_i)}{\sigma_{x_0 y_k}} = 
		\frac{\eta(y_k, \omega_i, x_0) \sigma_{x_0 \omega_i}}
		{\eta(y_k, y_0, x_0) \sigma_{x_0 y_0} + \eta(y_k, y_\qr, x_0) \sigma_{x_0 y_\qr}}.
	\end{align}
	With~$\sigma_{x_0 y_k}(y_i)$ we count the number of shortest~$x_0 y_k$-paths
	visiting~$y_i$.
	Note that any such path must visit~$\omega_i$.
	If there is such a shortest path visiting~$\omega_i$, then all shortest~$x_0 y_k$-paths
	visit~$y_i$, and since there is only one shortest~$\omega_i y_k$-path,
	the number of shortest~$x_0 y_k$-paths visiting~$\omega_i$ is equal to the number of
	shortest~$x_0 \omega_i$-paths, which is~$\sigma_{x_0 \omega_i}$.

	If~$s \in \Xright_{y_k}$, then
	\begin{align}
		\label{eq:bc-mp-inside-sigma2}
		\frac{\sigma_{s y_k}(y_i)}{\sigma_{s y_k}} =
		\frac{\eta(y_k, \omega_i, x_\qq) \sigma_{x_\qq \omega_i}}
		{\eta(y_k, y_0, x_\qq) \sigma_{x_\qq y_0} + \eta(y_k, y_\qr, x_\qq) \sigma_{x_\qq y_\qr}}.
	\end{align}
	Shortest paths from~$y_k$ to~$\xmid_{y_k}$ may visit any~$\varphi \in \{y_0, y_\qr\}$
	and~$\psi \in \{x_0 ,x_\qq\}$, and thus
	\begin{align}
		\label{eq:bc-mp-inside-sigma3}
		\frac{\sigma_{y_k \xmid_{y_k}}(y_i)}{\sigma_{y_k \xmid_{y_k}}}
		= \frac{\sum_{\psi \in \{x_0, x_\qq\}} \eta(y_k, \omega_i, \psi) \sigma_{\psi \omega_i}}
		{\sum_{\varphi \in \{y_0, y_\qr\} }\sum_{\psi \in \{x_0, x_\qq\}}
			\eta(y_k, y_\qr, \psi) \sigma_{\psi y_\qr}}.
	\end{align}
	Observe that
	\begin{compactenum}[\;\;\; (1)]
	\item the values of \Cref{eq:bc-mp-inside-sigma1,eq:bc-mp-inside-sigma2,eq:bc-mp-inside-sigma3} can be computed in constant time, since the values~$\sigma_{s t}$ are known for~$s, t \in \Vthree(G)$, and
	\item the values~$\sigma_{s y_k}(y_i)$ and~$\sigma_{s y_k}$ are independent of~$s$ for~$s \in \Xleft_{y_k}$ and for~$s \in \Xright_{y_k}$ respectively.
	\end{compactenum}
	Recalling that~$\Wleft[x_j] = \sum_{i=1}^j \Pend[x_i]$ and~$\Wright[x_j] = \sum_{i=j}^{\qq-1} \Pend[x_i]$ for~$1 \le j < \qr$ we define
	\begin{align*}
		\kappa(y_k, \omega_i)
		&= \Pend[y_k] \cdot \Big(\Pend[\xmid_{y_k}] \cdot
		\frac{\sum_{\psi \in \{x_0, x_\qq\}} \eta(y_k, \omega_i, \psi) \sigma_{\psi \omega_i}}
		{\sum_{\varphi \in \{y_0, y_\qr\} }\sum_{\psi \in \{x_0, x_\qq\}} \eta(y_k, \varphi, \psi) \sigma_{\psi \varphi}}\\
		&+ \sum_{s \in \Xleft_{y_k}} \Pend[s] \cdot \frac{\eta(y_k, \omega_i, x_0) \sigma_{x_0 \omega_i}} {\eta(y_k, y_0, x_0)\sigma_{x_0 y_0}+\eta(y_k, y_\qr, x_0) \sigma_{x_0 y_\qr}}\\
		&+ \sum_{s \in \Xright_{y_k}} \Pend[s] \cdot \frac{\eta(y_k, \omega_i, x_\qq) \sigma_{x_\qq \omega_i}} {\eta(y_k, y_0, x_\qq) \sigma_{x_\qq y_0}+\eta(y_k, y_\qr, x_\qq) \sigma_{x_\qq y_\qr}} \Big)
		\stepcounter{equation}\tag{\theequation}\label{eq:kappa}\\
		&=\Pend[y_k] \cdot \Big(\Pend[\xmid_{y_k}] \cdot\frac{\sum_{\psi \in \{x_0, x_\qq\}} \eta(y_k, \omega_i, \psi) \sigma_{\psi \omega_i}}{\sum_{\varphi \in \{y_0, y_\qr\} }\sum_{\psi \in \{x_0, x_\qq\}} \eta(y_k, \varphi, \psi) \sigma_{\psi \varphi}}\\
		&+ \Wleft[\xleft_{y_k}] \cdot \frac{\eta(y_k, \omega_i, x_0) \sigma_{x_0 \omega_i}} {\eta(y_k, y_0, x_0)\sigma_{x_0 y_0}+\eta(y_k, y_\qr, x_0) \sigma_{x_0 y_\qr}}\\
		&+ \Wright[\xright_{y_k}] \cdot \frac{\eta(y_k, \omega_i, x_\qq) \sigma_{x_\qq \omega_i}} {\eta(y_k, y_0, x_\qq) \sigma_{x_\qq y_0}+\eta(y_k, y_\qr, x_\qq) \sigma_{x_\qq y_\qr}} \Big).
	\end{align*}
	Note that since the values of~$\Pend[\cdot]$,~$\Wleft[\cdot]$ and of~$\Wright[\cdot]$ are known,~$\kappa(y_k, \omega_i)$ can be computed in constant time.

	If~$k < i$, then
	\begin{align*}
		\lambda(y_k, y_i)
		=&~\Pend[y_k] \cdot \Big(\Pend[\xmid_{y_k}] \cdot \frac{\sigma_{y_k \xmid_{y_k}}(y_i)}{\sigma_{y_k \xmid_{y_k}}}\\
		+&~\sum_{s \in \Xleft_{y_k}} \Pend[s] \cdot \frac{\sigma_{s y_k}(y_i)}{\sigma_{s y_k}} + \sum_{s \in \Xright_{y_k}} \Pend[s] \cdot \frac{\sigma_{s y_k}(y_i)}{\sigma_{s y_k}}\Big).
	\end{align*}
	\Cref{eq:bc-mp-inside-sigma1,eq:bc-mp-inside-sigma2,eq:bc-mp-inside-sigma3} then give us
	\begin{align*}
		\lambda(y_k, y_i)
		=&~\Pend[y_k] \cdot \Big( \Pend[\xmid_{y_k}] \cdot \frac{\sum_{\psi \in \{x_0, x_\qq\}} \eta(y_k, y_\qr, \psi) \sigma_{\psi y_\qr}}{\sum_{\varphi \in \{y_0, y_\qr\} }\sum_{\psi \in \{x_0, x_\qq\}}\eta(y_k, \varphi, \psi) \sigma_{\psi \varphi}}\\
		+&~\sum_{s \in \Xleft_{y_k}} \Pend[s] \cdot \frac{\eta(y_k, y_\qr, x_0) \sigma_{x_0 y_\qr}}{\eta(y_k, y_0, x_0)\sigma_{x_\qq y_0}+\eta(y_k, y_\qr, x_0)\sigma_{x_\qq y_\qr}}\\
		+&~\sum_{s \in \Xright_{y_k}} \Pend[s] \cdot \frac{\eta(y_k, y_\qr, x_\qq) \sigma_{x_\qq y_\qr}}{\eta(y_k, y_0, x_\qq) \sigma_{x_\qq y_0}+\eta(y_k, y_\qr, x_\qq) \sigma_{x_\qq y_\qr}}\Big)
		= \kappa(y_k, y_\qr).
	\end{align*}
	If~$k > i$, then analogously~$\lambda(y_k, y_i) = \kappa(y_k, y_0)$.
	Lastly, if~$k=i$, then~$\sigma_{s y_k}(y_i) = 0$; thus~$\gamma(s, y_k, y_i) = \lambda(y_k, y_i) = 0$.
	Hence, we can rewrite Sum~\eqref{eq:bc-mp-inside-base} as
	\begin{multline*}
		\sum_{k=1}^{\qr-1} \sum_{s\in\Vtwo(\Pmax_1)} \gamma(s, y_k, y_i)
		= \sum_{\substack{k=1\\k \ne i}}^{\qr-1} \lambda(y_k, y_i)
		= \Big(\sum_{k=1}^{i-1} \lambda(y_k, y_i) + \sum_{k=i+1}^{\qr-1} \lambda(y_k, y_i) \Big)\\
			= \Big(\sum_{k=1}^{i-1} \kappa(y_k, y_\qr) + \sum_{k=i+1}^{\qr-1} \kappa(y_k, y_0) \Big) \eqqcolon \rho_i.
	\end{multline*}

	Towards showing that Sum~\eqref{eq:A:bc-two-mp-inside} can be computed in~$O(r)$ time, note that~$\rho_0 = \sum_{k=1}^{\qr-1} \kappa(y_k, y_0)$ can be computed in~$O(|V(\Pmax_2)|)$ time.
	Observe that~$\rho_{i+1} = \rho_i - \kappa(y_{i+1}, y_\qr) + \kappa(y_i, y_0)$.
	Thus, every~$\rho_i$,~$1 \le i \le \qr$, can be computed in constant time.
	Hence, computing all~$\rho_i$,~$0 \le i \le \qr$, and thus computing sum~\eqref{eq:A:bc-two-mp-inside} for all~$v \in V(\Pmax_2)$ takes~$O(|V(\Pmax_2)|)$ time.
\end{proof}

\section{Proofs of Lemmata 9 and 10}
For the proofs of \cref{lem:v-in-one-maxpath,lem:v-not-within-maxpath} we first make two auxiliary observations and introduce some additional notation.

\begin{observation}
	\label{obs:dist-pmax}
	Let~$\Pmax = x_0 \dots x_\qq$ be a \maxpath{} and let~$0 \le i < j \le \qq$.
	Then
	\begin{enumerate}[(i)]
		\item
			$d_G(x_i, x_j) = \min \{ d_{\Pmax}(x_i, x_j), i + d_G(x_0, x_\qq) + q - j) \}$, and
		\item 
			if~$d_{\Pmax}(x_i, x_j) = i + d_G(x_0, x_\qq) + q - j$, then~$j = i + \frac{d_G(x_0,x_\qq) + \qq}{2}$.
	\end{enumerate}
\end{observation}
\begin{proof}
	The correctness of~(i) is clear.
	For~$(ii)$, note that the claimed equation is equivalent to~$j-i = d_{\Pmax}(x_i, x_j) = i + d_G(x_0, x_\qq) + q - j$.
\end{proof}

\begin{observation}
	\label{obs:sigma-pmax}
	Let~$\Pmax = x_0 \dots x_\qq$ be a \maxpath{}, let~$0 \le i < j \le \qq$, and let~$v \in V(G)$.
	Then
	\begin{align}
		\label{eq:onepathsigma}
		\frac{\sigma_{x_ix_j}(v)}{\sigma_{x_ix_j}} = \begin{cases}
			0, & \text{if } d_{\text{out}} < d_{\text{in}} \land v \in [x_i, x_j] \text{ or } d_{\text{in}} < d_{\text{out}} \land v \notin [x_i, x_j];\\
			1, & \text{if } d_{\text{in}} < d_{\text{out}} \land v \in [x_i, x_j];\\
			1, & \text{if } d_{\text{out}} < d_{\text{in}} \land  v \notin [x_i, x_j] \land v \in V(\Pmax);\\
			\frac{\sigma_{x_0x_\qq}(v)}{\sigma_{x_0x_\qq}}, & \text{if } d_{\text{out}} < d_{\text{in}} \land  v \notin V(\Pmax);\\
			\frac{1}{\sigma_{x_0x_\qq} + 1}, & \text{if } d_{\text{in}} = d_{\text{out}} \land  v \in [x_i, x_j];\\
			\frac{\sigma_{x_0x_\qq}}{\sigma_{x_0x_\qq} + 1}, & \text{if } d_{\text{in}} = d_{\text{out}} \land v \notin [x_i, x_j]\land  v \in V(\Pmax);\\
			\frac{\sigma_{x_0x_\qq}(v)}{\sigma_{x_0x_\qq} + 1}, & \text{if } d_{\text{in}} = d_{\text{out}} \land  v \notin V(\Pmax),
		\end{cases}
	\end{align}
	where~$d_{\text{in}} = d_{\Pmax}(x_i, x_j)$ and~$d_{\text{out}} = i + d_G(x_0, x_\qq) + q - j$.
\end{observation}
\begin{proof}
	Most cases are self-explanatory.
	The denominator~$\sigma_{x_0x_\qq} + 1$ is correct since there are~$\sigma_{x_0x_\qq}$ shortest paths from~$x_0$ to~$x_\qq$ (and therefore~$\sigma_{x_0x_\qq}$ shortest paths from~$x_i$ to~$x_j$ that leave~$\Pmax$) and one shortest path from~$x_i$ to~$x_j$ within~$\Pmax$.
	Note that if there are shortest paths that are not contained in~$\Pmax$, then~$d_G(x_0,x_\qq) <~\qq$ and therefore~$\Pmax$ is not a shortest~$x_0x_\qq$-path.
\end{proof}

\begin{definition}
	\label{def:i-j-mid}
	\emph{
	Let~$\Pmax = x_0 \dots x_\qq$ be a \maxpath{} and let~$0 \le i \le \qq$.
Then we define}
	\[ 
		i^+_{\Mid} = i + (d_G(x_0,x_\qq) + \qq)/{2} \quad\text{and}\quad j^-_{\Mid} = j - (d_G(x_0,x_\qq) + \qq)/{2}.
	\]
\end{definition}

\subsection{Proof of Lemma 9}
\label{sec:proofs-one-maxpath}
{
\def\thelemma{\ref{lem:v-in-one-maxpath}}
\begin{lemma}[Restated] 
	Let~$\Pmax = x_0 \ldots x_\qq$ be a \maxpath{}.
	Then, in~$O(|V(\Pmax)|)$ time, one can compute the following for all~$x_k$ with~$0 \le k \le \qq$:
	\[
		\alpha_{x_k} \coloneqq 2\cdot\sum_{i\in[1,k-1]}\sum_{j \in [k+1,\qq-1]}\gamma(x_i, x_j, x_k).
	\]
\end{lemma}
\addtocounter{lemma}{-1}
}
\begin{proof}
	We construct a dynamic program, then we show that it is solvable in~$O(|V(\Pmax)|)$ time.

	Note that~$1 \le i < k$.
	Thus for~$k = 0$ we have
	\[\alpha_{x_0} = 2\sum_{i\in\emptyset}\sum_{j \in [1,\qq-1]}\gamma(x_i, x_j, x_0) = 0.\] 
	This will be the base case of the dynamic program.

	For every vertex~$x_k$ with~$1\leq k < \qq$ it holds that
	\begin{multline*}
		\alpha_{x_k} = 2\cdot\sum_{\substack{i\in[1,k-1]\\j\in [k+1,\qq-1]}}\gamma(x_i, x_j, x_k)\\
		= 2\cdot\sum_{\substack{i\in[1,k-2]\\j\in [k+1,\qq-1]}}\gamma(x_i, x_j, x_k) + 2\cdot\sum_{j \in [k+1,\qq-1]}\gamma(x_{k-1}, x_{j}, x_k).
	\end{multline*}
	Similarly, for~$x_{k-1}$ with~$1 < k \leq \qq$ it holds that
	\begin{multline*}
		\alpha_{x_{k-1}} = 2\cdot\sum_{\substack{i\in[1,k-2]\\j\in [k,\qq-1]}}\gamma(x_i, x_j, x_{k-1})\\
		= 2\cdot\sum_{\substack{i\in[1,k-2]\\j\in [k+1,\qq-1]}}\gamma(x_i, x_j, x_{k-1}) + 2\cdot\sum_{i \in [1,k-2]}\gamma(x_{i}, x_{k}, x_{k-1}).
	\end{multline*}
	Next, observe that any path from~$x_i$ to~$x_j$ with~$i \leq k-2$ and~$j \geq k+1$ visiting~$x_k$ also visits~$x_{k-1}$ and vice versa.
	Substituting this into the equations above yields
	\[\alpha_{x_k} = \alpha_{x_{k-1}} + 2\cdot\sum_{j \in [k+1,\qq-1]}\gamma(x_{k-1}, x_{j}, x_k) -  2\cdot\sum_{i \in [1,k-2]}\gamma(x_{i}, x_{k}, x_{k-1}).\]

	Now we prove that~$\sum_{j \in [k+1,\qq-1]}\gamma(x_{k-1}, x_{j}, x_k)$ and~$\sum_{i \in [1,k-2]}\gamma(x_{i}, x_{k}, x_{k-1})$ can be computed in constant time once~$\Wleft$ and~$\Wright$ are precomputed (see \Crefrange{line:wleftright}{line:compute-wright} in \Cref{alg:overview}).
	These tables can be computed in~$O(|V(\Pmax)|)$ time as well.
	For the sake of convenience we say that~$\gamma(x_i,x_j,x_k) = 0$ if~$i$ or~$j$ are not integral or are not in~$[1,\qq-1]$ and define~$\Wtable[x_i,x_j] = \sum_{\ell = i}^{j} \Pend[x_\ell] = \Wleft[x_j]-\Wleft[x_{i-1}]$.
	Then we can use~\Cref{obs:dist-pmax,obs:sigma-pmax} to show that
	\begin{align*}
		& \sum_{j \in [k+1,\qq-1]}\gamma(x_{k-1}, x_{j}, x_k) = \sum_{j \in [k+1,\qq-1]} \Pend[x_{k-1}] \cdot \Pend[x_{j}] \cdot \frac{\sigma_{x_{k-1}x_{j}}(x_k)}{\sigma_{x_{k-1}x_{j}}}\\
		=&\ \gamma(x_{k-1}, x_{(k-1)^+_{\Mid}}, x_k) + \sum_{j \in [k+1,\min\{\lceil (k-1)^+_{\Mid}\rceil - 1, \qq-1 \}]} \Pend[x_{k-1}] \cdot \Pend[x_{j}]\\
		=& \begin{cases}
			\Pend[x_{k-1}] \cdot \Wtable[x_{k+1},x_{\qq-1}], \hfill \text{if } (k-1)^+_{\Mid} \geq \qq;\\
			\Pend[x_{k-1}] \cdot \Wtable[x_{k+1},x_{\lceil (k-1)^+_{\Mid}\rceil - 1}],  \ \ \ \ \ \text{if } (k-1)^+_{\Mid} < \qq\land (k-1)^+_{\Mid}\notin \mathds{Z};\\
			\Pend[x_{k-1}] \cdot (\Pend[x_{(k-1)^+_{\Mid}}] \cdot \frac{1}{\sigma_{x_0 x_\qq}+1} + \Wtable[x_{k+1},x_{(k-1)^+_{\Mid} - 1}]), \hfill \text{otherwise.}
		\end{cases}
	\end{align*}
	Herein we use the notation introduced in \cref{def:i-j-mid}.
	By~$(k-1)^+_{\Mid}\notin \mathds{Z}$ we mean to say that~$(k-1)^+_{\Mid}$ is not integral.
	Analogously,
	\begin{align*}
		& \sum_{i \in [1,k-2]} \gamma(x_{i}, x_{k}, x_{k-1}) = \sum_{i \in [1,k-2]} \Pend[x_{i}] \cdot \Pend[x_{k}] \cdot \frac{\sigma_{x_{i}x_{k}}(x_{k-1})}{\sigma_{x_{i}x_{k}}}\\
		= &\ \gamma(x_{k-1}, x_{k^-_{\Mid}}, x_{k-1}) + \sum_{i \in [\max\{1,\lfloor (k-1)^-_{\Mid}\rfloor + 1\}, k-2 ]} \Pend[x_{i}] \cdot \Pend[x_{k}]\\
		=& \begin{cases}
			\Pend[x_{k}] \cdot \Wtable[x_{1},x_{k-2}], \hfill \text{if } k^-_{\Mid} < 1;\\
			\Pend[x_{k}] \cdot \Wtable[x_{\lfloor k^-_{\Mid}\rfloor + 1}, x_{k-2}], \  \ \ \ \ \ \ \  \  \ \ \ \ \  \ \ \ \ \ \  \  \ \ \ \ \ \ \ \ \ \text{if } k^-_{\Mid} \geq 1\land k^-_{\Mid} \notin \mathds{Z};\\
			\Pend[x_{k}] \cdot (\Pend[x_{k^-_{\Mid}}] \cdot \frac{1}{\sigma_{x_0 x_\qq}+1} + \Wtable[x_{k^-_{\Mid} + 1},x_{k-2}]), \hfill \text{otherwise.}
		\end{cases}
	\end{align*}
	This completes the proof since~$(k-1)^+_{\Mid}$, $k^-_{\Mid}$, every entry in~$\Wtable[\cdot]$, and all other variables in the equation above can be computed in constant time once~$\Wleft[\cdot]$ is computed.
	Thus, computing~$\alpha_{x_i}$ for each vertex~$x_i$ in~\Pmax{} takes constant time.
	Hence, the computations for the whole \maxpath{}~$\Pmax$ take~$O(|V(\Pmax)|)$ time.
\end{proof}

\subsection{Proof of Lemma 10}
\label{sec:proofs-one-maxpath-not-within}
{
\def\thelemma{\ref{lem:v-not-within-maxpath}}
\begin{lemma}[Restated] 
	Let~$\Pmax = x_0 x_1 \dots x_\qq$ be a \maxpath{}.
	Then, assuming that~$\Inc[s, t] \cdot \sigma_{s t}(v)$ can be computed in constant time for some~$s, t \in \Vthree(G)$, one can compute in~$O(|V(\Pmax)|)$ time the following for all~$v \in V(G) \setminus [x_i, x_j]$:
 	\[\beta_v := \sum_{i\in[1,\qq-1]}\sum_{j \in [i+1,\qq-1]}\gamma(x_i, x_j, v).\]
\end{lemma}
\addtocounter{lemma}{-1}
}
\begin{proof}
	We first show how to compute~$\beta_v$ for all~$v \notin V(\Pmax)$ and then how to compute~$\beta_v$ for all~$v \in V(\Pmax) \setminus [x_i, x_j]$ in the given time.

	As stated above, the distance from~$x_i$ to~$x_{i^+_{\Mid}}$ (if existing) is the boundary such that all shortest paths to vertices~$x_j$ with~$j > i^+_{\Mid}$ leave~$\Pmax$ and the unique shortest path to any~$x_j$ with~$i < j < i^+_{\Mid}$ is~$x_ix_{i+1}\ldots x_j$.
	Thus we can use \cref{obs:dist-pmax,obs:sigma-pmax} to show that for each~$v\notin \Pmax$ and each fixed~$i \in [1,\qq-1]$ it holds that 
	\begin{align*}
		&\sum_{j \in [i+1,\qq-1]}\gamma(x_i, x_j, v) = \sum_{j \in [i+1,\qq-1]} \Pend[x_i] \cdot \Pend[x_j] \cdot \frac{\sigma_{x_ix_j}(v)}{\sigma_{x_ix_j}}\\
		=&\begin{cases}
			0, \hfill\text{if } i^+_{\Mid} > \qq-1;\\
			\sum_{j \in [x_{\lceil i^+_{\Mid}\rceil},\qq-1]} \Pend[x_i] \cdot \Pend[x_j] \cdot \frac{\sigma_{x_0x_\qq}(v)}{\sigma_{x_0x_\qq}},\qquad \text{if } i^+_{\Mid} \leq \qq-1 \land i^+_{\Mid} \notin \mathds{Z};\\
			\Pend[x_i] \cdot \Big ( \Pend[x_{i^+_{\Mid}}] \cdot \frac{\sigma_{x_0x_\qq}(v)}{\sigma_{x_0x_\qq}+1} + \sum_{j \in [x_{i^+_{\Mid} +1},\qq-1]} \cdot \Pend[x_j] \cdot \frac{\sigma_{x_0x_\qq}(v)}{\sigma_{x_0x_\qq}} \Big ),\\
			\hfill\text{otherwise;}\\
		\end{cases}\\
		=&\begin{cases}
			0, \hfill\text{if } i^+_{\Mid} > \qq-1;\\
			\Pend[x_i] \cdot \Wright[x_{\lceil i^+_{\Mid}\rceil}] \cdot \frac{\sigma_{x_0x_\qq}(v)}{\sigma_{x_0x_\qq}}, \hfill \text{if } i^+_{\Mid} \leq \qq-1 \land i^+_{\Mid}\notin \mathds{Z};\\
			\Pend[x_i] \cdot \Big ( \Pend[x_{i^+_{\Mid}}] \cdot \frac{\sigma_{x_0x_\qq}(v)}{\sigma_{x_0x_\qq}+1} + \Wright[x_{i^+_{\Mid} + 1}] \cdot \frac{\sigma_{x_0x_\qq}(v)}{\sigma_{x_0x_\qq}} \Big ),\qquad\text{otherwise.}\\
		\end{cases}
	\end{align*}
	Herein we use the notation introduced in \cref{def:i-j-mid}.
	By~$i^+_{\Mid} \notin \mathds{Z}$ we mean to say that~$i^+_{\Mid}$ is not integral.
	All variables except for~$\sigma_{x_0x_\qq}(v)$ can be computed in constant time once \Wright{} and~$\sigma_{x_0x_\qq}$ are computed.
	Thus we can compute overall in~$O(|V(\Pmax)|)$ time the value
	\begin{multline}
		\label{eq:onepathX}
		X = \frac{2\cdot\sum_{i\in[1,\qq-1]}\sum_{j \in [i+1,\qq-1]}\gamma(x_i, x_j, v)}{\sigma_{x_0x_\qq}(v)}\\
		= 2\cdot\sum_{i\in[1,\qq-1]}\sum_{j \in [i+1,\qq-1]}\Pend[x_i]\Pend[x_j]\sigma_{x_i,x_j}.
	\end{multline}
	Due to the postprocessing (see \Cref{line:modified-bfs-loop,line:modified-bfs} in \Cref{alg:overview}) it is sufficient to add~$X$ to~$\Inc[x_0,x_\qq]$.
	This ensures that $X \cdot \sigma_{x_0x_\qq}(v)$ is added to the \bc{} of each vertex~$v \notin V(\Pmax)$.
	Note that if~$X > 0$, then~$d_G(x_0,x_\qq) < \qq$ and thus the \bc{} of any vertex~$v \in V(\Pmax)$ is not affected by~$\Inc[x_0,x_\qq]$.

	Next, we will compute~$\beta_v$ for all vertices~$v \in V(\Pmax)$ (recall that~$v \notin [x_i,x_j]$).
	We start with the simple observation that all paths that leave~$\Pmax$ at some point have to contain~$x_0$.
	Thus~$\beta_{x_0}$ is equal to~$X$ by \Cref{eq:onepathX}.
	We will use this as the base case for a dynamic program that iterates through~$\Pmax$ and computes~$\beta_{x_k}$ for each vertex~$x_k$,~$k\in[0,\qq]$, in constant time.

	Similarly to the proof of \Cref{lem:v-in-one-maxpath} we observe that 
	\begin{align*}
		\beta_{x_k} =&\ 2\Big (\sum_{i\in[k+1,\qq-1]} \sum_{j\in [i+1,\qq-1]}\gamma(x_i, x_j, x_k) &+& \sum_{i\in[1,k-1]} \sum_{j\in [i+1,k-1]}\gamma(x_i, x_j, x_k) \Big )&\\
		=&\ 2\Big (\sum_{i\in[k+2,\qq-1]} \sum_{j\in [i+1,\qq-1]}\gamma(x_i, x_j, x_k) &+& \sum_{i\in[1,k-1]} \sum_{j\in [i+1,k-1]}\gamma(x_i, x_j, x_k)\\
		& &+& \sum_{j\in [k+2,\qq-1]}\gamma(x_{k+1}, x_j, x_k) \Big )
	\end{align*} 
	and
	\begin{align*}
		\beta_{x_{k+1}} &= 2\Big (\sum_{i\in[k+2,\qq-1]} \sum_{j\in [i+1,\qq-1]}\gamma(x_i, x_j, x_{k+1}) + \sum_{i\in[1,k]} \sum_{j\in [i+1,k]}\gamma(x_i, x_j, x_{k+1}) \Big )&\\
		&= 2\Big (\sum_{i\in[k+2,\qq-1]} \sum_{j\in [i+1,\qq-1]}\gamma(x_i, x_j, x_{k+1})\\
		&\quad + \sum_{i\in[1,k-1]} \sum_{j\in [i+1,k-1]}\gamma(x_i, x_j, x_{k+1}) + \sum_{i\in [1,k-1]}\gamma(x_{i}, x_k, x_{k+1}) \Big ).
	\end{align*}
	Furthermore, observe that every~$st$-path with~$s,t \neq x_k,x_{k+1}$ that contains~$x_k$ also contains~$x_{k+1}$, and vice versa.
	Thus we can conclude that
	\[
		\beta_{x_{k+1}} = \beta_{x_k} + 2\Big ( \underbrace{\sum_{i\in [1,k-1]}\gamma(x_{i}, x_k, x_{k+1})}_{(*)} - \underbrace{\sum_{j\in [k+2,\qq-1]}\gamma(x_{k+1}, x_j, x_k)}_{(**)} \Big ).
	\]

	It remains to show that the sums~$(*)$ and~$(**)$ can be computed in constant time once~$\Wleft$ and~$\Wright$ are computed.
	Using \Cref{obs:dist-pmax,obs:sigma-pmax} we get that
	\[
		\sum_{i\in [1,k-1]}\gamma(x_{i}, x_k, x_{k+1})=
		\begin{cases}
			0, \hfill \text{if } k^-_{\Mid} < 1;\\
			\Pend[x_k] \cdot \Wleft[x_{\lfloor k^-_{\Mid}\rfloor}], \qquad \text{if } k^-_{\Mid} \geq 1 \land k^-_{\Mid}\notin \mathds{Z};\\
			\Pend[x_k] \cdot \Big ( \Wleft[x_{k^-_{\Mid} - 1}] + \Pend[k^-_{\Mid}]\cdot \frac{\sigma_{x_0x_\qq}}{\sigma_{x_0x_\qq} + 1}\Big ),\\
			\hfill \text{otherwise};\\
		\end{cases}
	\]
	and
	\begin{align*}
		\sum_{j\in [k+2,\qq-1]}&\gamma(x_{k+1}, x_j, x_k)\\
		=&
		\begin{cases}
			0, \qquad\qquad\qquad\qquad\qquad\qquad\qquad\qquad\qquad\qquad\qquad\;\;\; \text{if } k^+_{\Mid} < 1;\\
			\Pend[x_{k+1}] \cdot \Wright[x_{\lceil (k+1)^+_{\Mid}\rceil}],\\
			\hfill \text{if } (k+1)^+_{\Mid} \leq \qq-1 \land (k+1)^+_{\Mid}\notin \mathds{Z};\\
			\Pend[x_{k+1}] \cdot \Big ( \Wright[x_{(k+1)^+_{\Mid} + 1}] + \Pend[(k+1)^+_{\Mid}]\cdot \frac{\sigma_{x_0x_\qq}}{\sigma_{x_0x_\qq} + 1}\Big ),\\
			\hfill \text{otherwise.}\\
		\end{cases}
	\end{align*}
	Since all variables in these two equalities can be evaluated in constant time, this concludes the proof.
\end{proof}

\end{document}